\newcommand{\listdefinitionsname}{\Large{List of Definitions}}
\newcommand{\mydefinitions}[1]{%
\addcontentsline{def}{mydefinitions}{\protect\numberline{\thedefinition}#1}\par}
\newcommand{\ws}{\textcolor{white}{e}}
\newtheorem{theorem}{Theorem}
\newtheorem{claim}[theorem]{Claim}
\newtheorem{corollary}[theorem]{Corollary}
\newtheorem{definition}{Definition}
\newtheorem{lemma}[theorem]{Lemma}
\newtheorem{proposition}[theorem]{Proposition}
\newtheorem{remark}[theorem]{Remark}
\newenvironment{proof}[1][Proof]{\textbf{#1.} }{\ \rule{0.5em}{0.5em}}
\newcommand{\enter} {\vskip 0.3cm}
\newcommand{\A}{\mathbb A}
\newcommand{\N}{\mathbb N}
\newcommand{\Z}{\mathbb Z}
\newcommand{\C}{\mathbb C}
\newcommand{\klk}{,\ldots,}
\newcommand{\ol}{\overline}
\newcommand{\MM}{\ol{\mathcal{M}}}
\newcommand{\M}{{\mathcal{M}}}
\newcommand{\wt}{\widetilde}
\DeclareMathAlphabet{\mathpzc}{OT1}{pzc}{m}{it} 
\begin{document}


\begin{center}
\textbf{\LARGE{Software Engineering and Complexity in Effective Algebraic Geometry
\footnote{
Research partially supported by the following Argentinian, Belgian and Spanish grants: CONICET PIP 2461/01, UBACYT 20020100100945, PICT--2010--0525, FWO G.0344.05, MTM2010-16051. 
}}}
\end{center}

\begin{center}
\Large{Joos Heintz\footnote{Departamento de Computaci\'on, Universidad de Buenos Aires and CONICET, Ciudad Universitaria, Pab. I, 1428 Buenos Aires, Argentina, and Departamento de Matem\'aticas, Estad\'istica y Computaci\'on, Facultad de Ciencias, Universidad de Cantabria, Avda. de los Castros s/n, 39005 Santander, Spain. joos@dc.uba.ar \& joos.heintz@unican.es}, Bart Kuijpers\footnote{Database and Theoretical Computer Science Research Group, Hasselt University, Agoralaan, Gebouw D, 3590 Diepenbeek, Belgium. bart.kuijpers@uhasselt.be}, Andr\'es Rojas Paredes\footnote{Departamento de Computaci\'on, Universidad de Buenos Aires, Ciudad Universitaria, Pab. I, 1428 Buenos Aires, Argentina. arojas@dc.uba.ar}\\}
\end{center}

\begin{center}
\textit{Dedicated to the memory of Jacques Morgenstern\\
whose ideas inspired this work}
\end{center}

\begin{center}
\today 
\end{center}

\enter
\begin{abstract}
One may represent polynomials not only by their coefficients but also by arithmetic circuits which evaluate them. This idea allowed in the last fifteen years considerable complexity progress in effective polynomial equation solving. We present a circuit based computation model which captures all known symbolic elimination algorithms in effective Algebraic Geometry and exhibit a class of simple elimination problems which require exponential size circuits to be solved in this model. This implies that the known, circuit based elimination algorithms are already optimal.
\end{abstract}

\begin{quote}
\small{\textit{
Keywords: Robust parameterized arithmetic circuit, isoparametric routine, branching parsimonious algorithm, flat family of zero dimensional elimination problems.\\
MSC: 68Q05, 68Q17, 68Q60, 68N30, 14E99, 14Q99}}
\end{quote}


\section[Introduction]{\large{Introduction}}

We introduce and motivate a new computation model which is well--adapted to scientific computing in effective Algebraic Geometry and especially to elimination theory. This model is based on the symbolic manipulation of arithmetic circuits which evaluate rational functions.

Most algorithms in effective Algebraic Geometry may be formulated as routines which operate on polynomials and rational functions and this suggests the representation of these mathematical entities by arithmetic circuits. Thus, we shall consider arithmetic circuits as objects (in the sense of object oriented programming) which become mapped into abstract data types consisting of polynomials and rational functions. Inputs and outputs of the routines of our computation model will therefore be arithmetic circuits. These circuits will appear as \emph{parameterized} in the sense that they depend on two distinct ingredients, namely on piecewise rational functions called ``basic parameters'' and indeterminates, called ``input variables''. On the other hand, so called ``elementary routines'' will constitute the basic building block of our computation model. They will be branching--free and therefore our parameterized arithmetic circuits will be branching--free too.

However, in effective Algebraic Geometry, divisions are sometimes unavoidable and divisions may lead to branchings. Nevertheless, in typical situations, they may be replaced by limit processes. In order to capture this situation, we introduce in Section \ref{sec:Model-circuits} the notion of a \emph{robust} parameterized arithmetic circuit.
 
An important issue will be the concept of \emph{well behavedness} of routines, under certain modifications of the input circuits. This concept will appear in Section \ref{sec:Model-discussion} in different disguises, called well behavedness under \emph{restrictions} and \emph{reductions}, (output) \emph{isoparametricity} and \emph{coalescence}.

All these technical notions have in common that they allow to formulate algorithmical restrictions on the routines of our computation model which are motivated by specific quality attributes of programs in Software Engineering.

In this sense we establish first in Section \ref{sec:Model-discussion-simplified} a branching--free variant of our computation model which operates on robust parameterized arithmetic circuits. The algorithms captured by this variant are the elementary routines mentioned before.

In order to capture the whole spectrum of really existing elimination algorithms in Algebraic Geometry, we extend our computation model in Section \ref{sec:Model-discussion-programs and algorithms} admitting some limited branchings. The resulting algorithms are called \emph{branching parsimonious}. Moreover we introduce the concept of a \emph{procedure} as a branching parsimonious algorithm with a particular architecture. Procedures are well suited to discuss computational issues in effective elimination theory. In Section \ref{sec:Model-Applications}, we apply our computation model to this task.

Before we are going to enter into the details of these applications we look closer to our computation model. The basic construction method of elementary routines is recursion. Mimicking the directed acyclic graph structure of a given robust parameterized arithmetic circuit we compose the graphs of previously fixed computations in order to obtain a new parameterized arithmetic circuit. To guarantee that the resulting circuit is again robust, we require that these compositions behave well under restrictions. Moreover, we require that at corresponding nodes the intermediate results of the two circuits become directly linked by a continuous, piecewise rational map and call this requirement isoparametricity. This means that our computation model includes also \emph{formal specifications}. We develop this aspect in Sections \ref{sec: A specification language for circuits} and \ref{Isoparametricity and program specification}.

In Section \ref{sec:Model-Applications} we use our computation model to show that already very elementary elimination problems require exponential time for their solution (see Theorem \ref{def: main theorem model}, Proposition \ref{proposition A} and Theorem \ref{proposition C D} below).

In particular, we exhibit in Section \ref{Arithmetization techniques for Boolean circuits} a family of parameterized boolean circuits whose (standard) arithmetizations represent an elimination problem which requires exponential time to be solved in our model (see Theorem \ref{theorem 3} below).

As a major outcome of this paper we exhibit in Section \ref{independent of the model} an infinite family of parameter dependent elimination polynomials which require essentially division--free, \emph{robust} parameterized arithmetic circuits of exponential size for their evaluation, whereas the circuit size of the corresponding input problems grows only polynomially. We observe that essentially division--free, robust parameterized arithmetic circuits for elimination polynomials capture the intuitive meaning of an algorithmic solution with few equations and branchings of the underlying elimination problem. 

The proof of this result, which is absolutely new in his kind, is astonishly elementary and simple.

In Section \ref{divisions and blow ups} we arrive at the conclusion that our method to show lower complexity bounds consists of counting how many steps are necessary to decompose a given rational map into a sequence of ``simple'' blow ups and a polynomial map.

Finally in Section \ref{Comments on complexity models for geometric elimination} we establish a link between our computation model and our lower bound results with other complexity views in geometric elimination theory. In this context we discuss the BSS--model of \cite{BSS89} and the view of interactive protocols.

Our computation model and complexity results are based on the concept of a \emph{geometrically robust constructible map}. This concept was introduced in \cite{GHMS09} and we develop it further in Section \ref{sec: geometry}, which is devoted to the algebraic geometric underpinning of the present paper.

The relevance of the lower complexity bounds of this paper for elimination problems depends on the ``naturalness'' of the computation model. Therefore we emphasize throughout this article the arguments which justify our computation model. Of course, these arguments cannot be entirely of mathematical nature. In this paper they are borrowed from Software Engineering which constitutes a discipline which analyzes and qualifies practical programming issues. In these terms we show that a circuit based algorithm which solves most elementary parametric elimination problems and which is programmed under the application of the most common rules of Software Engineering, can never be efficient.

This paper is based on the idea to represent, e.g., in the case of elimination algorithms, polynomials by arithmetic circuits, which are not considered as algorithms but as data structures. This has sometimes certain advantages. For example, the generic $n\times m$ determinant has as polynomial in the matrix entries $n!$ terms but it can be evaluated by division and branching--free arithmetic circuit of size $O(n^5)$ (\cite{Mul87}). More generally, an arbitrary elimination polynomial of degree $\delta$ and $m$ variables has always a circuit representation which is essentially of order $\delta^{O(1)}$, whereas the representation by its coefficients may become of order $\delta^{\Omega(n)}$ (see \cite{Giusti1} and \cite{Giusti2} for details).

The idea of the circuit representation of polynomials was introduced in Theoretical Computer Science at the beginning of the eighties by the first author in collaboration with Malte Sieveking (Frankfurt/Main). However, the first publications on this subject treated only the case of the elimination of a single variable (see e.g. \cite{HeSie81}, \cite{Ka88} and \cite{FIK86}).

In the case of the simultaneous elimination of several variables, substantial progress was made in the nineties by the first author in collaboration with Marc Giusti (Paris) and Jacques Morgenstern (Nice). A series of paper give account of this development (see e.g. \cite{GH93}, \cite{HeintzMorgenstern93}, \cite{Giusti1} and \cite{Giusti2}). In particular the work with Jacques Morgenstern lead to new views which influenced this paper.


\section[Concepts and tools from Algebraic Geometry]{\large{Concepts and tools from Algebraic Geometry}}\label{sec: geometry}

In this section, we use freely standard notions and notations from Commutative Algebra and Algebraic Geometry. These can be found for example in \cite{Lang93}, \cite{ZaSa60}, \cite{Kunz85} and \cite{Shafarevich94}. In Sections~\ref{sec: geometry-basic-maps} and~\ref{sec: geometry-robust}, we introduce the notions and definitions which constitute our fundamental tool for the modelling of elimination problems and algorithms. Most of these notions and their definitions are taken from \cite{GHMS09}.

\subsection{Basic notions and notations} \label{sec: geometry-basic}

For any $n\in\N$, we denote by $\A^n:=\A^n(\C)$ the $n$--dimensional affine space $\C^n$ equipped with its respective Zariski and Euclidean topologies over $\C$. In algebraic geometry, the Euclidean topology of $\A^n$ is also called the {\em strong topology}. We shall use this terminology only exceptionally. 

Let $X_1,\ldots,X_n$ be indeterminates over $\C$ and let $X:=(X_1,\ldots,X_n)$. We denote by 
$\C[X]$ the ring of polynomials in the variables $X$ with complex coefficients.

Let $V$ be a closed affine subvariety of $\A^n$. As usual, we write $\dim V$ for the dimension of the variety $V$. Let $C_1,\dots,C_s$ be the irreducible components of $V$. For $1\leq j\leq s$ we define the degree of $C_j$ as the number of points which arise when we intersect $C_j$ with $\dim C_j$ many generic affine hyperplanes of $\A^n$. Observe that this number is a well--determined positive integer which we denote by $\deg C_j$. The \emph{(geometric) degree} $\deg V$ of $V$ is defined by $\deg V:= \sum_{1\leq j\leq s} \deg C_j$. This notion of degree satisfies the so called B\'ezout Inequality. Namely, for another closed affine subvariety $W$ of $\A^n$ we have $\deg V\cap W \leq \deg V \cdot \deg W.$ 

For details we refer to \cite{HeintzTesis83}, where the notion of geometric degree was introduced and the B\'ezout Inequality was proved for the first time (other references are \cite{Fu84} and \cite{Vo84}).    

For $f_1,\dots,f_s,g \in \C[X]$ we  shall use the notation $\{f_1=0,\dots,f_s=0\}$ 
in order to denote the closed affine subvariety $V$ of $\A^n$ defined by $f_1,\dots,f_s$ 
and the notation $\{f_1=0,\dots,f_s=0,g\not=0\}$ in order to denote the Zariski open subset $V_g$ of $V$ defined by the intersection of $V$ with the complement of $\{g=0\}$. %
Observe that $V_g$ is a locally closed affine subvariety of $\A^n$ whose coordinate ring
is the localization ${\C[V]}_g$ of $\C[V]$.

We denote by $I(V):= \{f \in \C[X]: f(x)=0$ for any $x \in V \}$ the ideal of definition of $V$ in $\C[X]$ and by $\C[V]:= \{\varphi:V \to \C \ws; \textrm{ there exists\ } f \in \C[X] \textrm{\ with\ } {\varphi}(x)=f(x) \textrm{\ for any\ } x \in V\}$ its coordinate ring. Observe that $\C[V]$ is isomorphic to the quotient $\C$--algebra $\C[V]:=\C[X]/I(V)$. If $V$ is irreducible, then $\C[V]$ is zero--divisor free and we denote by $\C(V)$ the field formed by the rational functions of $V$ with maximal domain ($\C(V)$ is called the rational function field of $V$). Observe that $\C(V)$ is isomorphic to the fraction field of the integral domain $\C[V]$.

In the general situation where $V$ is an arbitrary closed affine
subvariety of $\A^n$, the notion of a rational function of $V$ has
also a precise meaning. The only point to underline is that the domain, say $U$, of a rational function of $V$ has to be a maximal Zariski open and dense subset of $V$ to which the given rational function can be extended. In particular, $U$ has a nonempty intersection with any of the irreducible components of $V$.

As in the case where $V$ is irreducible, we denote by $\C(V)$ the $\C$--algebra formed by the rational functions of $V$. In algebraic terms, $\C(V)$ is the total quotient ring of $\C[V]$ and is isomorphic to the direct product of the rational function fields of the irreducible components of $V$.

Let be given a partial map $\phi: V \dashrightarrow W$, where $V$ and $W$ are closed subvarieties of some affine spaces $\A^n$ and $\A^m$, and let $\phi_1\klk\phi_m$ be the components of $\phi$. With these notations we have the following definitions:

\begin{definition}[Polynomial map] \label{def: pol. map}
The map $\phi$ is called a morphism of affine varieties or just polynomial map if the complex valued functions $\phi_1\klk\phi_m$ belong to $\C[V]$. Thus, in particular, $\phi$ is a total map.
\end{definition}
\mydefinitions{\label{def: pol. map} Polynomial map}
 
\begin{definition}[Rational map] \label{def: rational map}
We call $\phi$ a rational map of $V$ to $W$, if the domain $U$ of $\phi$ is a Zariski open and dense subset of $V$ and $\phi_1\klk\phi_m$ are the restrictions of suitable rational functions of $V$ to $U$. 
\end{definition}
\mydefinitions{\label{def: rational map} Rational map}

Observe that our definition of a rational map differs from the usual one in Algebraic Geometry, since we do not require that the domain $U$ of $\phi$ is maximal. Hence, in the case $m:=1$, our concepts of rational function and rational map do not coincide (see also \cite{GHMS09}).


\subsubsection{Constructible sets and constructible maps}
\label{sec: geometry-basic-constuctible}
Let $\mathcal{M}$ be a subset of some affine space $\A^n$ and, for a given nonnegative integer $m$, let $\phi:\mathcal{M}\dashrightarrow \A^m$ be a partial map. 

\begin{definition}[Constructible set] \label{def: constr. set}
We call the set $\mathcal{M}$ {\em constructible} if $\mathcal{M}$ is definable by a Boolean combination of polynomial equations.
\end{definition} 
\mydefinitions{\label{def: constr. set} Constructible set}

A basic fact we shall use in the sequel is that if
$\mathcal{M}$ is constructible, then its Zariski closure is equal to
its Euclidean closure (see, e.g., \cite{Mumford88}, Chapter I, \S 10, Corollary
1). In the same vein we have the following definition.

\begin{definition}[Constructible map] \label{def: constr. map}
We call the partial map $\phi$ {\em constructible} if the graph of $\phi$ is constructible as a subset of the affine space $\A^n\times \A^m$.
\end{definition} 
\mydefinitions{\label{def: constr. map} Constructible map}

We say that $\phi$ is {\em polynomial} if $\phi$ is the restriction of a morphism of affine varieties $\A^n\to \A^m$ to a constructible subset $\M$ of $\A^n$ and hence a total
map from $\M$ to $\A^m$. Furthermore, we call $\phi$ a {\em
rational} map of $\mathcal{M}$ if the domain $U$ of $\phi$ is
contained in $\mathcal{M}$ and $\phi$ is the restriction to
$\mathcal{M}$ of a rational map of the Zariski closure
$\ol{\mathcal{M}}$ of $\mathcal{M}$. In this case $U$ is a Zariski
open and dense subset of $\mathcal{M}$.

Since the elementary, i.e., first--order theory of algebraically closed fields with constants in $\C$ admits quantifier elimination, constructibility means just elementary definability. In particular, $\phi$ is constructible implies that the domain and the image of $\phi$ are constructible subsets of $\A^n$ and $\A^m$, respectively.

\begin{remark}
\label{remark constructible}
A partial map $\phi:\M\dashrightarrow\A^m$ is constructible if and only if it is piecewise rational. If $\phi$ is constructible there exists a Zariski open and dense subset $U$ of $\M$ such that the restriction $\phi|_U$ of $\phi$ to $U$ is a rational map of $\M$ (and of $\MM$).    
\end{remark}

For details we refer to \cite{GHMS09}, Lemma 1.

\subsection{Weakly continuous, strongly continuous, topologically robust and hereditary maps}
\label{sec: geometry-basic-maps}

We are now going to present the notions of a weakly continuous, a strongly continuous, a topologically robust, a geometrically robust and a hereditary map of
a constructible set $\mathcal{M}$. These five notions will
constitute our fundamental tool for the modelling of elimination problems and algorithms.

\begin{definition}\label{def: maps}
Let $\mathcal{M}$ be a constructible subset of $\A^n$ and let
$\phi:\mathcal{M}\to\A^m$ be a (total) constructible map. We
consider the following four conditions:

\begin{enumerate}
\item[(i)]  there exists  a Zariski open and dense
subset $U$ of $\mathcal{M}$ such that the restriction $\phi|_{U}$
of $\phi$ to $U$ is a rational map of $\mathcal{M}$ and the graph
of $\phi$ is contained in the Zariski closure of the graph of
$\phi|_U$ in $\mathcal{M}\times \A^m$;

\item[(ii)] $\phi$ is continuous with respect to the Euclidean, i.e., strong, topologies of $\mathcal{M}$ and $\A^m$;

\item [(iii)] for any sequence $(x_k)_{k\in \N}$ of points of $\M$ which
converges in the Euclidean topology to a point of $\M$, the sequence
$(\phi(x_k))_{k\in \N}$ is bounded;

\item [(iv)] for any constructible subset $\mathcal{N}$ of $\mathcal{M}$ the restriction $\phi |_{\mathcal{N}}:\mathcal{N}\to\A^m$ is an extension of a rational map of $\mathcal{N}$ and the graph of $\phi |_{\mathcal{N}}$ is contained in the Zariski closure of the graph of this rational map in $\mathcal{N}\times \A^m$.
\end{enumerate}

We call the map $\phi$
\begin{itemize}
 \item {\bf weakly continuous} if $\phi$ satisfies condition $(i)$,
 \item {\bf strongly continuous} if $\phi$ satisfies condition $(ii)$,
 \item {\bf  topologically robust} if $\phi$ satisfies conditions
$(i)$ and $(iii)$,
 \item {\bf hereditary} if $\phi$ satisfies condition $(iv)$.
\end{itemize}
\end{definition}
\mydefinitions{\label{def: maps} Weakly and strongly continuous, topologically robust and hereditary maps}

In all these cases we shall refer to $\mathcal{M}$ as the domain of definition of $\phi$ or we shall say that $\phi$ is defined on $\mathcal{M}$. 

\begin{lemma}
\label{remark strongly continuous}
\textit{(\cite{GHMS09}, Lemma 4)} A strongly continuous constructible map is always weakly continuous, topologically robust and hereditary.
\end{lemma}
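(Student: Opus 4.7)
The plan is to deduce all three properties from strong continuity by combining it with Remark 8 (piecewise rationality of constructible maps) and with the fundamental fact cited after Definition 6, namely that the Zariski closure of a constructible set coincides with its Euclidean closure.

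First I would establish weak continuity. Since $\phi$ is constructible, Remark 8 supplies a Zariski open and dense subset $U$ of $\mathcal{M}$ on which $\phi|_U$ is a rational map of $\mathcal{M}$. Let $\Gamma_U \subset \mathcal{M}\times \A^m$ denote the graph of $\phi|_U$; it is a constructible set, so its Zariski and Euclidean closures in $\mathcal{M}\times \A^m$ coincide. To check condition (i), pick any $x\in\mathcal{M}$. Because $U$ is Euclidean dense in $\mathcal{M}$ (Zariski density and constructibility of $U$ imply this by the cited fact), there is a sequence $(x_k)\subset U$ with $x_k\to x$ in the strong topology. Strong continuity of $\phi$ gives $\phi(x_k)\to \phi(x)$, so $(x,\phi(x))$ lies in the Euclidean closure of $\Gamma_U$, hence in its Zariski closure. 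This proves (i).

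Topological robustness is then immediate: (i) has just been verified, and (iii) holds because if $x_k\to x\in\mathcal{M}$ in the Euclidean topology then strong continuity yields $\phi(x_k)\to \phi(x)\in \A^m$, so in particular the sequence $(\phi(x_k))$ is bounded.

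For the hereditary property, let $\mathcal{N}$ be an arbitrary constructible subset of $\mathcal{M}$. The restriction $\phi|_{\mathcal{N}}$ is still constructible (as the graph is obtained by intersecting the graph of $\phi$ with the constructible set $\mathcal{N}\times\A^m$) and still strongly continuous (as the restriction of a continuous map). Applying the argument of the first paragraph with $\mathcal{N}$ in place of $\mathcal{M}$, one obtains a Zariski open and dense subset $U'$ of $\mathcal{N}$ on which $\phi|_{U'}$ is a rational map of $\mathcal{N}$ (in the sense of Section 2.1.1, i.e.\ a restriction of a rational map of $\overline{\mathcal{N}}$), together with the inclusion of the graph of $\phi|_{\mathcal{N}}$ in the Zariski closure of the graph of $\phi|_{U'}$ in $\mathcal{N}\times\A^m$. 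This is precisely condition (iv).

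The only step requiring any care is the first one, where I must be sure that the rational $U$ delivered by Remark 8 can be used simultaneously to witness conditions (i) and (iv); this is handled by observing that the same recipe applies verbatim to every constructible subset, so no compatibility issues arise. The rest is a direct unwinding of definitions against the Zariski/Euclidean closure identity.
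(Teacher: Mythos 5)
Your argument is correct. Note that the paper itself does not prove this lemma but defers to [GHMS09], Lemma 4, so there is no in-paper proof to compare against; your reconstruction is self-contained and sound. The two points to verify carefully both hold: (a) a Zariski open dense $U\subseteq\mathcal{M}$ is automatically Euclidean dense in $\mathcal{M}$, because $U$ is constructible and hence its Zariski and Euclidean closures in $\A^n$ coincide, so $\mathcal{M}\subseteq\overline{U}^{\mathrm{Zar}}=\overline{U}^{\mathrm{Eucl}}$; and (b) the graph $\Gamma_U$ of $\phi|_U$ is constructible (it is the intersection of the graph of $\phi$ with $U\times\A^m$), so the Euclidean limit $(x,\phi(x))$ of $(x_k,\phi(x_k))\in\Gamma_U$ lies in the Zariski closure of $\Gamma_U$, and intersecting with $\mathcal{M}\times\A^m$ gives the required closure in that subspace. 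The hereditary part is handled exactly right: since a restriction of a strongly continuous constructible map to a constructible $\mathcal{N}$ remains strongly continuous and constructible, the weak-continuity argument applies verbatim to $\mathcal{N}$, and this is precisely condition (iv). Topological robustness is indeed immediate, since (i) has been shown and a strongly convergent image sequence is in particular bounded, giving (iii).
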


In Section~\ref{sec: geometry-robust}, we shall establish an algebraic condition, namely geometric robustness, which implies hereditarity.

\subsection{The concept of robustness for constructible maps} \label{sec: geometry-robust}

In this Section we introduce the algebraic--geometric tools we shall use in Section \ref{sec: Singularities and models of computation} and \ref{sec:Model-Applications} for the mathematical modelling of algorithms which solve parameterized computational problems. The main issue of this section will be the notion of a \emph{geometrically robust constructible map} which captures simultaneously the concepts of topological robustness and hereditarity introduced in Section \ref{sec: geometry-basic-maps}

We first characterize in algebraic terms the concept of topological robustness (Theorem \ref{th: equiv robus} below). In Section \ref{sec: Singularities and models of computation} we shall interpret topological robustness as the informal concept of \emph{coalescence} (we call it informal because distinct authors introduce it differently, following the context). For example in Interpolation Theory coalescence refers to certain types of ``convergence'' of problems and algorithms (see \cite{BlCa97}, \cite{DeRo92}, \cite{Olver06} and \cite{GHMS09} for details). In this paper coalescence will be the algorithmic counterpart of topological robustness.

Finally, we introduce the notion of a geometrically robust constructible map and show that such maps are always hereditary. In particular they are topologically robust and give rise to coalescent algorithms.

\subsubsection[An algebraic characterization of the notion of topological robustness]{An algebraic characterization of the notion of topological\\  robustness}
\label{subsec: robustness}

In this subsection, we present an algebraic--geometric result of \cite{GHMS09} which will be relevant in Sections \ref{subsec: geometrical robustness}, \ref{sec: Singularities and models of computation} and \ref{sec:Model-Applications}.

For the moment let us fix a constructible subset $\mathcal M$ of the affine space $\A ^n$ and a (total) constructible map $\phi:\mathcal{M}\rightarrow \A^m$ with components $\phi_1,\ldots,\phi_m$. 

We consider now the Zariski closure $\ol{\mathcal M}$ of the
constructible subset $\mathcal M$ of $\A^n$. Observe that
$\ol{\mathcal M}$ is a closed affine subvariety of $\A ^n$ and
that we may interpret $\C(\ol{\mathcal M})$ as a $\C[\ol{\mathcal{M}}]$--module (or $\C[\ol{\mathcal{M}}]$--algebra). 

Fix now an arbitrary point $x$ of $\ol{\mathcal M}$. By $\mathfrak{M}_x$ we denote the maximal ideal
of coordinate functions of $\C[\ol{\mathcal M}]$ which vanish at
the point $x$. By $\C[\ol{\mathcal M}]_{\mathfrak{M}_x}$ we denote the local
$\C$--algebra of the variety $\MM$ at the point $x$, i.e., the
localization of $\C[\MM]$ at the maximal ideal $\mathfrak{M}_x$. By $\C(\MM )_{\mathfrak{M}_x}$ we denote the localization of the $\C[\MM]$--module $\C(\MM )$ at $\mathfrak{M}_x$.

Following Remark \ref{remark constructible}, we may interpret $\phi_1,\ldots,\phi_m$ as rational functions of the
affine variety $\MM$ and therefore as elements of the total fraction ring $\C(\MM)$ of $\C[\MM]$. Thus $\C[\MM][\phi_1,\ldots,\phi_m]$ and
$\C[\MM]_{\mathfrak{M}_x}[\phi_1,\ldots,\phi_m]$ are
$\C$--subalgebras of $\C(\MM)$ and $\C(\MM)_{\mathfrak{M}_x}$
which contain $\C[\MM]$ and $\C[\MM]_{\mathfrak{M}_x}$,
respectively.

With these notations we are able to formulate the following
statement which establishes the bridge to an algebraic understanding of the notion of topological robustness.

\begin{theorem}(\cite{GHMS09}, Corollary 11)
\label{th: equiv robus}
Let notations and assumptions be as before and suppose that the constructible map $\phi:\M\to\A^m$ is weakly continuous. Then $\phi$ is topologically robust if and only if for any point $x$ of $\M$ the $\C$--algebra $\C[\MM]_{\mathfrak{M}_x}[\phi_1,\dots,\phi_m]$ is a finite $\C[\MM]_{\mathfrak{M}_x}$--module.
\end{theorem}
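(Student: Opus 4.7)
The plan is to recast both conditions as statements about the Zariski closure $Z$ of $\mathrm{graph}(\phi|_U) \subseteq \MM \times \A^m$, where $U$ is the Zariski open dense subset of $\M$ on which $\phi|_U$ is rational (supplied by weak continuity, condition $(i)$). By Remark \ref{remark constructible} we may view $\phi_1,\ldots,\phi_m$ as elements of $\C(\MM)$; then the coordinate ring of $Z$ is identified with $\C[\MM][\phi_1,\ldots,\phi_m]\subseteq\C(\MM)$, and the projection $\pi:Z\to\MM$ corresponds to the ring inclusion $\C[\MM]\hookrightarrow \C[\MM][\phi_1,\ldots,\phi_m]$. The standard equivalence (finite generation as a module versus integrality of a finite generating set) reduces the algebraic side to: for each $x\in\M$, every $\phi_i$ is integral over the local ring $\C[\MM]_{\mx}$, i.e.\ satisfies a monic relation $\phi_i^d+a_{d-1}\phi_i^{d-1}+\cdots+a_0=0$ with $a_j\in\C[\MM]_{\mx}$.

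For the easy direction ($\Leftarrow$), I would take a sequence $(x_k)_{k\in\N}$ in $\M$ converging Euclideanly to $x\in\M$. Each coefficient $a_j$ is a rational function on $\MM$ defined at $x$, hence $a_j(x_k)\to a_j(x)$ and in particular stays bounded. Since $\phi_i(x_k)$ is a root of the monic polynomial with coefficients $a_j(x_k)$, the elementary root bound $|\phi_i(x_k)|\leq 1+\sum_j|a_j(x_k)|$ forces $(\phi(x_k))_{k\in\N}$ to be bounded, which is exactly condition $(iii)$.

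For the hard direction ($\Rightarrow$) I would proceed in two steps. First, I would show that for every $x\in\M$ the fiber of $\pi$ over $x$ is bounded in $\{x\}\times\A^m$: if $(x,y)\in Z$, then by construction of $Z$ there is a sequence $(x_k,\phi(x_k))\in\mathrm{graph}(\phi|_U)$ converging Euclideanly to $(x,y)$; topological robustness applied to $(x_k)_k$ (which converges in $\M$) forces $\phi(x_k)$, and hence $y$, to lie in a bounded set. Since $Z$ is constructible, the fiber $\pi^{-1}(x)$ is a bounded constructible subset of $\A^m$, therefore finite. Second, I would translate finiteness of the fiber plus the bounded-along-sequences property into integrality of each $\phi_i$ over $\C[\MM]_{\mx}$. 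The cleanest tool is the valuative criterion for integral closure: $\phi_i$ is integral over $\C[\MM]_{\mx}$ iff $v(\phi_i)\geq 0$ for every discrete valuation $v$ of $\C(\MM)$ with center containing $\mx$. Such a valuation is realised, via the curve selection lemma applied to an analytic branch of $\MM$ at $x$ meeting $U$, by an analytic arc $\gamma:(0,\varepsilon)\to U\subseteq\M$ with $\gamma(t)\to x$ as $t\to 0$; topological robustness yields that $\phi_i(\gamma(t))$ stays bounded as $t\to 0$, which forces the Laurent expansion of $\phi_i\circ\gamma$ to have nonnegative order, i.e.\ $v(\phi_i)\geq 0$.

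The main obstacle is this second step in the forward direction. The delicate point is that $\M$ may be strictly smaller than $\MM$, so one must produce, for every valuation $v$ of $\C(\MM)$ centered at $\mx$, a testing sequence lying inside $\M$ (and eventually inside $U$) in order to exploit condition $(iii)$; this is precisely what the curve selection lemma guarantees, but one has to verify that the arc can be chosen to meet the Zariski open dense set $U$, which is where the density of $U$ in each irreducible component of $\MM$ passing through $x$ is crucial. Once this is done, the equivalence between local boundedness along all analytic arcs through $x$ and integrality of $\phi_i$ over $\C[\MM]_{\mx}$ closes the argument.
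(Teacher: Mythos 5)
The paper does not actually prove this statement: it is quoted from \cite{GHMS09}, Corollary 11, and the only remark the paper adds is that the ``only if'' direction is an ``almost immediate consequence'' of \cite{CaGiHeMaPa03}, Lemma 3, which in turn rests on Zariski's Main Theorem. Your sketch is therefore an independent reconstruction, and for the hard direction it travels a genuinely different road: the valuative criterion for integral closure together with curve selection, rather than ZMT. Both routes rest on comparably heavy algebro--geometric machinery. Where the paper's chain of references uses ZMT (quasi--finite plus a properness input gives finite), your argument really needs the arc--wise characterization of integral dependence in the style of Lejeune--Jalabert and Teissier, because the valuative criterion involves \emph{all} discrete valuations of $\C(\MM)$ dominating $\C[\MM]_{\mx}$, not only those with residue field $\C$ which are the ones realized by analytic arcs; the reduction from general DVRs (which may have residue field of positive transcendence degree over $\C$) to arc--valuations is itself a nontrivial theorem and should be invoked explicitly rather than attributed to ``curve selection'' alone. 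Two smaller points deserve tightening. In the $\Leftarrow$ direction, the monic relation for $\phi_i$ holds a priori only over $U$, and you must use weak continuity (condition $(i)$, i.e.\ containment of the graph of $\phi$ in the Zariski closure $Z$) to transport it to $(x_k,\phi(x_k))$ when $x_k\in\M\setminus U$. In the $\Rightarrow$ direction, concluding that $\pi^{-1}(x)$ is bounded from condition $(iii)$ requires a diagonalization: $(iii)$ only yields a bound depending on the approximating sequence, so if $\pi^{-1}(x)$ were unbounded you would have to assemble from infinitely many approximating sequences a single sequence $(x_k)\to x$ with $\phi(x_k)\to\infty$ in order to reach a contradiction. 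You should also note that $\MM$ may be reducible, so $\C(\MM)$ is only a product of fields and the valuation--theoretic argument must be carried out componentwise, and that the perturbation pushing the test arcs into $\M$ should use the Zariski interior of $\M$ in $\MM$ (open and dense in $\MM$ because $\M$ is constructible and dense there) rather than $U$ directly, since $U$ is open in $\M$ but not necessarily in $\MM$. With these repairs the proposal is sound; it buys a directly topological picture of robustness (boundedness along arcs), at the cost of the delicate arc--selection step that the ZMT route avoids.
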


The only if part of Theorem \ref{th: equiv robus} is an almost immediate consequence of \cite{CaGiHeMaPa03}, Lemma 3, which in its turn is based on a non--elementary and deep result from Algebraic Geometry, namely Zariski's Main Theorem (see, e.g., \cite{Iversen73}, \S IV.2). 

Let $\phi:\mathcal{M}\to\A^m$ be a topologically robust constructible map and let $u$ be an arbitrary point of $\mathcal{M}$. From Theorem \ref{th: equiv robus} one deduces easily that for all sequences $(u_k)_{k\in\N}$ of points $u_k\in\mathcal{M}$ which converge to $u$, the sequences $(\phi(u_k))_{k\in\N}$ have only finitely many accumulation points.

\subsubsection{The notion of geometrical robustness}
\label{subsec: geometrical robustness}

The main mathematical tool of Section \ref{sec: Singularities and models of computation} of this paper is the notion of geometrical robustness we are going to introduce now. We shall use the same notations as in Section \ref{subsec: robustness}.

\begin{definition}
\label{def: geometrically robust map}
Let $\mathcal{M}$ be a constructible subset of a suitable affine space and let $\phi:\mathcal{M}\to\A^m$ be a (total) constructible map with components $\phi_1,\dots,\phi_m$. According to Remark \ref{remark constructible} we may interpret $\phi_1,\dots,\phi_m$ as rational functions of $\ol{\mathcal{M}}$. We call $\phi$ geometrically robust if for any point $x\in\mathcal{M}$ the following two conditions are satisfied:
\begin{itemize}
	\item[(i)] $\C[\ol{\mathcal{M}}]_{\mathfrak{M}_x}[\phi_1,\dots,\phi_m]$ is a finite $\C[\ol{\mathcal{M}}]_{\mathfrak{M}_x}$--module.	 
	\item[(ii)] $\C[\ol{\mathcal{M}}]_{\mathfrak{M}_x}[\phi_1,\dots,\phi_m]$ is a local $\C[\ol{\mathcal{M}}]_{\mathfrak{M}_x}$--algebra whose maximal ideal is generated by $\mathfrak{M}_x$ and $\phi_1-\phi_1(x),\dots,\phi_m-\phi_m(x)$.
\end{itemize}
\end{definition} 

Observe that the notion of a geometrically robust map makes also sense when $\C$ is replaced by an arbitrary algebraically closed field (of any characteristic). In view of Theorem \ref{th: equiv robus} the same is true for the notion of a topologically robust map. 

In this paper we shall restrict our attention to the algebraically closed field $\C$. In this particular case we have the following characterization of geometrically robust constructible maps.

\begin{theorem}
\label{theorem 2}
Let assumptions and notations be as before. Then the constructible map $\phi:\mathcal{M}\to\A^m$ is geometrically robust if and only if $\phi$ is strongly continuous.
\end{theorem}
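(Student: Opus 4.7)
My plan is to prove both directions by translating the algebraic conditions of Definition \ref{def: geometrically robust map} into geometric statements about the Zariski closure of the graph of $\phi$, exploiting the basic fact recorded after Definition \ref{def: constr. set} that Zariski and Euclidean closures coincide on constructible sets.

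For the implication strongly continuous $\Rightarrow$ geometrically robust, I would first invoke Lemma \ref{remark strongly continuous} to get topological robustness, and then Theorem \ref{th: equiv robus} to secure condition $(i)$ at every $x \in \mathcal{M}$. For condition $(ii)$, fix $x$ and set $R := \C[\ol{\mathcal{M}}]_{\mathfrak{M}_x}$ and $A := R[\phi_1,\dots,\phi_m]$. Since $A$ is a finite $R$-module, it is semilocal, and the maximal ideals of $A$ lying above $\mathfrak{M}_x R$ are exactly the closed points of the fiber of $\operatorname{Spec} A \to \operatorname{Spec} R$ above $x$; by constructibility these correspond to the Euclidean accumulation points of $(\phi(x_k))$ along sequences $x_k \to x$ inside the dense open subset $U$ of $\mathcal{M}$ on which $\phi|_U$ is rational (this dictionary is the content of the Zariski Main Theorem based argument underlying Theorem \ref{th: equiv robus}). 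Strong continuity forces the only such accumulation point to be $\phi(x)$, so $A$ is local. The explicit description of its maximal ideal then follows at once, because $\phi_i-\phi_i(x)$ lies in that ideal for each $i$, and $A/(\mathfrak{M}_x,\phi_1-\phi_1(x),\dots,\phi_m-\phi_m(x))\cong\C$ is already a field.

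For the converse, geometrically robust $\Rightarrow$ strongly continuous, I must first recover weak continuity so that Theorem \ref{th: equiv robus} becomes applicable. Let $U$ and $\Gamma_U$ be as above and denote by $\Gamma'$ the Zariski closure of $\Gamma_U$ in $\A^n\times\A^m$. Condition $(i)$ makes the projection $\Gamma'\to\ol{\mathcal{M}}$ finite locally at $x$, and condition $(ii)$ forces its fiber over $x$ to reduce set-theoretically to the single point $(x,\phi(x))$; hence $(x,\phi(x))\in\Gamma'$ for every $x\in\mathcal{M}$, which is weak continuity. Theorem \ref{th: equiv robus} now yields topological robustness. To conclude strong continuity, I would take an arbitrary sequence $x_k \to x$ in $\mathcal{M}$; topological robustness gives boundedness of $(\phi(x_k))$, and any accumulation point $y$ satisfies $(x,y)\in\ol{\Gamma}^{\mathrm{Eucl}}=\ol{\Gamma}^{\mathrm{Zar}}\subseteq\Gamma'$, so the uniqueness of the fiber point above $x$ forces $y=\phi(x)$, giving $\phi(x_k)\to\phi(x)$.

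The key technical obstacle I anticipate lies in the forward direction: making rigorous the bridge between the algebraic set of maximal ideals of $A$ lying above $\mathfrak{M}_x$ and the analytic set of Euclidean accumulation points of $(\phi(x_k))$ for sequences $x_k\to x$ in $U$. This depends on combining the Zariski-versus-Euclidean closure coincidence for constructible sets with the Zariski Main Theorem style reasoning that already supports Theorem \ref{th: equiv robus}; everything else is essentially bookkeeping around Nakayama-type arguments in the finite $R$-algebra $A$.
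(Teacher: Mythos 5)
Your proof is correct and follows the same overall blueprint as the paper's (Lemma \ref{remark strongly continuous} and Theorem \ref{th: equiv robus} to settle the easier ingredients, then a uniqueness argument for the relevant maximal ideal), but the technical core is executed differently. In the ``geometrically robust $\Rightarrow$ strongly continuous'' direction the paper establishes weak continuity by an explicit computation: it takes a polynomial $A$ with $A[\phi]=0$, observes via condition $(ii)$ that $A[\phi]-A(x,\phi(x))$ lies in the maximal ideal, and concludes $A(x,\phi(x))=0$; and it then establishes convergence through a separate argument involving the ideal $\wt{\mathfrak{a}}$ and an analytic Claim about its properness. You instead package both steps into the single geometric statement that the fiber of $\Gamma'=\ol{\Gamma_U}^{\mathrm{Zar}}$ over $x$ reduces to $\{(x,\phi(x))\}$, and then feed Euclidean accumulation points into that fiber via the Zariski--Euclidean closure coincidence on constructible sets; this eliminates the Claim entirely. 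In the converse direction the two proofs are essentially the same argument in different clothing: the paper argues that an arbitrary maximal ideal of $A=\C[\ol{\M}]_{\mathfrak{M}_x}[\phi_1,\dots,\phi_m]$ corresponds to a point $(x,a)$ in the Zariski closure of the graph, which equals the graph because strong continuity makes it Euclidean-closed and constructibility makes the closures coincide; you identify maximal ideals with Euclidean accumulation points and let strong continuity collapse them to $\phi(x)$. Your geometric framing makes the two implications more visibly symmetric. One small correction to your closing remark: the bridge you single out as the ``key technical obstacle'' -- maximal ideals of $A$ over $\mathfrak{M}_x$ versus Euclidean accumulation points of $\phi$ along sequences in $U$ tending to $x$ -- is actually elementary. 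It requires only the identification of $A$ with the localization of $\C[\Gamma']$ at $x$, the Nullstellensatz, and the closure coincidence for constructible sets; it does not itself rest on Zariski's Main Theorem. ZMT enters only once, buried inside Theorem \ref{th: equiv robus}, which both you and the paper invoke to pass between topological robustness and the finiteness in condition $(i)$. You should also spell out, when you write this up in full, the step identifying $A$ with a localization of the coordinate ring of $\Gamma'$, and observe that $\Gamma'\to\ol{\M}$ is surjective near $x$ (closed since finite, dominant since $A\subseteq\C(\ol{\M})$), so the fiber over $x$ is nonempty before you invoke its uniqueness -- these are the bookkeeping points you correctly anticipate.
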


\begin{proof}
Suppose that the constructible map $\phi$ is geometrically robust. We are first going to show that $\phi$ is weakly continuous.

By Remark \ref{remark constructible} there exists a Zariski open and dense subset $U$ of $\M$ such that the restriction map $\phi|_U$ is rational. Let $Y_1,\dots,Y_m$ be new indeterminates, $Y:=(Y_1,\dots,Y_m)$ and suppose that the affine ambient space of $\M$ has dimension $n$. Observe that any $(n+m)$--variate polynomial over $\C$ which vanishes on the graph of the rational map $\phi|_U$ gives rise to a polynomial $A\in\C[\MM][Y]$ with $A[\phi_1,\dots,\phi_m]=0$.

Let $x$ be an arbitrary point of $\M$ and consider $A$ as an element of $\C[\MM]_{\mathfrak{M}_x}[Y]$. Denote by $A(x,\phi(x))$ the value of $A$ at $(x,\phi(x))$. Then condition $(ii)$ of Definition \ref{def: geometrically robust map} implies that $A[\phi_1,\dots,\phi_m]-A(x,\phi(x))$ belongs to the maximal ideal of $\C[\MM]_{\mathfrak{M}_x}[\phi_1,\dots,\phi_m]$. From $A[\phi_1,\dots,\phi_m]=0$ we deduce now $A(x,\phi(x))=0$.

Since the choice of $x\in\M$ was arbitrary, we conclude that $A$ vanishes on the graph of $\phi$. This implies that the graph of $\phi$ is contained in the Zariski closure of the graph of $\phi|_U$. Hence $\phi$ is weakly continuous.

\enter
Let be given an arbitrary point $x\in\mathcal{M}$ and a sequence $(x_k)_{k\in\N}$, $x_k\in\mathcal{M}$, which converges to $x$ in the strong topology of $\mathcal{M}$. We are now going to show that the sequence $(\phi(x_k))_{k\in\N}$ converges to $\phi(x)$.

Since $\phi$ is weakly continuous, we deduce from condition $(i)$ of Definition \ref{def: geometrically robust map} and Theorem \ref{th: equiv robus} that the sequence $(\phi(x_n))_{k\in\N}$ contains at least one accumulation point, say $a=(a_1,\dots,a_m)$, which belongs to $\A^m$. Let $\mathfrak{a}$ be the ideal of all polynomials $A\in\C[\mathcal{M}]_{\mathfrak{M}_x}[Y]$ that vanish at the point $(x,a)\in\A^n\times\A^m$. Without loss of generality we may assume that the sequence $(\phi(x_k))_{k\in\N}$ converges to $a$. Let $\wt{\mathfrak{a}}:=\{ A(\phi); A\in\mathfrak{a} \}$ be the image of the ideal $\mathfrak{a}$ under the surjective $\C[\ol{\mathcal{M}}]_{\mathfrak{M}_x}$--algebra homomorphism $\C[\ol{\mathcal{M}}]_{\mathfrak{M}_x}[Y] \to \C[\ol{\mathcal{M}}]_{\mathfrak{M}_x}[\phi_1,\dots,\phi_m]$ which maps $Y_1,\dots,Y_m$ onto $\phi_1,\dots,\phi_m$. Observe that $\wt{\mathfrak{a}}$ is an ideal of $\C[\ol{\mathcal{M}}]_{\mathfrak{M}_x}[\phi_1,\dots,\phi_m]$.

We are now going to show the following statement.

\begin{claim}
\label{intermediate claim}
The ideal $\wt{\mathfrak{a}}$ is proper.
\end{claim}
\paragraph*{Proof of the claim.} Suppose that the ideal $\wt{\mathfrak{a}}$ is not proper. Then there exists a polynomial $A=\sum_{j_1,\dots,j_m} a_{j_1 \dots j_m} Y_1^{j_1}\dots Y_m^{j_m}$ of $\mathfrak{a}$, with $a_{j_1 \dots j_m}\in\C[\ol{\mathcal{M}}]_{\mathfrak{M}_x}$, which satisfies the condition $\sum_{j_1,\dots,j_m} a_{j_1 \dots j_m} \phi_1^{j_1}\dots \phi_m^{j_m}=A(\phi)=1$. Since for any $m$--tuple of indices $j_1,\dots,j_m$ the rational function $a_{j_1 \dots j_m}$ of $\ol{\mathcal{M}}$ is defined at $x$ and the sequence $(x_k)_{k\in\N}$ converges to $x$, we may assume without loss of generality that $a_{j_1 \dots j_m}$ is defined at $x_k$ for any $k\in\N$ and that $(a_{j_1 \dots j_m}(x_k))_{k\in\N}$ converges to $a_{j_1 \dots j_m}(x)$. We may therefore write $A^{(x')}:= \sum a_{j_1 \dots j_m}(x') Y_1^{j_1}\dots Y_m^{j_m} \in\C[Y]$ for $x':=x$ or $x':=x_k$, $k\in\N$. From $A\in\mathfrak{a}$ we deduce $A^{(x)}(a)=0$. By assumption $(\phi(x_k))_{k\in\N}$ converges to $a$. Hence the sequence of complex numbers $(A^{(x_k)}(\phi(x_k)))_{k\in\N}$ converges to $A^{(x)}(a)=0$. On the other hand $A(\phi)=1$ and the weak continuity of $\phi$ imply $A^{(x_k)}(\phi(x_k))=1$ for any $k\in \N$. This contradiction proves our claim.

\enter
From condition $(ii)$ of Definition \ref{def: geometrically robust map} we deduce that the $\C[\ol{\mathcal{M}}]_{\mathfrak{M}_x}$--algebra\\
$\C[\ol{\mathcal{M}}]_{\mathfrak{M}_x}[\phi_1,\dots,\phi_m]$ contains a single maximal ideal, say $\mathfrak{M}$, and that $\mathfrak{M}$ is generated by $\mathfrak{M}_x$ and $\phi_1-\phi_1(x),\dots,\phi_m-\phi_m(x)$.

Since by Claim \ref{intermediate claim} the ideal $\wt{\mathfrak{a}}$ is proper, $\wt{\mathfrak{a}}$ must be contained in $\mathfrak{M}$. Observe that the polynomials    
$Y_1-a_1,\dots,Y_m-a_m$ belong to $\mathfrak{a}$. Hence $\phi_1-a_1,\dots,\phi_m-a_m$ belong to $\wt{\mathfrak{a}}$ and therefore also to $\mathfrak{M}$.  Since $\mathfrak{M}$ is proper, this is only possible if $a_1=\phi_1(x),\dots,a_m=\phi_m(x)$ holds.

Thus the sequence $(\phi(x_k))_{k\in\N}$ converges to $\phi(x)$. 

\ws\\
Suppose now that the constructible map $\phi$ is strongly continuous. From Lemma \ref{remark strongly continuous} we deduce that $\phi$ is topologically robust. Theorem \ref{th: equiv robus} implies now that $\phi$ satisfies condition $(i)$ of Definition \ref{def: geometrically robust map} at any point of $\mathcal{M}$.   

Let $x$ be an arbitrary point of $\mathcal{M}$. We have to show that $\phi$ satisfies at $x$ condition $(ii)$ of Definition \ref{def: geometrically robust map}. 

Since the graph of $\phi$ is constructible, its strong and Zariski closures in $\mathcal{M}\times\A^m$ coincide. Moreover, since $\phi$ is by assumption strongly continuous, its graph is closed with respect to the strong topology of $\mathcal{M}\times\A^m$ and therefore also with respect to the Zariski topology. Let $\mathfrak{a}$ be an arbitrary maximal ideal of the $\C[\ol{\mathcal{M}}]_{\mathfrak{M}_x}$--algebra $\C[\ol{\mathcal{M}}]_{\mathfrak{M}_x}[\phi_1,\dots,\phi_m]$. Then there exists a point $a=(a_1,\dots,a_m)$ of $\A^m$ such that $\mathfrak{a}$ is generated by $\mathfrak{M}_x$ and $\phi_1 - a_1, \dots , \phi_m - a_m$. Thus $(x,a)\in\mathcal{M}\times\A^m$ belongs to the Zariski closure of the graph of $\phi$ in $\mathcal{M}\times\A^m$ and therefore to the graph of $\phi$ itself. This implies $a=\phi(x)$. With other words, $\mathfrak{a}$ is generated by $\mathfrak{M}_x$ and $\phi_1 - \phi_1(x), \dots , \phi_m - \phi_m(x)$. There is exactly one ideal of $\C[\ol{\mathcal{M}}]_{\mathfrak{M}_x}[\varphi_1,\dots,\varphi_m]$ which satisfies this condition. Therefore the $\C[\ol{\mathcal{M}}]_{\mathfrak{M}_x}$--algebra $\C[\ol{\mathcal{M}}]_{\mathfrak{M}_x}[\varphi_1,\dots,\varphi_m]$ is local and condition $(ii)$ and Definition \ref{def: geometrically robust map} is satisfied at the point $x\in\mathcal{M}$.
\end{proof}

Theorem \ref{theorem 2} implies immediately the following result which will be fundamental in the sequel.

\begin{corollary}
\label{proposition 1}
Geometrically robust constructible maps are weakly continuous, hereditary and even topologically robust. If we restrict a geometrically robust constructible map to a constructible subset of its domain of definition we obtain again a geometrically robust map. Moreover the composition and the cartesian product of two geometrically robust constructible maps are geometrically robust. The geometrically robust constructible functions form a commutative $\C$--algebra which contains the polynomial functions.
\end{corollary}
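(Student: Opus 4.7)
The plan is to reduce every assertion to the equivalence ``geometrically robust $\Longleftrightarrow$ strongly continuous'' furnished by Theorem \ref{theorem 2}, and then exploit the elementary behaviour of continuity and constructibility under the relevant operations. Once a map is shown to be both strongly continuous and constructible, Theorem \ref{theorem 2} returns geometric robustness at no extra cost.

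First I would invoke Theorem \ref{theorem 2} to translate geometric robustness of $\phi$ into strong continuity and then apply Lemma \ref{remark strongly continuous}, which yields in one stroke weak continuity, topological robustness and hereditarity. For the restriction claim, let $\phi:\M\to\A^m$ be geometrically robust and let $\NN\subseteq\M$ be constructible; the graph of $\phi|_\NN$ is the intersection of the graph of $\phi$ with $\NN\times\A^m$, hence constructible, while strong continuity of $\phi|_\NN$ is immediate from that of $\phi$. Theorem \ref{theorem 2} then returns geometric robustness of $\phi|_\NN$.

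For the composition of geometrically robust constructible maps $\phi:\M\to\A^m$ and $\psi:\NN\to\A^\ell$ with $\phi(\M)\subseteq\NN$, I would observe that the graph of $\psi\circ\phi$ is defined by the first--order formula $\exists y\ \bigl((x,y)\in\mathrm{graph}(\phi)\wedge (y,z)\in\mathrm{graph}(\psi)\bigr)$ and hence is constructible by quantifier elimination for algebraically closed fields; strong continuity of $\psi\circ\phi$ follows from that of $\phi$ and $\psi$ by the standard topological argument, after which Theorem \ref{theorem 2} applies. The cartesian product $\phi_1\times\phi_2$ of geometrically robust $\phi_i:\M_i\to\A^{m_i}$ is handled analogously: its graph is again first--order definable from the graphs of $\phi_1,\phi_2$, hence constructible, and strong continuity of products is automatic.

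For the algebra structure, I would note that the sum and product operations $\A^2\to\A^1$ and, for any $\lambda\in\C$, the scalar multiplication $\A^1\to\A^1$, $a\mapsto\lambda a$, are polynomial, hence strongly continuous and constructible, hence geometrically robust by Theorem \ref{theorem 2}. Given geometrically robust constructible functions $\varphi,\psi:\M\to\C$, the map $(\varphi,\psi):\M\to\A^2$ is the composition of the polynomial diagonal $\M\to\M\times\M$ with $\varphi\times\psi$, so is geometrically robust by the preceding paragraph; composing once more with the addition or multiplication map yields $\varphi+\psi$ and $\varphi\psi$ as geometrically robust constructible functions. The fact that all polynomial functions lie in this algebra is immediate from strong continuity and constructibility of polynomial maps together with Theorem \ref{theorem 2}. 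The only non--formal ingredient anywhere is the constructibility of graphs of compositions and products, which is the routine application of quantifier elimination already used throughout Section \ref{sec: geometry-basic-constuctible}; no genuine obstacle arises.
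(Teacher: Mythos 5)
Your proposal is correct and takes essentially the same route the paper has in mind: the paper asserts that Corollary~\ref{proposition 1} ``follows immediately'' from Theorem~\ref{theorem 2}, i.e.\ from the equivalence of geometric robustness with strong continuity, and you simply make this explicit by checking that strong continuity and constructibility are preserved under restriction, composition, cartesian product and the algebra operations, then applying Theorem~\ref{theorem 2} in the converse direction. The only thing worth noting is that the paper also sketches a second, purely algebraic argument (integral--dependence transitivity plus the cited results of \cite{GHMS09}) which is needed when $\C$ is replaced by an arbitrary algebraically closed field where the topological characterization of Theorem~\ref{theorem 2} is unavailable; your proof, being tied to the Euclidean topology of $\C$, does not cover that more general case, but it fully suffices for the statement as given.
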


Notice that Corollary \ref{proposition 1} remains mutatis mutandis true if the notions of geometrical and topological robustness are applied to constructible maps defined over an \emph{arbitrary} algebraically closed field $k$. 

Adapting the corresponding proofs to this more general situation, one sees that weak continuity and hereditarity of geometrically robust constructible maps with \emph{irreducible} domains of definition follows from \cite{GHMS09}, Proposition 16, Theorem 17 and Corollary 18. These results imply also that restrictions of such maps to irreducible constructible subsets of their domains of definition are again geometrically robust. From this one deduces immediately the same statements for the case of arbitrary domains of definition. Topological robustness is a direct consequence of Definition \ref{def: geometrically robust map}. Closedness under composition follows from the transitivity law for integral dependence. One infers from Definition \ref{def: geometrically robust map} closedness under cartesian products and that the geometrically robust constructible functions form a commutative $k$--algebra which contains the polynomial functions.

Theorem \ref{theorem 2} is new. It gives a topological motivation for the rather abstract, algebraic notion of geometrical robustness. The reader not acquainted with commutative algebra may just identify the concept of geometrical robustness with that of strong continuity of constructible maps.

The origin of the concept of a geometrically robust map can be found, implicitly, in \cite{GH01}. It was introduced explicitly for constructible maps with irreducible domains of definition in \cite{GHMS09}, where it is used to analyze the complexity character of multivariate Hermite--Lagrange interpolation.


\section[A software architecture based model for computations with parameterized arithmetic circuits]
{\large{A software architecture based model for computations with  parameterized arithmetic circuits}}
\label{sec: Singularities and models of computation}
 
\subsection{Parameterized arithmetic circuits and their semantics}
\label{sec:Model-circuits}

The routines of our computation model, which will be introduced in Section \ref{sec:Model-discussion}, operate with circuits representing parameter dependent rational functions. They will behave well under restrictions. In this spirit, the objects of our abstract data types will be parameter dependent multivariate rational functions over $\C$, the concrete objects of our classes will be parameterized arithmetic circuits and our abstraction function will associate circuits with rational functions. In what follows, $\C$ may always be replaced, mutatis mutandis, by an arbitrary algebraically closed field (of any characteristic).

Let us fix natural numbers $n$ and $r$, indeterminates $X_1,\dots ,X_n$ and a non--empty constructible subset $\mathcal{M}$ of $\A^r$. By $\pi_1,\dots ,\pi_r$ we denote the restrictions to $\mathcal{M}$ of the canonical projections $\A^r\to\A^1$.

A \emph{(by $\mathcal{M}$) parameterized arithmetic circuit $\beta$} (with \emph{basic parameters $\pi_1,\dots ,\pi_r$} and \emph{inputs $X_1,\dots ,X_n$}) is a labelled directed acyclic graph (labelled DAG) satisfying the following conditions:\\
each node of indegree zero is labelled by a scalar from $\C$, a basic parameter $\pi_1,\dots ,\pi_r$ or a input variable $X_1,\dots ,X_n$. Following the case, we shall refer to the \emph{scalar, basic parameter} and (standard) \emph{input} nodes of $\beta$. All other nodes of $\beta$ have indegree two and are called \emph{internal}. They are labelled by arithmetic operations (addition, subtraction, multiplication, division). A \emph{parameter} node of $\beta$ depends only on scalar and basic parameter nodes, but not on any input node of $\beta$ (here ``dependence'' refers to the existence of a connecting path). An addition or multiplication node whose two ingoing edges depend on an input is called \emph{essential}. The same terminology is applied to division nodes whose second argument depends on an input. Moreover, at least one circuit node becomes labelled as output. Without loss of generality we may suppose that all nodes of outdegree zero are outputs of $\beta$.	

\enter
We consider $\beta$ as a syntactical object which we wish to equip with a certain semantics. In principle there exists a canonical evaluation procedure of $\beta$ assigning to each node a rational function of $\mathcal{M}\times \A^n$ which, in case of a parameter node, may also be interpreted as a rational function of $\mathcal{M}$. In either situation we call such a rational function an \emph{intermediate result} of $\beta$. 

The evaluation procedure may fail if we divide at some node an intermediate result by another one which vanishes on a Zariski dense subset of a whole irreducible component of $\mathcal{M}\times \A^n$. If this occurs, we call the labelled DAG $\beta$ \emph{inconsistent}, otherwise \emph{consistent}. From~\cite{CaGiHeMaPa03}, Corollary 2 (compare also~\cite{HeSc82}, Theorem 4.4 and~\cite{GH01}, Lemma 3) one deduces easily that testing whether an intermediate result of $\beta$ vanishes on a Zariski dense subset of a whole irreducible component of $\mathcal{M}\times \A^n$ can efficiently be reduced to the same task for circuit represented rational functions of $\mathcal{M}$ (the procedure is of non--uniform deterministic or alternatively of uniform probabilistic nature).

Mutatis mutandis the same is true for identity checking between intermediate results of $\beta$. If $\mathcal{M}$ is irreducible, both tasks boil down to an identity--to--zero test on $\mathcal{M}$. In case that $\mathcal{M}$ is not Zariski dense in $\A^r$, this issue presents a major open problem in modern Theoretical Computer Science (see \cite{SaxenaSurvey09} and \cite{shpilka} for details).

If nothing else is said, we shall from now on assume that $\beta$ is a consistent parameterized arithmetic circuit. The intermediate results associated with output nodes will be called \emph{final results} of $\beta$. 

We call an intermediate result associated with a parameter node a \emph{parameter} of $\beta$ and interpret it generally as a rational function of $\mathcal{M}$. A parameter associated with a node which has an outgoing edge into a node which depends on some input of $\beta$ is called \emph{essential}. In the sequel we shall refer to the constructible set $\mathcal{M}$ as the \emph{parameter domain} of $\beta$.

We consider $\beta$ as a syntactic object which represents the final results of $\beta$, i.e., the rational functions of $\mathcal{M}\times\A^n$ assigned to its output nodes. In this way becomes introduced an abstraction function which associates $\beta$ with these rational functions. This abstraction function assigns therefore to $\beta$ a rational map $\mathcal{M}\times\A^n \dashrightarrow \A^q$, where $q$ is the number of output nodes of $\beta$. On its turn, this rational map may also be understood as a (by $\mathcal{M}$) parameterized family of rational maps $\A^n \dashrightarrow \A^q$.

Now we suppose that the parameterized arithmetic circuit $\beta$ has been equipped with an additional structure, linked to the semantics of $\beta$. We assume that for each node $\rho$ of $\beta$ there is given a \emph{total} constructible map $\mathcal{M}\times \A^n \to \A^1$ which extends the intermediate result associated with $\rho$. In this way, if $\beta$ has $K$ nodes, we obtain a total constructible map $\Omega:\mathcal{M}\times \A^n \to \A^K$ which extends the rational map $\mathcal{M}\times \A^n \dashrightarrow \A^K$ given by the labels at the indegree zero nodes and the intermediate results of $\beta$.
    
\begin{definition}[Robust circuit] \label{def: robust circuit}
Let notations and assumptions be as before. The pair $(\beta,\Omega)$ is called a robust parameterized arithmetic circuit if the constructible map $\Omega$ is geometrically robust. 
\end{definition}
\mydefinitions{\label{def: robust circuit} Robust circuit}

We shall make the following two observations to this definition.

We state our first observation. Suppose that $(\beta,\Omega)$ is robust. This means that the constructible map $\Omega:\mathcal{M}\times\A^n \to \A^K$ is geometrically and hence also topologically robust and hereditary. Moreover, 
 the above rational map $\mathcal{M}\times \A^n \dashrightarrow \A^K$ can be extended to at most one geometrically robust constructible map $\Omega:\mathcal{M}\times \A^n \to \A^K$. Therefore we shall apply from now on the term ``robust'' also to the circuit $\beta$.

Let us now state our second observation. We may consider the parameterized circuit $\beta$ as a program which solves the problem to evaluate, for any sufficiently generic parameter instance $u\in\mathcal{M}$, the rational map $\A^n \dashrightarrow \A^q$ which we obtain by specializing to the point $u$ the first argument of the rational map $\mathcal{M}\times \A^n \dashrightarrow \A^q$ defined by the final results of $\beta$. In this sense, the ``computational problem'' solved by $\beta$ is given by the final results of $\beta$.
	
Being robust becomes now an architectural requirement for the circuit $\beta$ and for its output. Robust parameterized arithmetic circuits may be restricted as follows:

Let $\mathcal{N}$ be a constructible subset of $\mathcal{M}$ and suppose that $(\beta,\Omega)$ is robust. Then Corollary \ref{proposition 1} implies that the restriction $\Omega\vert_{\mathcal{N}\times\A^n}$ of the constructible map $\Omega$ to $\mathcal{N}\times\A^n$ is still a geometrically robust constructible map.

This implies that $(\beta,\Omega)$ induces a by $\mathcal{N}$ parameterized arithmetical circuit $\beta_{\mathcal{N}}$. Observe that $\beta_{\mathcal{N}}$ may become inconsistent. If $\beta_{\mathcal{N}}$ is consistent then $(\beta_{\mathcal{N}}, \Omega\vert_{\mathcal{N}\times\A^n})$ is robust. The nodes where the evaluation of $\beta_{\mathcal{N}}$ fails correspond to divisions of zero by zero which may be replaced by so called approximative algorithms having unique limits (see Section \ref{sec:Model-discussion-simplified}). These limits are given by the map $\Omega\vert_{\mathcal{N}\times\A^n}$. We call $(\beta_{\mathcal{N}}, \Omega\vert_{\mathcal{N}\times\A^n})$, or simply $\beta_{\mathcal{N}}$, the \emph{restriction} of $(\beta,\Omega)$ or $\beta$ to $\mathcal{N}$. 

We say that the parameterized arithmetic circuit $\beta$ is \emph{totally division--free} if any division node of $\beta$ corresponds to a division by a non--zero complex scalar.

We call $\beta$ \emph{essentially division--free} if only parameter nodes are labelled by divisions. Thus the property of $\beta$ being totally division--free implies that $\beta$ is essentially division--free, but not vice versa. Moreover, if $\beta$ is totally division-free, the rational map given by the intermediate results of $\beta$ is polynomial and therefore a geometrically robust constructible map. Thus, any by $\mathcal{M}$ parameterized, totally division--free circuit is in a natural way robust.

In the sequel, we shall need the following elementary fact. 

\begin{lemma}
\label{lemma intermediate results}
Let notations and assumptions be as before and suppose that the parameterized arithmetic circuit $\beta$ is robust. Then all intermediate results of $\beta$ are polynomials in $X_1,\dots,X_n$ over the $\C$--algebra of geometrically robust constructible functions defined on $\mathcal{M}$.
\end{lemma}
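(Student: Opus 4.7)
The plan is structural induction on the nodes of $\beta$ traversed in topological order. The key bookkeeping tool will be Corollary \ref{proposition 1}: geometrically robust constructible functions on a fixed constructible set form a commutative $\C$-algebra containing the polynomial functions, and restrictions of geometrically robust maps to constructible subsets remain geometrically robust. The base and easy cases are almost automatic. Each indegree-zero node is labelled by a scalar, by a basic parameter $\pi_i$ (a restriction of a coordinate projection, hence a polynomial function on $\M$, hence geometrically robust), or by an input variable $X_j$, in each case trivially a polynomial in $X_1,\dots,X_n$ over the geometrically robust functions on $\M$. At an internal $+$, $-$ or $\cdot$ node, the induction hypothesis propagates immediately, because sums, differences and products of polynomials in $X$ with geometrically robust coefficients are again of the same form.

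The interesting case is a division node computing $H=F/G$, for which the robustness of $\beta$ supplies a geometrically robust total extension $\omega_H\colon\M\times\A^n\to\A^1$. Writing $F=\sum_\alpha a_\alpha X^\alpha$ and $G=\sum_\beta b_\beta X^\beta$ by the inductive hypothesis, I would first argue that consistency of $\beta$ forces $\M':=\{u\in\M:G(u,X)\not\equiv 0\text{ in }\C[X]\}$ to be Zariski open and dense in $\M$: otherwise every $b_\beta$ would vanish on some irreducible component $C$ of $\M$, making $G$ vanish on the entire irreducible component $C\times\A^n$ of $\M\times\A^n$. For each $u\in\M'$, Corollary \ref{proposition 1} shows that $\omega_H$ restricted to the fibre $\{u\}\times\A^n$ is geometrically robust, and Theorem \ref{theorem 2} turns this into strong continuity on $\A^n$; on the Zariski dense open set $\{G(u,\cdot)\ne 0\}$ the restriction coincides with the rational function $F(u,X)/G(u,X)$. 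The main non-formal input to the proof, and the step I expect to be most delicate, is the elementary but essential Riemann-type observation that a continuous extension of a rational function $f/g$ from $\{g\ne 0\}$ to all of $\A^n$ forces $g$ to divide $f$: after cancelling a greatest common divisor, continuity prevents $g$ from vanishing on any hypersurface not already contained in $\{f=0\}$, so $g$ must be a constant. Consequently $\omega_H(u,\cdot)\in\C[X_1,\dots,X_n]$ is a polynomial of total degree at most $d:=\deg_X F$, uniformly in $u\in\M'$.

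With this uniform degree bound in hand, I would reconstruct $\omega_H$ globally by Lagrange interpolation. Choose a unisolvent set of points $x_1,\dots,x_N\in\A^n$ for polynomials of total degree at most $d$, with inversion constants $\lambda_{\alpha i}\in\C$ such that any such polynomial $P(X)=\sum_\alpha c_\alpha X^\alpha$ satisfies $c_\alpha=\sum_i\lambda_{\alpha i}P(x_i)$. Define $c_\alpha\colon\M\to\A^1$ by $c_\alpha(u):=\sum_i\lambda_{\alpha i}\,\omega_H(u,x_i)$; each map $u\mapsto\omega_H(u,x_i)$ is the restriction of $\omega_H$ to the constructible subset $\M\times\{x_i\}\cong\M$ and is geometrically robust by Corollary \ref{proposition 1}, so $c_\alpha$ is geometrically robust on $\M$ by the algebra property. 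Setting $P(u,X):=\sum_\alpha c_\alpha(u)X^\alpha$ yields a function on $\M\times\A^n$ which is a polynomial in $X$ over geometrically robust coefficients, and which is itself geometrically robust as a polynomial combination of geometrically robust functions and the coordinate projections $X_j$. Unisolvence of $\{x_i\}$ together with the previous paragraph gives $P(u,X)=\omega_H(u,X)$ for all $(u,X)\in\M'\times\A^n$. Since $\M'$ is Zariski open and dense in $\M$ and since Zariski closure agrees with Euclidean closure for constructible sets, $\M'\times\A^n$ is strongly dense in $\M\times\A^n$; both $P$ and $\omega_H$ are strongly continuous by Theorem \ref{theorem 2}, hence agree throughout $\M\times\A^n$. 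This identifies the intermediate result at the division node with $P$ and closes the induction.
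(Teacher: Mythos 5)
Your proof is correct, and it arrives at the same two closing moves the paper uses (restricting the robust extension to slices $\M\times\{x_i\}$ and interpolating to extract the coefficients), but the route to the crucial intermediate fact — that each intermediate result is a polynomial in $X$ of uniformly bounded degree — is genuinely different. The paper's proof is very compact: it reduces at once to $\M$ irreducible and asserts that condition $(i)$ of Definition~\ref{def: geometrically robust map} (module--finiteness of $\C[\ol{\M}\times\A^n]_{\mathfrak{M}_{(u,x)}}[G_\rho]$ at every point) together with irreducibility forces $G_\rho\in\C(\ol{\M})[X_1,\dots,X_n]$; this is a purely algebraic ``no poles along any divisor'' argument and it works verbatim over any algebraically closed field. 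You instead run a structural induction on the circuit, isolate the division node, and obtain polynomiality of each slice $\omega_H(u,\cdot)$ by passing through Theorem~\ref{theorem 2} (geometric robustness $\Leftrightarrow$ strong continuity) and a complex-analytic Riemann-type extension argument, with the inductive hypothesis supplying the uniform degree bound that Lagrange interpolation needs; a Zariski/Euclidean density argument then glues the interpolant to $\omega_H$ on all of $\M\times\A^n$. Your version is longer but more self-contained and makes the mechanism behind the paper's one-line algebraic claim explicit; the trade-off is that it is bound to the field $\C$ (both Theorem~\ref{theorem 2} and the continuity argument are analytic), and the induction, which the paper avoids entirely by invoking robustness of the intermediate result at an arbitrary node directly, is load-bearing for you only because it delivers the degree bound.

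Two small points worth keeping in mind if you polish this. First, the paper opens by reducing to $\M$ irreducible; your argument sidesteps that, but you should make sure the density claim for $\M'$ is phrased so that it handles each irreducible component of $\M$ separately, since consistency is a condition imposed component by component. Second, the cancellation-of-GCD step must be carried out for each \emph{fixed} $u\in\M'$ in $\C[X]$, not once and for all in $\C(\ol{\M})[X]$; your text already reads that way, but it is the place a reader is most likely to misread the quantifier.
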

\begin{proof}
Without loss of generality we may assume that $\M$ is irreducible. Let $\rho$ be a node of $\beta$ which computes the intermediate result $G_{\rho}:\M\times\A^n\to\A^1$. Definition 6 $(i)$ and the irreducibility of $\mathcal{M}$ imply that $G_{\rho}$ is a polynomial of $\C(\MM)[X_1,\dots,X_n]$. Observe that any $x\in\A^n$ induces a geometrically robust constructible map $\M\to\A^1$ whose value at the point $u\in\M$ is $G_{\rho}(u,x)$. Using interpolation at suitable points of $\A^n$, we see that the coefficients of the polynomial $G_{\rho}$ are geometrically robust constructible functions with domain of definition $\M$.
\end{proof}

\enter
The statement of this lemma should not lead to confusions with the notion of an essentially division--free parameterized circuit. We say just that the intermediate results of $\beta$ are polynomials in $X_1,\dots , X_n$ and do not restrict the type of arithmetic operations contained in $\beta$ (as we did defining the notion of an essentially division--free parameterized circuit). 

Whether a division of a polynomial by one of its factors may always be substituted efficiently by additions and multiplications is an important issue in Theoretical Computer Science (compare \cite{Strassen73}).

To our parameterized arithmetic circuit $\beta$ we may associate different complexity measures and models. In this paper we shall mainly be concerned with \emph{sequential computing time}, measured by the \emph{size} of $\beta$. Here we refer with ``size'' to the number of internal nodes of $\beta$ which count for the given complexity measure. Our basic complexity measure is the \emph{non--scalar} one (also called \emph{Ostrowski measure}) over the ground field $\C$. This means that we count, at unit costs, only essential multiplications and divisions (involving basic parameters or input variables in both arguments in the case of a multiplication and in the second argument in the case of a division), whereas $\C$--linear operations are free (see \cite{Burgisser97} for details). 

\subsubsection{Operations with robust parameterized arithmetic circuits}

\paragraph{The operation join}\ws\ws

Let $\gamma_1$ and $\gamma_2$ be two robust parameterized arithmetic circuits with parameter domain $\M$ and suppose that there is given a one--to--one correspondence $\lambda$ which identifies the output nodes of $\gamma_1$ with the input nodes of $\gamma_2$ (thus they must have the same number). Using this identification we may now join the circuit $\gamma_1$ with the circuit $\gamma_2$ in order to obtain a new parameterized arithmetic circuit $\gamma_2*_{\lambda}\gamma_1$ with parameter domain $\M$. The circuit $\gamma_2*_{\lambda}\gamma_1$ has the same input nodes as $\gamma_1$ and the same output nodes as $\gamma_2$ and one deduces easily from Lemma \ref{lemma intermediate results} and Corollary \ref{proposition 1} that the circuit $\gamma_2*_{\lambda}\gamma_1$ is robust and represents a composition of the rational maps defined by $\gamma_1$ and $\gamma_2$, if $\gamma_2*_{\lambda}\gamma_1$ is consistent. The (consistent) circuit $\gamma_2*_{\lambda}\gamma_1$ is called the (consistent) \emph{join} of $\gamma_1$ with $\gamma_2$.

Observe that the final results of a given robust parameterized arithmetic circuit may constitute a vector of parameters. The join of such a circuit with another robust parameterized arithmetic circuit, say $\beta$, is again a robust parameterized arithmetic circuit which is called an \emph{evaluation} of $\beta$. Hence, mutatis mutandis, the notion of join of two routines includes also the case of circuit evaluation.

\paragraph{The operations reduction and broadcasting}\ws\ws

We describe now how, based on its semantics, a given parameterized arithmetic circuit $\beta$ with parameter domain $\M$ may be rewritten as a new circuit over $\M$ which computes the same final results as $\beta$.

The resulting two rewriting procedures, called \emph{reduction} and \emph{broadcasting}, will neither be unique, nor generally confluent. To help understanding, the reader may suppose that there is given an (efficient) algorithm which allows identity checking between intermediate results of $\beta$. However, we shall not make explicit reference to this assumption. We are now going to explain the first rewriting procedure.

Suppose that the parameterized arithmetic circuit $\beta$ computes at two different nodes, say $\rho$ and $\rho'$, the same intermediate result. Assume first that $\rho$ neither depends on $\rho'$, nor $\rho'$ on $\rho$. Then we may erase $\rho'$ and its two ingoing edges (if $\rho'$ is an internal node) and draw an outgoing edge from $\rho$ to any other node of $\beta$ which is reached by an outgoing edge of $\rho'$. If $\rho'$ is an output node, we label $\rho$ also as output node. Observe that in this manner a possible indexing of the output nodes of $\beta$ may become changed but not the final results of $\beta$ themselves.

Suppose now that $\rho'$ depends on $\rho$. Since the DAG $\beta$ is acyclic, $\rho$ does not depend on $\rho'$. We may now proceed in the same way as before, erasing the node $\rho'$.

Let $\beta'$ be the parameterized arithmetic circuit obtained, as described before, by erasing the node $\rho'$. Then we call $\beta'$ a \emph{reduction} of $\beta$ and call the way we obtained $\beta'$ from $\beta$ a \emph{reduction step}. A \emph{reduction procedure} is a sequence of successive reduction steps.

One sees now easily that a reduction procedure applied to $\beta$ produces a new parameterized arithmetic circuit $\beta^*$ (also called a \emph{reduction} of $\beta$) with the same basic parameter and input nodes, which computes the same final results as $\beta$ (although their possible indexing may be changed). Moreover, if $\beta$ is a robust parameterized circuit, then $\beta^*$ is robust too. Observe also that in the case of robust parameterized circuits our reduction commutes with restriction. 

Now we introduce the second rewriting procedure.

Let assumptions and notations be as before and let be given a set $P$ of nodes of $\beta$ and a robust parameterized arithmetic circuit $\gamma$ with parameter domain $\mathcal{M}$ and $\# P$ input nodes, namely for each $\rho\in P$ one which becomes labelled by a new input variable $Y_{\rho}$. We obtain a new parameterized arithmetic circuit, denoted by $\gamma*_P\beta$, when we join $\gamma$ with $\beta$, replacing for each $\rho\in P$ the input node of $\gamma$, which is labelled by the variable $Y_{\rho}$, by the node $\rho$ of $\beta$. The output nodes of $\beta$ constitute also the output nodes of $\gamma*_P\beta$. Thus $\beta$ and $\gamma*_P\beta$ compute the same final results. Observe that $\gamma*_P\beta$ is robust if it is consistent. We call the circuit $\gamma*_P\beta$ and all its reductions \emph{broadcastings} of $\beta$. Thus broadcasting a robust parameterized arithmetic circuit means rewriting it using only valid polynomial identities.

If we consider arithmetic circuits as computer programs, then reduction and broadcasting represent a kind of program transformations.

\subsubsection{A specification language for circuits}
\label{sec: A specification language for circuits}
Computer programs (or ``programmable algorithms'') written in high level languages are not the same thing as just ``algorithms'' in Complexity Theory. Whereas in the uniform view algorithms become implemented by suitable machine models and in the non--uniform view by devices like circuits; specifications and correctness proofs are not treated by the general theory, but only, if necessary, outside of it in a case--by--case ad--hoc manner. The meaning of ``algorithm'' in Complexity Theory is therefore of syntactic nature. 

On the other hand, computer programs, as well as their subroutines (modules) include specifications and correctness proofs, typically written in languages organized by a hierarchy of different abstraction levels. In this sense \emph{programmable algorithms} become equipped with semantics. This is probably the main difference between Complexity Theory and Software Engineering.

In this paper, we are only interested in algorithms which in some sense are programmable. The routines of our computation model will operate on parameterized arithmetic circuits (see Section \ref{sec:Model-discussion}). Therefore we are now going to fix a (many--sorted) first--order specification language $\mathcal{L}$ for these circuits.

The language $\mathcal{L}$ will include the following non--logical symbols:
\begin{enumerate}
	\item[-] $0,1,+,-,\times$, and a constant for each complex number,
	\item[-] variables
							$$n_1,\dots,n_s\dots $$
							$$\alpha^{(1)},\dots,\alpha^{(t)}\dots $$
							$$\beta_1,\dots,\beta_k\dots $$ 
							$$\rho_1,\dots,\rho_l\dots $$ 
							$$\mathcal{M}_1,\dots,\mathcal{M}_k\dots $$
							$$U^{(1)},\dots,U^{(m)}\dots $$ 
							$$X^{(1)},\dots,X^{(h)}\dots $$ 
							$$Y^{(1)},\dots,Y^{(q)}\dots $$ 
							
to denote non--negative integers and vectors of them, robust parameterized arithmetic circuits, their nodes, their parameter domains, their parameter instances, their input variable vectors and instances of input variable vectors in suitable affine spaces, 
	\item[-] suitable binary predicate symbols to express relations like ``$\rho$ is a node of the circuit $\beta$'', ``multiplication is the label of the node $\rho$ of the circuit $\beta$'', ``$\mathcal{M}$ is the parameter domain of the circuit $\beta$'', ``$U$ is a parameter instance of the circuit $\beta$'', ``$r$ is a non--negative integer and the vector length of $U$ is $r$'', ``$X$ is the input variable vector of the circuit $\beta$'' and ``$n$ is a non--negative integer and the vector length of $X$ is $n$'',
	\item[-] a ternary predicate symbol to express ``$\rho_1$ and $\rho_2$ are nodes of the circuit $\beta$ and there is an edge of $\beta$ from $\rho_1$ to $\rho_2$'',
	\item[-] binary function symbols to express ``$U$ is a parameter instance, $k$ is a natural number and $U_k$ is the $k$--th entry of $U$'' and ``$X$ is an input variable vector, $n$ is a natural number and $X_n$ is the $n$--th entry of $X$'' and ``$Y$ is a variable vector instance, $n$ is a natural number and $Y_n$ is the $n$--th entry of $Y$'',
	\item[-] a unary function and a binary predicate symbol to express ``the set of output nodes of the circuit $\beta$'' and ``$\rho$ is an output node of the circuit $\beta$''
	\item[-] a quaternary function symbol $G_{\rho}(\beta;U;X)$ to express ``$\rho$ is a node of the circuit $\beta$, $U$ is a parameter instance and $X$ is the input variable vector of $\beta$ and $G_{\rho}(\beta;U;X)$ is the intermediate result of $\beta$ at the node $\rho$ and the parameter instance $U$'',
	
	\item[-] a predicate symbol for equality for any of the sorts just introduced. 
\end{enumerate}

For the treatment of non--negative integers we add the Presburger arithmetic to our first--order specification language $\mathcal{L}$. 

At our convenience we may add new function and predicate symbols and variable sorts to $\mathcal{L}$. Typical examples are for $\beta$ a circuit, $U$ a parameter instance, $X$ the input variable vector and $\rho,\rho_1,\dots,\rho_m$ nodes of $\beta$:\\
``degree of $G_{\rho}(\beta;U;X)$'' and ``the vector lengths of $X$ and $Y$ are equal (say $n$) and $Y$ is a point of the closed subvariety of $\A^n$ defined by the polynomials $G_{\rho_1}($$\beta;U;X),$ $\dots,$ $G_{\rho_m}(\beta;U;X)$''.

In the same spirit, we may increase the expressive power of $\mathcal{L}$ in order to be able to express for a robust parameterized circuit $\beta$ with irreducible parameter domain, $U$ a parameter instance, $X$ the input variable vector, $\rho$ a node of $\beta$ and $\alpha$ a vector of non--negative integers of the same length as $X$ (say $n$), ``the coefficient of the monomial $X^{\alpha}$ occurring in the polynomial $G_{\rho}(\beta;U;X)$'' (recall Lemma \ref{lemma intermediate results}). Here we denote for $X:=(X_1,\dots,X_n)$ and $\alpha:=(\alpha_1,\dots,\alpha_n)$ by $X^{\alpha}$ the monomial $X^{\alpha}:=X_1^{\alpha_1},\dots,X_n^{\alpha_n}$.

The semantics of the specification language $\mathcal{L}$ is determined by the universe of all robust parameterized arithmetic circuits, where we interpret all variables, function symbols and predicates as explained before. We call this universe the \emph{standard model} of $\mathcal{L}$. The set of all closed formulas of $\mathcal{L}$ which are true in this model form the \emph{elementary theory} of $\mathcal{L}$. 

\subsection{Generic computations}
\label{sec:Model-generic computations}

In the sequel, we shall use ordinary arithmetic circuits over $\C$ as \emph{generic computations} \cite{Burgisser97} (also called \emph{computation schemes} in \cite{Hei89}). The indegree zero nodes of these arithmetic circuits are labelled by scalars and parameter and input variables. 

The aim is to represent different parameterized arithmetic circuits of similar size and appearance by different specializations (i.e., instantiations) of the parameter variables in one and the same generic computation. For a suitable specialization of the parameter variables, the original parameterized arithmetic circuit may then be recovered by an appropriate reduction process applied to the specialized generic computation.

This alternative view of parameterized arithmetic circuits will be fundamental for the design of routines of the branching--free computation model we are going to describe in Section \ref{sec:Model-discussion-simplified}. The routines of our computation model will operate on robust parameterized arithmetic circuits and their basic ingredients will be subroutines which calculate parameter instances of suitable, by the model previously fixed, generic computations. These generic computations will be organized in finitely many families which will only depend on a constant number of discrete parameters. These discrete families constitute the basic building block of our model for branching--free computation.

\enter
We shall now exemplify these abstract considerations in the concrete situation of the given parameterized arithmetic circuit $\beta$. Mutatis mutandis we shall follow the exposition of \cite{Krick96}, Section 2. Let $l,L_0,\dots,L_{l+1}$ with $L_0\geq r+n+1$ and $L_{l+1}\geq q$ be given natural numbers. Without loss of generality we may suppose that the non--scalar depth of $\beta$ is positive and at most $l$, and that $\beta$ has an oblivious levelled structure of $l+2$ levels of width at most $L_0,\dots,L_{l+1}$. Let $U_1,\dots,U_r$ be new indeterminates (they will play the role of a set of ``special'' parameter variables which will only be instantiated by $\pi_1,\dots,\pi_r$).

We shall need the following indexed families of ``scalar'' parameter variables (which will only be instantiated by complex numbers):

\begin{enumerate}
	\item[-] for $n+r<j\leq L_0$ the indeterminate $V_j$;
	\item[-] for $1\leq i\leq l$, $1\leq j\leq L_i$, $0\leq h\leq i$, $1\leq k\leq L_h$, the indeterminates $A^{(h,k)}_{i,j}$, $B^{(h,k)}_{i,j}$ and $S_{i,j}$, $T_{i,j}$;
	\item[-] for $1\leq j\leq L_{l+1}$, $1\leq k \leq L_l$ the indeterminate $C^k_j$.
\end{enumerate}

We consider now the following function $Q$ which assigns to every pair $(i,j)$, $1\leq i\leq l$, $1\leq j\leq L_i$ and $(l+1,j)$, $1\leq j\leq L_{l+1}$ the rational expressions defined below:
$$Q_{0,1}:=U_1,\dots,Q_{0,r}:=U_r,$$
$$Q_{0,r+1}:=X_1,\dots,Q_{0,r+n}:=X_n,$$
$$Q_{0,r+n+1}:=V_{r+n+1},\dots, Q_{0,L_0}:=V_{L_0}.$$

For $1\leq i\leq l$ and $1\leq j\leq L_i$ the value $Q_{i,j}$ of the function $Q$ is recursively defined by
$$Q_{i,j}:=S_{i,j}(\sum_{\stackrel{0\leq h<i}{1\leq k\leq L_h}} A^{(h,k)}_{i,j} Q_{h,k}\ws. \sum_{\stackrel{0\leq k'<i}{1\leq k'\leq L_{h'}}} B^{(h',k')}_{i,j} Q_{h',k'} ) \ws+$$
$$\textcolor{white}{Q_{i,j}=}T_{i,j}(\sum_{\stackrel{0\leq h<i}{1\leq k\leq L_h}} A^{(h,k)}_{i,j} Q_{h,k}\ws/ \sum_{\stackrel{0\leq h'<i}{1\leq k'\leq L_{h'}}} B^{(h',k')}_{i,j} Q_{h',k'} ).$$

Finally, for $(l+1,j)$, $1\leq j\leq L_{l+1}$ we define $Q_{(l+1,j)}:=\sum_{1\leq k\leq L_l}C^k_j Q_{l,k}$.
\enter

We interpret the function $Q$ as a (consistent) ordinary arithmetic circuit, say $\Gamma$, over $\Z$ (and hence over $\C$) whose indegree zero nodes are labelled by the ``standard'' input variables $X_1,\dots,X_n$, the special parameter variables $U_1,\dots,U_r$ and the scalar parameter variables just introduced.

We consider first the result of instantiating the scalar parameter variables contained in $\Gamma$ by complex numbers. We call such an instantiation a \emph{specialization} of $\Gamma$. It is determined by a point in a suitable affine space. Not all possible specializations are \emph{consistent}, giving rise to an assignment of a rational function of $\C(U_1,\dots,U_r,X_1,\dots,X_n)$ to each node of $\Gamma$ as intermediate result.

We call the specializations which produce a failing assignment \emph{inconsistent}. If in the context of a given specialization of the scalar parameter variables of $\Gamma$ we instantiate for each index pair $(i,j)$, $1\leq i\leq l$, $1\leq j\leq L_i$ the variables $S_{i,j}$ and $T_{i,j}$ by two different values from $\left\{0,1\right\}$, the labelled directed acyclic graph $\Gamma$ becomes an ordinary arithmetic circuit over $\C$ of non--scalar depth at most $l$ and non--scalar size at most $L_1+\dots+L_l$ with the inputs $U_1,\dots,U_r,X_1,\dots,X_n$.

We may now find a suitable specialization of the circuit $\Gamma$ into a new circuit $\Gamma'$ over $\C$ such that the following condition is satisfied:\\
the (by $\mathcal{M}$) parameterized circuit obtained from $\Gamma'$ by replacing the special parameter variables $U_1,\dots,U_r$ by $\pi_1, \dots, \pi_r$, is consistent and can be reduced to the circuit $\beta$.	

We may consider the circuit $\Gamma$ as a generic computation which allows to recover $\beta$ by means of a suitable specialization of its scalar and special parameter variables into complex numbers and basic parameters $\pi_1, \dots, \pi_r$ and by means of circuit reductions. Moreover, any by $\mathcal{M}$ parameterized, consistent arithmetic circuit of non--scalar depth at most $l$, with inputs $X_1,\dots,X_n$ and $q$ outputs, which has an oblivious level structure with $l+2$ levels of width at most $L_0,\dots,L_{l+1}$, may be recovered from $\Gamma$ by suitable specializations and reductions (see \cite{Burgisser97}, Chapter 9 for more details on generic computations).

\subsection{A model for branching--free computation.}
\label{sec:Model-discussion}

\subsubsection{Requirements to be satisfied by our branching--free computation model. Informal discussion.}
\label{sec:Model-discussion-requirements}

We are now going to introduce a model of branching--free computation with parameterized arithmetic circuits. We shall first require that the routines of this computation model should be well behaved under restrictions of the inputs. We discuss this issue first informally.

Suppose for the moment that our branching--free computation model is already established. Then its routines transform a given robust parameterized arithmetic (input) circuit into another parameterized (output) circuit such that both circuits have the same parameter domain. Well behavedness under restrictions will be a property of circuit transformation that guarantees that the output circuit is still robust. In particular, we wish that the following requirement is satisfied.

Let $\mathcal{A}$ be a routine of our branching--free computation model and consider the previously introduced parameterized circuit $\beta$. Let $\mathcal{N}$ be a constructible subset of $\mathcal{M}$ and suppose that $\beta$ is an admissible input for the routine $\mathcal{A}$. Then $\mathcal{A}$ produces on input $\beta$ a parameterized arithmetic output circuit with parameter domain $\mathcal{M}$ which we denote by $\mathcal{A}(\beta)$. In order to formulate for the routine $\mathcal{A}$ our requirement, we must be able to restrict $\beta$ and $\mathcal{A}(\beta)$ to $\mathcal{N}$. Thus $\beta$ and $\mathcal{A}(\beta)$ should be \emph{robust}, $\beta_{\mathcal{N}}$ should be a consistent admissible input circuit for $\mathcal{A}$ and $\mathcal{A}(\beta_{\mathcal{N}})$ should be consistent too.

Our architectural requirement on the routine $\mathcal{A}$ may now be formulated as follows:

\begin{quote}
\emph{The parameterized arithmetic circuit $\mathcal{A}(\beta_{\mathcal{N}})$ can be recovered from $\mathcal{A}(\beta)$ by restriction to $\mathcal{N}$ and circuit reduction.}	
\end{quote}

Routines which are well behaved under restrictions will automatically satisfy this requirement.

\enter

The routine $\mathcal{A}$ performs with the parameterized arithmetic circuit $\beta$ a transformation whose crucial feature is that only nodes which depend on the inputs $X_1,\dots,X_n$ of $\beta$ become modified, whereas parameter nodes remain substantially preserved. This needs an explicitation. 

Suppose that $\beta$ has $t$ essential parameter nodes. Then the essential parameters (intermediate results) of $\beta$ associated with these nodes define a geometrically robust constructible map $\theta:\mathcal{M}\to\A^t$. The image $\mathcal{T}$ of $\theta$ is a constructible subset of $\A^t$. We require now that, as far as $\mathcal{A}$ performs arithmetic operations with parameters of $\beta$, $\mathcal{A}$ does it only with essential ones, and that all essential parameters of $\mathcal{A}(\beta)$ are obtained in this way. Further we require that there exists a geometrically robust constructible map $\nu$ defined on $\mathcal{T}$ (e.g., a polynomial map) such that the results of these arithmetic operations occur as entries of the composition map $\nu\circ \theta$. From Corollary \ref{proposition 1} we deduce that $\nu\circ\theta$ is a geometrically robust constructible map.  

\enter
Our basic construction method of routines will be recursion. A routine of our computation model which can be obtained in this way is called \emph{recursive}.

Suppose now that $\mathcal{A}$ is a recursive routine of our computation model. Then $\mathcal{A}$ should be organized in such a way that for each internal node $\rho$ of $\beta$, which depends on at least one input, there exists a set of nodes of $\mathcal{A}(\beta)$, also denoted by $\rho$, with the following property:
\\
the elements of the set $\rho$ of nodes of $\mathcal{A}(\beta)$ represent the outcome of the action of $\mathcal{A}$ at the node $\rho$ of $\beta$. 	

We fix now a node $\rho$ of $\beta$ which depends on at least one input. Let $G_{\rho}$ be the intermediate result associated with the node $\rho$ of $\beta$ and let $F_{\rho}$ be a vector whose entries are the intermediate results of $\mathcal{A}(\beta)$ at the nodes contained in the set $\rho$ of nodes of $\mathcal{A}(\beta)$. Thus $F_{\rho}$ is a vector of rational functions in a suitable tuple of (standard) variables, say $X'$. 

Recall that by assumption $\beta$ and $\mathcal{A}(\beta)$ are robust parameterized arithmetic circuits with parameter domain $\mathcal{M}$. Therefore we deduce from Lemma \ref{lemma intermediate results} that $G_{\rho}$ and the entries of $F_{\rho}$ are in fact polynomials in $X_1,\dots,X_n$ and $X'$, respectively, and that their coefficients are geometrically robust functions defined on $\mathcal{M}$. 

As part of our second and main requirement of our computation model we demand now that $\mathcal{A}$ satisfies at the node $\rho$ of $\beta$ the following isoparametricity condition:
\textit{\begin{enumerate}
	\item[(i)] for any two parameter instances $u_1$ and $u_2$ of $\mathcal{M}$ the assumption 
$$G_{\rho}(u_1,X_1,\dots,X_n)=G_{\rho}(u_2,X_1,\dots,X_n)$$
implies 
$$F_{\rho}(u_1,X')=F_{\rho}(u_2,X').$$
\end{enumerate}}

Let $\theta_{\rho}$ be the coefficient vector of $G_{\rho}$ and observe that $\theta_{\rho}$ is a geometrically robust constructible map defined on $\mathcal{M}$, whose image, say $\mathcal{T}_{\rho}$, is an irreducible constructible subset of a suitable affine space.

Since the first--order theory of the algebraically closed field $\C$ admits quantifier elimination, one concludes easily that condition $(i)$ is satisfied if and only if there exists a constructible map $\sigma_{\rho}$ defined on $\mathcal{T}_{\rho}$ such that the composition map $\sigma_{\rho}\circ\theta_{\rho} $ (which is also constructible) represents the coefficient vector of (all entries of) $F_{\rho}$.

In the sequel we shall need that the dependence $\sigma_{\rho}$ of the coefficient vector of $F_{\rho}$ on the coefficient vector of $G_{\rho}$ is in some stronger sense uniform (and not just constructible). Therefore we include the following condition in our requirement:
\textit{\begin{enumerate}
	\item[(ii)] the constructible map $\sigma_{\rho}$ is geometrically robust.
\end{enumerate}}

The map $\sigma_{\rho}$ is uniquely determined by condition $(i)$. Moreover, the map $\sigma_{\rho}$ depends on the (combinatorial) labelled DAG structure of $\beta$ below the node $\rho$, but not directly on the basic parameters $\pi_1,\dots,\pi_r$. This is the essence of the isoparametric nature of conditions $(i)$ and $(ii)$. We shall therefore require that our recursive routine is \emph{isoparametric} in this sense, i.e., that $\mathcal{A}$ satisfies conditions $(i)$ and $(ii)$ at any internal node $\rho$ of $\beta$ which depends at least on one input.

Observe that the geometrically robust constructible map $\sigma_{\rho}$ (which depends on $\beta$ as well as on $\rho$) is not an artifact, but emerges naturally from the recursive construction of a circuit semantic within the paradigm of object--oriented programming. To explain this, let notations and assumptions be as before and suppose that $\mathcal{A}$ is a isoparametric recursive routine of our model and that we apply $\mathcal{A}$ to the robust parameterized arithmetic circuit $\beta$. Let $\rho$ again be a node of $\beta$ which depends at least on one input. Let $u$ be a parameter instance of $\mathcal{M}$ and denote by $\beta^{(u)}$, $G_{\rho}^{(u)}$, $\mathcal{A}(\beta)^{(u)}$ and $F_{\rho}^{(u)}$ the instantiations of $\beta$, $G_{\rho}$, $\mathcal{A}(\beta)$ and $F_{\rho}$ at $u$ (observe that the intermediate results of $\beta^{(u)}$ and $\mathcal{A}(\beta)^{(u)}$ are well defined although we do not require that these circuits are consistent). Then the intermediate results of $\mathcal{A}(\beta)^{(u)}$ contained in $F_{\rho}^{(u)}$ depend only on the intermediate result $G_{\rho}^{(u)}$ of $\beta^{(u)}$ and not on the parameter instance $u$ itself. In this spirit we may consider the sets $\Gamma_{\rho}:=\{ G_{\rho}^{(u)} \ws ;\ws u\in\mathcal{M} \}$ and $\Phi_{\rho}:=\{ F_{\rho}^{(u)} \ws ; \ws u\in\mathcal{M} \}$ as abstract data types and $\beta$ and $\mathcal{A}(\beta)$ as syntactic descriptions of two abstraction functions which associate to any concrete object $u\in\mathcal{M}$ the abstract objects $G_{\rho}^{(u)}$ and $F_{\rho}^{(u)}$, respectively. The identity map $id_{\mathcal{M}}:\mathcal{M}\to\mathcal{M}$ induces now an \emph{abstract function} \cite{Meyer00} from $\Gamma_{\rho}$ to $\Phi_{\rho}$, namely $\sigma_{\rho}:\Gamma_{\rho}\to\Phi_{\rho}$. In this terminology, $id_{\mathcal{M}}$ is just an implementation of $\sigma_{\rho}$. If we now consider that each recursive step of the routine $\mathcal{A}$ on input $\beta$ has to be realized by some routine of the object--oriented programming paradigm, we arrive to a situation which requires the existence of a geometrically robust constructible map $\sigma_{\rho}:\Gamma_{\rho}\to\Phi_{\rho}$ as above.

We may interpret the map $\sigma_{\rho}:\Gamma_{\rho}\to\Phi_{\rho}$ also as an ingredient of a specification of the recursive routine $\mathcal{A}$. The map $\sigma_{\rho}$ may be thought as an operational specification which determines $F_{\rho}$ in function of $G_{\rho}$. A weaker specification would be a descriptive one which relates $G_{\rho}$ and $F_{\rho}$ without determining $F_{\rho}$ from $G_{\rho}$ completely.

\enter
In order to motivate the requirement that the recursive routine $\mathcal{A}$ should be isoparametric, we shall consider the following condition for recursive routines which we call \emph{well behavedness under reductions}. 

We only outline here this condition and leave the details until Section \ref{sec:Model-discussion-simplified}.

Suppose now that we apply a reduction procedure to the robust parameterized input circuit $\beta$ producing thus another robust, by $\mathcal{M}$ parameterized circuit $\beta^*$ which computes the same final results as $\beta$. Then the reduced circuit $\beta^*$ should also be an admissible input for the routine $\mathcal{A}$. We call the recursive routine $\mathcal{A}$ \emph{well behaved under reductions} if on input $\beta$ it is possible to extend the given reduction procedure to the output circuit $\mathcal{A}(\beta)$ in such a way, that the extended reduction procedure, applied to $\mathcal{A}(\beta)$, reproduces the circuit $\mathcal{A}(\beta^*)$. 

Obviously well behavedness under reductions limits the structure of $\mathcal{A}(\beta)$. Later, in Section \ref{sec:Model-discussion-simplified}, we shall see that, cum grano salis, any recursive routine, which is well behaved under restrictions and reductions, is necessarily isoparametric. Since well behavedness under restrictions and reductions are very natural quality attributes for routines which transform robust parameterized arithmetic circuits, the weaker requirement, namely that recursive routines should be isoparametric, turns out to be well motivated.

In Section \ref{sec:Model-discussion-simplified}, we shall formally introduce our branching--free computation model. We postpone for then the precise definitions of the notions of well behavedness under restrictions and reductions.

There exists a second reason to limit the recursive routines of our branching--free computation model to isoparametric ones. Isoparametric recursive routines have considerable advantages for program specification and verification by means of Hoare Logics (see \cite{Apt81}). We shall come back to this issue in Section \ref{Isoparametricity and program specification}. 

\subsubsection{The branching--free computation model}
\label{sec:Model-discussion-simplified}

The computation model we are going to introduce in this and the next subsection will be comprehensive enough to capture the essence of all known circuit based elimination algorithms in effective Algebraic Geometry and, mutatis mutandis, also of all other (linear algebra and truncated rewriting) elimination procedures (see Sections \ref{sec:Model-discussion-programs and algorithms}, \ref{sec:Model-Applications}, \cite{Mora03}, \cite{Mora05}, and the references cited therein, and for truncated rewriting methods especially \cite{Dickens}). The only algorithm from symbolic arithmetic circuit manipulation which will escape from our model is the Baur--Strassen gradient computation \cite{Burgisser97}, Chapter 7.2.

In the sequel we shall distinguish sharply between the notions of input variable and parameter and the corresponding categories of circuit nodes.

Input variables, called ``standard'', will occur in parameterized arithmetic circuits and generic computations. The input variables of generic computations will appear subdivided in three sorts, namely as ``parameter'', ``argument'' and ``standard'' input variables.

The branching--free computation model we are going to introduce in this subsection will assume different \emph{shapes}, each shape being determined by a finite number of a priori given \emph{discrete} (i.e., by tuples of natural numbers indexed) families of generic computations. The labels of the inputs of the ordinary arithmetic circuits which represent these generic computations will become subdivided into \emph{parameter}, \emph{argument} and \emph{standard} input variables. We shall use the letters like $U,U',U'',\dots$ and $W,W',W''$ to denote vectors of parameters, $Y,Y',Y'',\dots$ and $Z,Z',Z''$ to denote vectors of argument and $X,X',X'',\dots$ to denote vectors of standard input variables (see Section \ref{sec:Model-generic computations}). 

We shall not write down explicitly the indexations of our generic computations by tuples of natural numbers. Generic computations will simply be distinguished by subscripts and superscripts, if necessary.

Ordinary arithmetic circuits of the form
\begin{center}
\begin{tabular}{l l l}
$R_{X_1}(W_{1};X^{(1)})$, 		& $R_{X_2}(W_{2};X^{(2)})$, 		& $\dots$ \\ 
$R_{X_1}'(W_{1'};X^{(1')})$, 	& $R_{X_2}'(W_{2'};X^{(2')})$, 	& $\dots$ \\ 
$\dots$ 											& $\dots$ 											& $\dots$ \\ 
\end{tabular}
\end{center}
represent a first type of a discrete family of generic computations (for each variable $X_1,X_2,\dots,X_n$ we suppose to have at least one generic computation). Other types of families of generic computations are of the form
\begin{center}
\begin{tabular}{l l l l}
$R_+(W;U,Y;X)$, 		& $R_+'(W';U',Y';X')$, 		& $R_+''(W'';U'',Y'';X'')$	 & $\dots$ \\ 
$R_{._/}(W;U,Y;X)$, & $R_{._/}'(W';U',Y';X')$, & $R_{._/}''(W'';U'',Y'';X'')$	 &$\dots$ \\ 
$R_{add}(W;Y,Z;X)$, & $R_{add}'(W';Y',Z';X')$, & $R_{add}''(W'';Y'',Z'';X'')$	 &$\dots$ \\ 
$R_{mult}(W;Y,Z;X)$,& $R_{mult}'(W';Y',Z';X')$,& $R_{mult}''(W'';Y'',Z'';X'')$	 & $\dots$ \\ 
\end{tabular}
\end{center}
and
\begin{center}
\begin{tabular}{l l l l}
$R_{div}(W;Y,Z;X)$, & $R_{div}'(W';Y',Z';X')$, & $R_{div}''(W'';Y'',Z'';X'')$	& $\dots$. \\ 
\end{tabular}
\end{center}
Here the subscripts refer to addition of, and multiplication or division by a parameter (or scalar) and to essential addition, multiplication and division. A final type of families of generic computations is of the form
$$R(W;Y;X),\ws R'(W';Y';X'),\ws R''(W'';Y'';X''),\dots$$ 
We recall from Section \ref{sec:Model-discussion-requirements} that the objects handled by the routines of any shape of our computation model will always be robust parameterized arithmetic circuits. The inputs of these circuits will only consist of standard variables.

From now on we have in mind a previously fixed shape when we refer to the branching--free computation model we are going to introduce. We start with a given finite set of discrete families of generic computations which constitute a shape as described before.


\paragraph{The notion of well behavedness under restrictions}\ws\ws

A fundamental issue is how we recursively transform a given input circuit into another one with the same parameter domain. During such a transformation we make an iterative use of previously fixed generic computations. On their turn these determine the corresponding \emph{recursive routine} of our branching--free computation model.

We consider again our input circuit $\beta$. We suppose that we have already chosen for each node $\rho$, which depends at least on one of the input variables $X_1,\dots,X_n$, a generic computation 
$$R^{(\rho)}_{X_i}(W_{\rho};X^{(\rho)}),$$
$$R^{(\rho)}_{+}(W_{\rho};U_{\rho},Y_{\rho};X^{(\rho)}),$$
$$R^{(\rho)}_{._/}(W_{\rho};U_{\rho},Y_{\rho};X^{(\rho)}),$$
$$R^{(\rho)}_{add}(W_{\rho};Y_{\rho},Z_{\rho};X^{(\rho)}),$$
$$R^{(\rho)}_{mult}(W_{\rho};Y_{\rho},Z_{\rho};X^{(\rho)}),$$ 
$$R^{(\rho)}_{div}(W_{\rho};Y_{\rho},Z_{\rho};X^{(\rho)}),$$

and that this choice was made according to the label of $\rho$, namely $X_i, 1\leq i\leq n$, or addition of, or multiplication or division by an essential parameter, or essential addition, multiplication or division. Here we suppose that $U_{\rho}$ is a single variable, whereas $W_{\rho},Y_{\rho},Z_{\rho}$ and $X^{(\rho)}$ may be arbitrary vectors of variables.

Furthermore, we suppose that we have already precomputed for each node $\rho$ of $\beta$, which depends at least on one input, a vector $w_{\rho}$ of geometrically robust constructible functions defined on $\mathcal{M}$. If $\rho$ is an input node we assume that $w_{\rho}$ is a vector of complex numbers. Moreover, we assume that the length of $w_{\rho}$ equals the length of the variable vector $W_{\rho}$. We call the entries of $w_{\rho}$ the \emph{parameters at the node $\rho$} of the routine $\mathcal{A}$ applied to the input circuit $\beta$. 

We are now going to develop the routine $\mathcal{A}$ step by step. The routine $\mathcal{A}$ takes over all computations of $\beta$ which involve only parameter nodes, without modifying them.

Consider an arbitrary internal node $\rho$ of $\beta$ which depends at least on one input. The node $\rho$ has two ingoing edges which come from two other nodes of $\beta$, say $\rho_1$ and $\rho_2$. Suppose that the routine $\mathcal{A}$, on input $\beta$, has already computed two results, namely $F_{\rho_1}$ and $F_{\rho_2}$, corresponding to the nodes $\rho_1$ and $\rho_2$. Suppose inductively that these results are vectors of polynomials depending on those standard input variables that occur in the vectors of the form $X^{(\rho')}$, where $\rho'$ is any predecessor node of $\rho$. Furthermore, we assume that the coefficients of these polynomials constitute the entries of a geometrically robust, constructible map defined on $\mathcal{M}$. Finally we suppose that the lengths of the vectors $F_{\rho_1}$ and $Y_{\rho}$ (or $U_{\rho}$) and $F_{\rho_2}$ and $Z_{\rho}$ coincide. 

The parameter vector $w_{\rho}$ of the routine $\mathcal{A}$ forms a geometrically robust, constructible map defined on $\mathcal{M}$, whose image we denote by $\mathcal{K}_{\rho}$. Observe that $\mathcal{K}_{\rho}$ is a constructible subset of the affine space of the same dimension as the length of the vectors $w_{\rho}$ and $W_{\rho}$. Denote by $\kappa_{\rho}$ the vector of the restrictions to $\mathcal{K}_{\rho}$ of the canonical projections of this affine space. We consider $\mathcal{K}_{\rho}$ as a new parameter domain with basic parameters $\kappa_{\rho}$. For the sake of simplicity we suppose that the node $\rho$ is labelled by a multiplication. Thus the corresponding generic computation has the form
\begin{equation}
\label{(1)}
R^{(\rho)}_{._/}(W_{\rho};U_{\rho},Y_{\rho};X^{(\rho)})	
\end{equation}
or
\begin{equation}
\label{(2)}
R^{(\rho)}_{mult}(W_{\rho};Y_{\rho},Z_{\rho};X^{(\rho)}).	
\end{equation}

Let the specialized generic computations 
$$R_{._/}^{(\rho)}(\kappa_{\rho},U_{\rho},Y_{\rho},X^{(\rho)}) \text{\ws and \ws} R_{mult}^{(\rho)}(\kappa_{\rho},Y_{\rho},Z_{\rho},X^{(\rho)})$$
be the by $\mathcal{K}_{\rho}$ parameterized arithmetic circuits obtained by substituting in the generic computations \eqref{(1)} and \eqref{(2)} for the vector of parameter variables $W_{\rho}$ the basic parameters $\kappa_{\rho}$. At the node $\rho$ we shall now make the following requirement on the routine $\mathcal{A}$ applied to the input circuit $\beta$:
\textit{\begin{enumerate}
	\item[(A)]  The by $\mathcal{K}_{\rho}$ parameterized arithmetic circuit which corresponds to the current case, namely
$$R^{(\rho)}_{._/}(\kappa_{\rho};U_{\rho},Y_{\rho};X^{(\rho)})$$
or
$$R^{(\rho)}_{mult}(\kappa_{\rho};Y_{\rho},Z_{\rho};X^{(\rho)}),$$ 
should be consistent and robust.
\end{enumerate}}
 
Observe that the requirement $(A)$ is automatically satisfied if all the generic computations of our shape are realized by totally division--free ordinary arithmetic circuits. 

Assume now that the routine $\mathcal{A}$ applied to the circuit $\beta$ satisfies the requirement $(A)$ at the node $\rho$ of $\beta$.

Let us first suppose that the node $\rho$ is labelled by a multiplication involving an essential parameter. Recall that in this case we assumed earlier that the length of the vector $F_{\rho_1}$ is one, that $F_{\rho_1}$ is an essential parameter of $\beta$ and that the vectors $F_{\rho_2}$ and $Y_{\rho}$ have the same length. Joining now with the generic computation $R^{(\rho)}_{._/}(W_{\rho};U_{\rho},Y_{\rho};X^{(\rho)})$ at $W_{\rho},U_{\rho}$ and $Y_{\rho}$ the previous computations of $w_{\rho},F_{\rho_1}$ and $F_{\rho_2}$, we obtain a parameterized arithmetic circuit with parameter domain $\mathcal{M}$, whose final results are the entries of a vector which we denote by $F_{\rho}$.

Suppose now that the node $\rho$ is labelled by an essential multiplication. Recall again that in this second case we assumed earlier the vectors $F_{\rho_1}$ and $Y_{\rho}$ and $F_{\rho_2}$ and $Z_{\rho}$ have the same length. Joining with the generic computation 
$$R^{(\rho)}_{mult}(W_{\rho};Y_{\rho},Z_{\rho};X^{(\rho)})$$ 
at $W_{\rho}, Y_{\rho}$ and $Z_{\rho}$ the previous computations of $w_{\rho}, F_{\rho_1}$ and $F_{\rho_2}$ we obtain also a parameterized arithmetic circuit with parameter domain $\mathcal{M}$, whose final results are the entries of a vector which we denote again by $F_{\rho}$. 

One deduces easily from our assumptions on $w_{\rho},F_{\rho_1}$ and $F_{\rho_2}$ and from the requirement $(A)$ in combination with Lemma \ref{lemma intermediate results} and Corollary \ref{proposition 1}, that in both cases the resulting parameterized arithmetic circuit is robust if it is consistent. The other possible labellings of the node $\rho$ by arithmetic operations are treated similarly. In particular, in case that $\rho$ is an input node labelled by the variable $X_i, 1\leq i\leq n$, the requirement $(A)$ implies that the ordinary arithmetic circuit $R^{(\rho)}_{X_i}(w_{\rho};X^{(\rho)})$ is consistent and robust and that all its intermediate results are polynomials in $X^{(\rho)}$ over $\C$ (although $R^{(\rho)}_{X_i}(w_{\rho};X^{(\rho)})$ may contain divisions). 

In view of our comments in Section \ref{sec:Model-discussion-requirements}, we call the recursive routine $\mathcal{A}$ (on input $\beta$) \emph{well behaved under restrictions} if the requirement $(A)$ is satisfied at any node $\rho$ of $\beta$ which depends at least on one input and if joining the corresponding generic computation with $w_{\rho}$, $F_{\rho_1}$ and $F_{\rho_2}$ produces a consistent circuit (observe that this last condition is automatically satisfied when the specialized generic computation of $(A)$ is essentially division--free). If the routine $\mathcal{A}$ is well behaved under restrictions, then $\mathcal{A}$ transforms step by step the input circuit $\beta$ into another consistent robust arithmetic circuit, namely $\mathcal{A}(\beta)$, with parameter domain $\mathcal{M}$. Thus, well behavedness under restrictions guarantees that the recursive routine $\mathcal{A}$ transforms robust parameterized arithmetic circuits in robust ones.

As a consequence of the recursive structure of $\mathcal{A}(\beta)$, each node $\rho$ of $\beta$ generates a subcircuit of $\mathcal{A}(\beta)$ which we call the component of $\mathcal{A}(\beta)$ generated by $\rho$. The output nodes of each component of $\mathcal{A}(\beta)$ form the hypernodes of a hypergraph $\mathcal{H}_{\mathcal{A}(\beta)}$ whose hyperedges are given by the paths connecting the nodes of $\mathcal{A}(\beta)$ contained in distinct hypernodes of $\mathcal{H}_{\mathcal{A}(\beta)}$. The hypergraph $\mathcal{H}_{\mathcal{A}(\beta)}$ may be shrunk to the DAG structure of $\beta$ and therefore we denote the hypernodes of $\mathcal{H}_{\mathcal{A}(\beta)}$ in the same way as the nodes of $\beta$. Notice that well behavedness under restrictions is in fact a property which concerns the hypergraph $\mathcal{H}_{\mathcal{A}(\beta)}$. 

We call $\mathcal{A}$ a (recursive) \emph{parameter routine} if $\mathcal{A}$ does not introduce new standard variables. In the previous recursive construction of the routine $\mathcal{A}$, the parameters at the nodes of $\beta$, used for the realization of the circuit $\mathcal{A}(\beta)$, are supposed to be generated by recursive parameter routines.

\paragraph{The notion of isoparametricity}\ws\ws

We are now going to consider another requirement of our recursive routine $\mathcal{A}$, which will lead us to the notion of \emph{isoparametricity} of $\mathcal{A}$. Isoparametricity will guarantee that the recursive routine $\mathcal{A}$ may be specified (see Section \ref{Isoparametricity and program specification}).

Let us turn back to the previous situation at the node $\rho$ of the input circuit $\beta$. Notations and assumptions will be the same as before. From Lemma \ref{lemma intermediate results} we deduce that the intermediate result of $\beta$ associated with the node $\rho$, say $G_{\rho}$, is a polynomial in $X_1,\dots,X_n$ whose coefficients form the entries of a geometrically robust, constructible map defined on $\mathcal{M}$, say $\theta_{\rho}$. Let $\mathcal{T}_{\rho}$ be the image of this map and observe that $\mathcal{T}_{\rho}$ is a constructible subset of a suitable affine space. The intermediate results of the circuit $\mathcal{A}(\beta)$ at the elements of the hypernode $\rho$ of $\mathcal{H}_{\mathcal{A}(\beta)}$ constitute a polynomial vector which we denote by $F_{\rho}$.

We shall now make another requirement on the routine $\mathcal{A}$ at the node $\rho$ of $\beta$.
\textit{\begin{enumerate}
	\item[(B)] There exists a geometrically robust constructible map $\sigma_{\rho}$ defined on $\mathcal{T}_{\rho}$ such that $\sigma_{\rho}\circ \theta_{\rho}$ constitutes the coefficient vector of $F_{\rho}$.
\end{enumerate}}

In view of the comments made in Section \ref{sec:Model-discussion-requirements} we call the recursive routine $\mathcal{A}$ \emph{isoparametric} (on input $\beta$) if requirements $(A)$ and $(B)$ are satisfied at any node $\rho$ of $\beta$ which depends at least on one input. 

\enter
Let assumptions and notations be as before and consider again the node $\rho$ of the circuit $\beta$. Assume that the recursive routine $\mathcal{A}$ is well behaved under restrictions and denote by $\tau_{\rho}$ the coefficient vector of $F_{\rho}$. Observe that $\tau_{\rho}$ is a geometrically robust constructible map defined on $\M$. Assume, furthermore, that $\mathcal{A}$, applied to the circuit $\beta$, fulfils the requirement $(B)$ at $\rho$. Then the topological robustness (which is a consequence of the geometrical robustness) of $\sigma_{\rho}$ implies that the following condition is satisfied:
\textit{
\begin{enumerate}
	\item[$(B')$] Let $(u_k)_{k\in\N}$ be a (not necessarily convergent) sequence of parameter instances $u_k\in\mathcal{M}$ and let $u\in\mathcal{M}$ such that $(\theta_{\rho}(u_k))_{k\in\N}$ converges to $\theta_{\rho}(u)$. Then the sequence $(\tau_{\rho}(u_k))_{k\in\N}$ is bounded. 
\end{enumerate}
}
Suppose now that the recursive routine $\mathcal{A}$ is well behaved under restrictions and satisfies instead of $(B)$ only condition $(B')$ at the node $\rho$ of $\beta$. Let $u\in\mathcal{M}$ be an arbitrary parameter instance. Then Theorem \ref{th: equiv robus} implies that $\tau_{\rho}$ takes on the set $\{ u'\in\mathcal{M}; \theta_{\rho}(u')=\theta_{\rho}(u) \}$ only finitely many values. In particular, for $\mathfrak{M}_u$ being the vanishing ideal of the $\C$--algebra $\C[\theta_{\rho}]$ at $\theta_{\rho}(u)$, the entries of $\tau_{\rho}$ are integral over the local $\C$--algebra $\C[\theta_{\rho}]_{\mathfrak{M}_u}$ (the argument for that relies on Zariski's Main Theorem and is exhibited in \cite{CaGiHeMaPa03}, Sections 3.2 and 5.1). This algebraic characterization implies that for given $u\in\mathcal{M}$ all the sequences $(\tau_{\rho}(u_k))_{k\in\N}$ of condition $(B')$ have only finitely many distinct accumulation points. This shows that requirement $(B)$ and condition $(B')$ are closely related.

Adopting the terminology of \cite{GHMS09}, we call $\mathcal{A}$ \emph{coalescent} (on input $\beta$), if $\mathcal{A}$ is well behaved under restrictions and satisfies condition $(B')$ for any node $\rho$ of $\beta$. Thus isoparametricity implies coalescence for $\mathcal{A}$, but not vice versa. Nevertheless the notions of isoparametricity and coalescence become quite close for recursive routines which are well behaved under restrictions.

\paragraph{The notion of well behavedness under reductions}\ws\ws

Suppose again that the recursive routine $\mathcal{A}$ is well behaved under restrictions. We call $\mathcal{A}$ \emph{well behaved under reductions} (on input $\beta$) if $\mathcal{A}(\beta)$ satisfies the following requirement:

\begin{quote}
\textit{Let $\rho$ and $\rho'$ be distinct nodes of $\beta$ which compute the same intermediate results. Then the intermediate results at the hypernodes $\rho$ and $\rho'$ of $\mathcal{H}_{\mathcal{A}(\beta)}$ are identical. Mutatis mutandis the same is true for the computation of the parameters of $\mathcal{A}$ at any node of $\beta$.
}	
\end{quote}

Assume that the routine $\mathcal{A}$ is recursive and well behaved under reductions. One verifies then easily that, taking into account the hypergraph structure $\mathcal{H}_{\mathcal{A}(\beta)}$ of $\mathcal{A}(\beta)$, any reduction procedure on $\beta$ may canonically be extended to a reduction procedure of $\mathcal{A}(\beta)$.

In Section \ref{sec:Model-discussion-requirements} we claimed that, cum grano salis, the requirement of well behavedness under reductions implies the requirement of isoparametricity for recursive routines. We are going now to prove this.
 
Let notations and assumptions be as before and let us analyze what happens to the recursive routine $\mathcal{A}$ at the node $\rho$ of $\beta$. For this purpose we shall use the following broadcasting argument.

Recall that $G_{\rho}$ and the entries of $F_{\rho}$ are the intermediate results of $\beta$ and $\mathcal{A}(\beta)$ associated with $\rho$, where $\rho$ is interpreted as a node of the input circuit $\beta$ in the first case and as a hypernode of $\mathcal{H}_{\mathcal{A}(\beta)}$ in the second one. Moreover recall that $G_{\rho}$ is a polynomial in $X_1,\dots,X_n$, that the geometrically robust, constructible map $\theta_{\rho}$, defined on $\mathcal{M}$, represents the coefficient vector of $G_{\rho}$ and that the irreducible constructible set $\mathcal{T}_{\rho}$ is the image of $\theta_{\rho}$. Observe that the entries of $\theta_{\rho}$ may be computed from $\pi_1,\dots,\pi_r$ by a robust arithmetic circuit (e.g., by interpolation of $G_{\rho}$ in sufficiently generic points of $\A^n$). We consider now the robust parameterized arithmetic circuit $\gamma_{\rho}$ which realizes the following trivial evaluation of the polynomial $G_{\rho}$:
\begin{enumerate}
	\item[-] compute simultaneously from $\pi_1,\dots,\pi_r$ all entries of $\theta_{\rho}$ and from $X_1,\dots,X_n$ all monomials occurring in $G_{\rho}$ 
	\item[-] compute $G_{\rho}$ as a linear combination of the monomials of $G_{\rho}$ using as coefficients the entries of $\theta_{\rho}$.
\end{enumerate}

The circuit $\gamma_{\rho}$ has a single output node, say $\rho'$, which computes the polynomial $G_{\rho}$.

Now we paste, as disjointly as possible, the circuit $\gamma_{\rho}$ to the circuit $\beta$ obtaining thus a new robust, parameterized arithmetic circuit $\beta_{\rho}$ with parameter domain $\mathcal{M}$. Observe that $\beta_{\rho}$ contains $\beta$ and $\gamma_{\rho}$ as subcircuits and that $\rho$ and $\rho'$ are distinct nodes of $\beta_{\rho}$ which compute the same intermediate result, namely $G_{\rho}$. The entries of $\theta_{\rho}$ are essential parameters of $\gamma_{\rho}$ and hence also of $\beta_{\rho}$. We suppose now that $\beta_{\rho}$ is, like $\beta$, an admissible input for the recursive routine $\mathcal{A}$. Let $F_{\rho'}$ be a vector whose entries are the intermediate results at the nodes of $\mathcal{A}(\beta_{\rho})$ contained in the hypernode $\rho'$ of $\mathcal{H}_{\mathcal{A}(\beta_{\rho})}$. Analyzing now how $\mathcal{A}$ operates on the structure of the subcircuit $\gamma_{\rho}$ of $\beta_{\rho}$, we see immediately that there exists a geometrically robust constructible map $\sigma_{\rho}$ defined on $\mathcal{T}_{\rho}$ such that the composition map $\sigma_{\rho}\circ\theta_{\rho}$ constitutes the coefficient vector of $F_{\rho'}$. Since by assumption the recursive routine $\mathcal{A}$ is well behaved under reductions and the intermediate results of $\beta_{\rho}$ at the nodes $\rho$ and $\rho'$ consist of the same polynomial $G_{\rho}$, we conclude that the intermediate results at the hypernodes $\rho$ and $\rho'$ of $\mathcal{H}_{\mathcal{A}(\beta_{\rho})}$ are also the same. Therefore we may assume without loss of generality $F_{\rho}=F_{\rho'}$. Hence the geometrically robust, constructible map $\sigma_{\rho}\circ\theta_{\rho}$ constitutes the coefficient vector of $F_{\rho}$.

This proves that the recursive routine $\mathcal{A}$ satisfies, on input $\beta$ and at the node $\rho$, the requirement $(B)$. Since $\beta$ was an arbitrary admissible input circuit for the recursive routine $\mathcal{A}$ and $\rho$ was an arbitrary node of $\beta$ which depends on at least one input, we may conclude that $\mathcal{A}$ is isoparametric. The only assumption we made to draw this conclusion was that the extended circuit $\beta_{\rho}$ is an admissible input for the routine $\mathcal{A}$. This conclusion is however not very restrictive because $\beta$ and $\beta_{\rho}$ compute the same final results.

\paragraph{Isoparametricity and program specification}\ws\ws
\label{Isoparametricity and program specification}

In Section \ref{sec:Model-discussion-requirements}, we mentioned that isoparametric routines are advantageous for program specification and verification. We are now going to explain this.

Let notations and assumptions be as before and let in particular $\mathcal{A}$ be a recursive routine of our computation model which behaves well under restrictions. Assume that $\beta$ is an admissible input for $\mathcal{A}$ and consider the specification language $\mathcal{L}$ introduced in Section \ref{sec: A specification language for circuits}. Suppose that the routine $\mathcal{A}$ is given by an asserted program $\Pi$ formulated in the elementary Hoare Logics of $\mathcal{L}$ (\cite{Apt81}). The standard model of the elementary theory of $\mathcal{L}$ provides us with the states which define the semantics of $\Pi$. The asserted program $\Pi$ represents the routine $\mathcal{A}$ as a loop which transforms node by node the labelled DAG structure of $\beta$ into the labelled DAG structure of $\mathcal{A}(\beta)$. 

At each step of the loop a purely syntactic action, namely a graph manipulation, takes place. This action consists of the join of two or more labelled directed acyclic graphs. Simultaneously, in order to guarantee the correctness of the program $\Pi$, a loop invariant, formulated in our specification language $\mathcal{L}$, has to be satisfied. 

This involves the semantics of $\mathcal{L}$ consisting of the universe of all robust parameterized arithmetic circuits. A loop invariant as above is given by a formula $\bigwedge(\beta_1,\beta_2,\mathcal{M}_1,\rho_1)$ of $\mathcal{L}$ containing the free variables $\beta_1$, $\beta_2$ for circuits over the same parameter domain $\M_1$ and $\rho_1$ for a node of $\beta_1$ and a linked hypernode of $\beta_2$, such that these free variables become instantiated by $\beta$, $\mathcal{A}(\beta)$, $\mathcal{M}$ and the node $\rho$ of $\beta$ or the hypernode $\rho$ of $\mathcal{A}(\beta)$. The variables $U^{(1)},\dots,U^{(m)},\dots$ and the standard input variable vectors $X^{(1)},\dots,X^{(h)},\dots$ occur only bounded in $\bigwedge(\beta_1,\beta_2,\mathcal{M}_1,\rho_1)$ and the variables $\rho_1,\dots,\rho_l,\dots$ occur all bounded except one, namely $\rho_1$.

For $\pi:=(\pi_1,\dots,\pi_r)$ and given variables $X,X'$ and $\rho$ expressing a parameter instantiation, the input variable vectors of $\beta$ and $\mathcal{A}(\beta)$ and a node of $\beta$, we denote by $G_{\rho}(\beta;\pi;X)$ and $F_{\rho}(\mathcal{A}(\beta);\pi;X')$ the function symbols (or vectors of them) which express the intermediate results of $\beta$ or $\mathcal{A}(\beta)$ corresponding to $\rho$.

We require now that any formula of $\mathcal{L}$ built up by $G_{\rho_1},\dots,G_{\rho_l}$ and $F_{\rho_1'},\dots,F_{\rho_{l'}'}$, and containing only $\beta$, $\mathcal{M}$ and $\rho_1$ as free variables is equivalent to a formula built up only by $G_{\rho_1},\dots,G_{\rho_l}$ and $G_{\rho_1'},\dots,G_{\rho_{l'}'}$. This implies that in $\mathcal{L}$ the intermediate result $F_{\rho}$ of $\mathcal{A}(\beta)$ is definable in terms of the intermediate result $G_{\rho}$ of $\beta$. Applied to the node $\rho$ of the concrete circuit $\beta$ with parameter domain $\mathcal{M}$, this means that for $\theta_{\rho}$ and $\tau_{\rho}$ being the coefficient vectors of $G_{\rho}(\beta,\pi,X)$ and $F_{\rho}(\mathcal{A}(\beta),\pi,X')$ and $\mathcal{T}_{\rho}$ being the image of $\theta_{\rho}$, there exists a constructible map $\sigma_{\rho}$ with domain of definition $\mathcal{T}_{\rho}$ such that $\tau_{\rho}=\sigma_{\rho}\circ \theta_{\rho}$ holds. In particular, for $u',u''\in\mathcal{M}$ the assumption $\theta_{\rho}(u')=\theta_{\rho}(u'')$ implies $\tau_{\rho}(u')=\tau_{\rho}(u'')$. 

For the modelling of elimination algorithms this is a reasonable requirement (see Section \ref{sec:Model-Applications}). If we require additionally that the transformation of $G_{\rho}(\beta,\pi,X)$ into $F_{\rho}(\mathcal{A}(\beta),\pi,X')$ is continuous, then the constructible map $\sigma_{\rho}$ has to be geometrically robust (see Section \ref{sec:Model-discussion-requirements}).

In terms of the specification language $\mathcal{L}$, this reasoning may be formulated as follows.

Let $\beta_1,\beta_2,\mathcal{M}_1$ and $\rho_1$ be variables for robust parameterized arithmetic circuits, their parameter domains and their (hyper)nodes. We assume that there exist a formula 
$$\Omega(\beta_1,\beta_2,\mathcal{M}_1,\rho_1)$$
in the free variables $\beta_1,\beta_2,\mathcal{M}_1,\rho_1$ such that for any concrete, for $\mathcal{A}$ admissible circuit $\beta$ with parameter domain $\mathcal{M}$ and basic parameter vector $\pi$ and for any node $\rho$ of $\beta$ the following condition is satisfied:
\begin{itemize} 
	\item[$(*)$] $\Omega(\beta,\mathcal{A}(\beta),\mathcal{M},\rho)$ determines the polynomial $F_{\rho}(\mathcal{A}(\beta),\pi,X')$\\ in terms of $G_{\rho}(\beta,\pi,X)$. 
\end{itemize}

If $\mathcal{L}$ and $\mathcal{A}$ satisfy this assumption we say in the spirit of Hoare Logics that $\mathcal{L}$ is \emph{expressive} for the routine $\mathcal{A}$. 

Observe that condition $(*)$ guarantees that a postcondition for the circuit $\mathcal{A}(\beta)$ can always be translated into an equivalent precondition for the circuit $\beta$. 

\paragraph{Operations with routines}\ws\ws

Let $\mathcal{A}$ and $\mathcal{B}$ be recursive routines as before and suppose that they are well behaved under restrictions and isoparametric or even well behaved under reductions. Assume that $\mathcal{A}(\beta)$ is an admissible input for $\mathcal{B}$. We define the composed routine $\mathcal{B}\circ\mathcal{A}$ in such a way that $(\mathcal{B}\circ\mathcal{A})(\beta)$ becomes the parameterized arithmetic circuit $\mathcal{B}(\mathcal{A}(\beta))$. Since the routines $\mathcal{A}$ and $\mathcal{B}$ are well behaved under restrictions, we see easily that $(\mathcal{B}\circ\mathcal{A})(\beta)$ is a consistent, robust parameterized arithmetic circuit with parameter domain $\mathcal{M}$. From Lemma \ref{lemma intermediate results} and Corollary \ref{proposition 1} we deduce that $\mathcal{B}\circ\mathcal{A}$ is a isoparametric recursive routine if $\mathcal{A}$ and $\mathcal{B}$ are isoparametric. In case that $\mathcal{A}$ and $\mathcal{B}$ are well behaved under reductions, one verifies immediately that $\mathcal{B}\circ\mathcal{A}$ is also well behaved under reductions. Therefore, under these assumptions, we shall consider $\mathcal{B}\circ\mathcal{A}$ also as a routine of our computation model. 

Unfortunately, the composition of two arbitrary coalescent recursive routines need not to be coalescent. Therefore we shall focus in the sequel our attention on isoparametric recursive routines as basic building blocks of the branching--free computation model we are going to introduce. 
  
The identity routine is trivially well behaved under restrictions and reductions and in particular isoparametric.

Let $\mathcal{A}$ and $\mathcal{B}$ be two routines of our computation model and suppose for the sake of simplicity that they are recursive and well behaved under restrictions. Assume that the robust parameterized arithmetic circuit $\beta$ is an admissible input for $\mathcal{A}$ and $\mathcal{B}$ and that there is given a one--to--one correspondence $\lambda$ which identifies the output nodes of $\mathcal{A}(\beta)$ with the input nodes of $\mathcal{B}(\beta)$. Often, for a given input circuit $\beta$, the correspondence $\lambda$ is clear by the context. If we limit ourselves to input circuits $\beta$ where this occurs, we obtain from $\mathcal{A}$ and $\mathcal{B}$ a new routine, called their \emph{join}, which transforms the input circuit $\beta$ into the output circuit $\mathcal{B}(\beta)*_{\lambda}\mathcal{A}(\beta)$ (here we suppose that $\mathcal{B}(\beta)*_{\lambda}\mathcal{A}(\beta)$ is consistent). Analyzing now $\mathcal{B}(\beta)*_{\lambda}\mathcal{A}(\beta)$, we see that the join of $\mathcal{A}$ with $\mathcal{B}$ is well behaved under restrictions in the most obvious sense. Since by assumption the routines $\mathcal{A}$ and $\mathcal{B}$ are recursive, the circuits $\mathcal{A}(\beta)$ and $\mathcal{B}(\beta)$ inherit from $\beta$ a superstructure given by the hypergraphs $\mathcal{H}_{\mathcal{A}(\beta)}$ and $\mathcal{H}_{\mathcal{B}(\beta)}$. Analyzing again this situation, we see that any reduction procedure on $\beta$ can be extended in a canonical way to the circuit $\mathcal{B}(\beta)*_{\lambda}\mathcal{A}(\beta)$. This means that the join of $\mathcal{A}$ with $\mathcal{B}$ is also well behaved under reductions if the same is true for $\mathcal{A}$ and $\mathcal{B}$. More caution is at order with the notions of isoparametricity and coalescence. In a strict sense, the join of two isoparametric or coalescent recursive routines $\mathcal{A}$ and $\mathcal{B}$ is not necessarily isoparametric or coalescent. However, conditions $(B)$ or $(B')$ are still satisfied between the output nodes of $\beta$ and $\mathcal{B}(\beta)*_{\lambda}\mathcal{A}(\beta)$. A routine with one of these two properties is called \emph{output isoparametric} or \emph{output coalescent}, respectively.

The \emph{union} of the routines $\mathcal{A}$ and $\mathcal{B}$ assigns to the input circuit $\beta$ the juxtaposition of $\mathcal{A}(\beta)$ and $\mathcal{B}(\beta)$. Thus, on input $\beta$, the final results of the union of $\mathcal{A}$ and $\mathcal{B}$ are the final results of $\mathcal{A}(\beta)$ and $\mathcal{B}(\beta)$ (taken separately in case of ambiguity). The union of $\mathcal{A}$ and $\mathcal{B}$ behaves well under restrictions and reductions and is isoparametric if the same is true for $\mathcal{A}$ and $\mathcal{B}$. 

Observe also that for a recursive routine $\mathcal{A}$ which behaves well under restrictions and reductions the following holds: let $\beta$ be a robust parameterized arithmetic circuit that broadcasts to a circuit $\beta^*$ and assume that $\beta$ and $\beta^*$ are admissible circuits for $\mathcal{A}$. Then $\mathcal{A}(\beta)$ broadcasts to $\mathcal{A}(\beta^*)$.

\paragraph{End of the description of the branching--free computation model}\ws\ws

From these considerations we conclude that routines, constructed as before by iterated applications of the operations isoparametric recursion, composition, join and union, are still, in a suitable sense, well behaved under restrictions and output isoparametric. If only recursive routines become involved that behave well under reductions, we may also allow broadcastings at the interface of two such operations.

This remains true when we introduce, as we shall do now, in our computational model the following additional type of routine construction.

Let $\beta$ be the robust, parameterized circuit considered before, and let $R(W;Y;X)$ be a generic computation belonging to our shape list. Let $w_{\beta}$ be a precomputed vector of geometrically robust constructible functions with domain of definition $\M$ and suppose that $w_{\beta}$ and $W$ have the same vector length and that the entries of $w_{\beta}$ are the final results of an output isoparametric parameter routine applied to the circuit $\beta$. Moreover suppose that the final results of $\beta$ form a vector of the same length as $Y$.

Let $\mathcal{K}$ be the image of $w_{\beta}$. Observe that $\mathcal{K}$ is a constructible subset of the affine space which has the same dimension as the vector length of $W$. Denote by $\kappa$ the vector of the restrictions to $\mathcal{K}$ of the canonical projections of this affine space. We denote by $R(\kappa;Y;X)$ the ordinary arithmetic circuit over $\C$ obtained by substituting in the generic computation $R(W;Y;X)$ the vector of parameter variables $W$ by $\kappa$. We shall now make the following requirement: 
\textit{
\begin{enumerate}
	\item[(C)] The ordinary arithmetic circuit $R(\kappa;Y;X)$ should be consistent and robust.
\end{enumerate}
}
Observe that requirement $(C)$ is obsolete when $R(W;Y;X)$ is a totally division--free ordinary arithmetic circuit.

Suppose now that requirement $(C)$ is satisfied. A new routine, say $\mathcal{B}$, is obtained in the following way: on input $\beta$ the routine $\mathcal{B}$ joins with the generic computation $R(W;Y;X)$ at $W$ and $Y$ the previous computation of $w_{\beta}$ and the circuit $\beta$. 

From Lemma \ref{lemma intermediate results} and Corollary \ref{proposition 1} we deduce that the resulting parameterized arithmetic circuit $\mathcal{B}(\beta)$ has parameter domain $\mathcal{M}$ and is robust if it is consistent. We shall therefore require that $\mathcal{B}(\beta)$ is consistent (this condition is automatically satisfied if $R(\kappa;Y;X)$ is essentially division--free). One sees immediately that the routine $\mathcal{B}$ is well behaved under restrictions and reductions and is output isoparametric.

From now on we shall always suppose that all our recursive routines are isoparametric and that requirement $(C)$ is satisfied when we apply this last type of routine construction.

An \emph{elementary routine} of our simplified \emph{branching--free computation model} is finally obtained by the iterated application of all these construction patterns, in particular the last one, isoparametric recursion, composition, join and union. As far as only recursion becomes involved that is well behaved under reductions, we allow also broadcastings and reductions at the interface of two constructions. Of course, the identity routine belongs also to our model. The set of all these routines is therefore closed under these constructions and operations. 

We call an elementary routine \emph{essentially division--free} if it admits as input only essentially division--free, robust parameterized arithmetic circuits and all specialized generic computations used to compose it are essentially division--free. The outputs of essentially division--free elementary routines are always essentially division--free robust circuits. The set of all essentially division--free elementary routines is also closed under the mentioned constructions and operations. 

We have seen that elementary routines are, in a suitable sense, well behaved under restrictions. In the following statement we formulate explicitly the property of an elementary routine to be output isoparametric. This will be fundamental in our subsequent complexity considerations.

\begin{proposition}\label{prop: T subset and composition}
Let $\mathcal{A}$ be an elementary routine of our branching--free computation model. Then $\mathcal{A}$ is output isoparametric. More explicitly, let $\beta$ be a robust, parameterized arithmetic circuit with parameter domain $\mathcal{M}$. Suppose that $\beta$ is an admissible input for $\mathcal{A}$. Let $\theta$ be a geometrically robust, constructible map defined on $\mathcal{M}$ such that $\theta$ represents the coefficient vector of the final results of $\beta$ and let $\mathcal{T}$ be the image of $\theta$. Then $\mathcal{T}$ is a constructible subset of a suitable affine space and there exists a geometrically robust, constructible map $\sigma$ defined on $\mathcal{T}$ such that the composition map $\sigma\circ\theta$ represents the coefficient vector of the final results of $\mathcal{A}(\beta)$.
\end{proposition}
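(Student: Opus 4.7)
My plan is structural induction on the construction of the elementary routine $\mathcal{A}$. Recall that elementary routines are built from the identity routine by iterated application of isoparametric recursion, composition, join, union, the generic--computation construction using some $R(W;Y;X)$ satisfying requirement $(C)$, and---at the interface of two constructions---broadcasting and reduction. For the identity routine one takes $\sigma := \mathrm{id}_{\mathcal{T}}$, and broadcasting or reduction at an interface leaves the final results unchanged, so $\sigma$ is unaffected; the remaining cases carry the content of the argument.

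For isoparametric recursion, I fix an output node $\rho$ of $\beta$, which by well behavedness under restrictions is linked to a hypernode of $\mathcal{H}_{\mathcal{A}(\beta)}$. Since $G_{\rho}$ is one of the final results of $\beta$, its coefficient vector $\theta_{\rho}$ is obtained from $\theta$ by a coordinate projection $\pi_{\rho}:\mathcal{T}\to\mathcal{T}_{\rho}$, so $\theta_{\rho}=\pi_{\rho}\circ\theta$. Requirement $(B)$ then supplies a geometrically robust constructible map $\sigma_{\rho}$ on $\mathcal{T}_{\rho}$ whose composition with $\theta_{\rho}$ is the coefficient vector of $F_{\rho}$; assembling the maps $\sigma_{\rho}\circ\pi_{\rho}$ over all output nodes of $\beta$ yields the required $\sigma$, geometrically robust by the closure properties of Corollary \ref{proposition 1}. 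For a composition $\mathcal{A}=\mathcal{C}\circ\mathcal{B}$ the inductive hypothesis supplies maps $\sigma_{\mathcal{B}}$ and $\sigma_{\mathcal{C}}$, and $\sigma := \sigma_{\mathcal{C}}\circ\sigma_{\mathcal{B}}$ works; join and union are handled analogously, using respectively the identification $\lambda$ of output and input nodes and the Cartesian product construction, in both cases invoking Corollary \ref{proposition 1}.

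The critical case, and the one I expect to require the most care, is when $\mathcal{A}(\beta)$ is obtained from $\beta$ via the generic--computation construction $R(\kappa;Y;X)$. Here $w_{\beta}$ is the output of an output isoparametric parameter routine applied to $\beta$, so by induction there is a geometrically robust map $\sigma_{w}$ with $w_{\beta}=\sigma_{w}\circ\theta$, and in particular the basic parameters $\kappa$ factor through $\theta$ in a geometrically robust way. Requirement $(C)$ ensures that $R(\kappa;Y;X)$ is consistent and robust, so Lemma \ref{lemma intermediate results} guarantees that every final result of $R(\kappa;Y;X)$ is a polynomial in $Y,X$ whose coefficients are geometrically robust constructible functions of $\kappa$ alone. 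Substituting for the $Y$--inputs the final results of $\beta$ (whose coefficients are captured by $\theta$) and expanding, one obtains the coefficient vector of the final results of $\mathcal{A}(\beta)$ as the value of a geometrically robust constructible map evaluated at $(\kappa,\theta)$; precomposing with the geometrically robust map $\theta\mapsto(\sigma_{w}\circ\theta,\theta)$ yields the desired $\sigma$. The main delicacy is verifying that this dependence on $\kappa$ and on the coefficients of $\beta$'s final results really is geometrically robust, which follows by combining Lemma \ref{lemma intermediate results} with the closure of the $\C$--algebra of geometrically robust constructible functions under composition and arithmetic operations, guaranteed by Corollary \ref{proposition 1}.
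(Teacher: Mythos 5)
Your proof is correct and follows exactly the route the paper indicates but declines to spell out: structural induction over the construction patterns of elementary routines, using requirement $(B)$ at output nodes in the recursion case, Lemma \ref{lemma intermediate results} together with requirement $(C)$ in the generic-computation case, and the closure properties of Corollary \ref{proposition 1} for composition, join, union, broadcasting and reduction. The paper explicitly remarks that for a recursive routine the proposition is just requirement $(B)$ at the output nodes and that the rest "follows with some extra work from our previous argumentation"; your write-up supplies precisely that extra work.
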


A complete proof of this proposition follows with some extra work from our
previous argumentation and will be omitted here. In case that $\mathcal{A}$ is a recursive routine, Proposition \ref{prop: T subset and composition} expresses nothing but the requirement $(B)$ applied to the output nodes of $\beta$.

Let assumptions and notations be as in Proposition \ref{prop: T subset and composition} and suppose that there is given a (not necessarily convergent) sequence $(u_k)_{k\in\N}$ of parameter instances $u_k\in\mathcal{M}$ and that there exists a (possibly unknown) parameter instance $u\in\mathcal{M}$ such that the sequence $(\theta(u_k))_{k\in\N}$ converges to $\theta(u)$. In the spirit of \cite{Alder84}, \cite{Lickteig90}, \S A and \cite{Burgisser97} the sequence of (not necessarily consistent) ordinary arithmetic circuits $(\beta^{(u_k)})_{k\in\N}$ represents an \emph{approximative algorithm} for the instantiation of the final results of $\beta$ at $u$. From Theorem \ref{theorem 2} we conclude that the constructible map $\sigma$ is strongly continuous and therefore the sequence $(\mathcal{A}(\beta)^{(u_k)})_{k\in\N}$ represents also an approximative algorithm for the instantiation of the final results of $\mathcal{A}(\beta)$ at $u$.

One sees easily that this property \emph{characterizes} output parametricity of routines which are well behaved under restrictions.

\enter
Let us observe that Proposition \ref{prop: T subset and composition} implies the following result.

\begin{corollary}\label{cor: accumulation point}
Let assumptions and notations be as in Proposition \ref{prop: T subset and composition}. Then the routine $\mathcal{A}$ is output coalescent and satisfies the following condition:
\begin{enumerate}
	\item[$(*)$] Let $u$ be an arbitrary parameter instance of $\mathcal{M}$ and let $\mathfrak{M}_u$ be the vanishing ideal of the $\C$--algebra $\C[\theta]$ at the point $\theta(u)$. Then the entries of the coefficient vector of the final results of $\mathcal{A}(\beta)$ are integral over the local $\C$--algebra $\C[\theta]_{\mathfrak{M}_u}$.
\end{enumerate}
\end{corollary}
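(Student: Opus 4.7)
The plan is to reduce both conclusions to structural properties of the geometrically robust map $\sigma:\mathcal{T}\to\A^m$ furnished by Proposition \ref{prop: T subset and composition}, which factors the coefficient vector $\tau$ of the final results of $\mathcal{A}(\beta)$ as $\tau=\sigma\circ\theta$. Once $\sigma$ is in hand, output coalescence will follow from topological robustness of $\sigma$, while the integrality statement $(*)$ will follow from the algebraic characterization of topological robustness in Theorem \ref{th: equiv robus} together with a pullback via $\theta$.

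First I would establish output coalescence. By Theorem \ref{theorem 2}, $\sigma$ is strongly continuous on $\mathcal{T}$, and hence topologically robust by Lemma \ref{remark strongly continuous}. Let $(u_k)_{k\in\N}$ be any sequence in $\mathcal{M}$ and $u\in\mathcal{M}$ such that $(\theta(u_k))_{k\in\N}$ converges to $\theta(u)$ in $\mathcal{T}$. Since $\theta(u)\in\mathcal{T}$ lies in the domain of $\sigma$, condition $(iii)$ of Definition \ref{def: maps} applied to $\sigma$ forces $(\sigma(\theta(u_k)))_{k\in\N}=(\tau(u_k))_{k\in\N}$ to be bounded. This is precisely condition $(B')$ at the output nodes of $\beta$, so $\mathcal{A}$ is output coalescent.

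Next I would derive $(*)$ by applying Theorem \ref{th: equiv robus} to the weakly continuous, topologically robust constructible map $\sigma$ at the point $\theta(u)\in\mathcal{T}$: the $\C$-algebra $\C[\overline{\mathcal{T}}]_{\mathfrak{M}_{\theta(u)}}[\sigma_1,\dots,\sigma_m]$ is a finite $\C[\overline{\mathcal{T}}]_{\mathfrak{M}_{\theta(u)}}$-module, where $\mathfrak{M}_{\theta(u)}$ denotes the maximal ideal at $\theta(u)$ in $\C[\overline{\mathcal{T}}]$. Each $\sigma_i$ thus satisfies a monic polynomial equation with coefficients in $\C[\overline{\mathcal{T}}]_{\mathfrak{M}_{\theta(u)}}$. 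Pulling back by $\theta$, the evaluation homomorphism $\C[Y_1,\dots,Y_t]\to\C[\theta]$ sending $Y_i$ to $\theta_i$ factors through $\C[\overline{\mathcal{T}}]$ and carries $\mathfrak{M}_{\theta(u)}$ onto $\mathfrak{M}_u$; after localization this induces a $\C$-algebra homomorphism $\C[\overline{\mathcal{T}}]_{\mathfrak{M}_{\theta(u)}}\to\C[\theta]_{\mathfrak{M}_u}$. Applying this pullback to the integral equations satisfied by the $\sigma_i$ yields monic equations for $\tau_i=\sigma_i\circ\theta$ with coefficients in $\C[\theta]_{\mathfrak{M}_u}$, establishing $(*)$.

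The main obstacle is the last bookkeeping step: verifying that pullback by $\theta$ sends the local ring $\C[\overline{\mathcal{T}}]_{\mathfrak{M}_{\theta(u)}}$ compatibly into $\C[\theta]_{\mathfrak{M}_u}$, so that integrality over the source transfers to integrality over the target. This hinges on the observation that $\C[\theta]$ is naturally a quotient of $\C[\overline{\mathcal{T}}]$ (the coordinate ring of the Zariski closure of the image $\mathcal{T}$), and that by construction $\mathfrak{M}_u$ is the image of $\mathfrak{M}_{\theta(u)}$ under this quotient. Once this identification is made explicit, both conclusions follow from Proposition \ref{prop: T subset and composition} together with Theorem \ref{theorem 2}, Lemma \ref{remark strongly continuous}, and Theorem \ref{th: equiv robus}; no new geometric input is needed.
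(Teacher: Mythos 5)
Your proof is correct and fills in, with full technical detail, exactly the argument the paper sketches: both output coalescence and condition $(*)$ are derived from the output isoparametricity of Proposition~\ref{prop: T subset and composition} by exploiting the geometric robustness of the factoring map $\sigma$, and then pulling back along $\theta$. Two cosmetic remarks: the pullback $\theta^{*}:\C[\ol{\mathcal{T}}]\to\C[\theta]$ is in fact an isomorphism rather than merely a surjection (since $\mathcal{T}=\theta(\mathcal{M})$ is Zariski dense in $\ol{\mathcal{T}}$, any $f\in\C[\ol{\mathcal{T}}]$ with $f\circ\theta=0$ is zero), which makes the transfer of integral dependence relations immediate; and for the finiteness step you may quote Definition~\ref{def: geometrically robust map}~$(i)$ directly for the geometrically robust $\sigma$ instead of routing through Theorem~\ref{th: equiv robus}, thereby avoiding an unnecessary appeal to Zariski's Main Theorem.
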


The output coalescence of $\mathcal{A}$ and condition $(*)$ are straight--forward consequences of the output isoparametricity of $\mathcal{A}$. We remark here that condition $(*)$ follows already directly from the output coalescence of $\mathcal{A}$. This highlights again the close connection between isoparametricity and coalescence. The argument requires Zariski's Main Theorem. For details we refer to~\cite{CaGiHeMaPa03}, Sections 3.2 and 5.1. 

\subsubsection{The extended computation model}
\label{sec:Model-discussion-programs and algorithms}

We are now going to extend our simplified branching--free computation model of elementary routines by a new model consisting of \emph{algorithms} and \emph{procedures} which may contain some limited branchings. Our description of this model will be rather informal. An algorithm will be a dynamic DAG of elementary routines which will be interpreted as pipes. At the end points of the pipes, decisions may be taken which depend on testing the validity of suitable universally quantified Boolean combinations of equalities between robust constructible functions defined on the parameter domain under consideration. The output of such an \emph{equality test} is a bit vector which determines the next elementary routine (i.e., pipe) to be applied to the output circuit produced by the preceding elementary routine (pipe). This gives rise to a \emph{extended computation model} which contains branchings. These branchings depend on a limited type of decisions at the level of the underlying abstract data type, namely the mentioned equality tests. We need to include this type of branchings in our extended computation model in order to capture the whole spectrum of known elimination procedures in effective Algebraic Geometry. Because of this limitation of branchings, we shall call the algorithms of our model \emph{branching parsimonious} (compare \cite{GH01} and \cite{CaGiHeMaPa03}). A branching parsimonious algorithm $\mathcal{A}$ which accepts a robust parameterized arithmetic circuit $\beta$ with parameter domain $\mathcal{M}$ as input produces a new robust circuit $\mathcal{A}(\beta)$ with parameter domain $\mathcal{M}$. In particular $\mathcal{A}(\beta)$ \emph{does not contain any branchings}. 

Recall that our two main constructions of elementary routines depend on a previous selection of generic computations from a given shape list. This selection may be handled by calculations with the indexing of the shape list. We shall think that these calculations become realized by deterministic Turing machines. At the beginning, for a given robust parametric input circuit $\beta$ with parameter domain $\mathcal{M}$, a tuple of fixed (i.e., of $\beta$ independent) length of natural numbers is determined. This tuple constitutes an initial configuration of a Turing machine computation which determines the generic computations of our shape list that intervene in the elementary routine under construction. The entries of this tuple of natural numbers are called \emph{invariants} of the circuit $\beta$. These invariants, whose values may also be Boolean (i.e., realized by the natural numbers $0$ or $1$), depend mainly on algebraic or geometric properties of the final results of $\beta$. However, they may also depend on structural properties of the labelled DAG $\beta$.

For example, the invariants of $\beta$ may express that $\beta$ has $r$ parameters, $n$ inputs and outputs, (over $\C$) non--scalar size and depth at most $L$ and $l$, that $\beta$ is totally division--free, that the final results of $\beta$ have degree at most $d\leq 2^{l}$ and that for any parameter instance their specializations form a reduced regular sequence in $\C[X_1,\dots,X_n]$, where $X_1,\dots,X_n$ are the inputs of $\beta$.

Some of these invariants (e.g., the syntactical ones like number of parameters, inputs and outputs and non--scalar size and depth) may simply be read--off from the labelled DAG structure of $\beta$. Others, like the truth value of the statement that the specializations of the final results of $\beta$ at any parameter instance form a reduced regular sequence, have to be precomputed by an elimination algorithm from a previously given software library in effective commutative algebra or algebraic geometry or their value has to be fixed in advance as a precondition for the elementary routine which becomes applied to $\beta$.

In the same vein we may equip any elementary routine $\mathcal{A}$ with a Turing computable function which from the values of the invariants of a given input circuit $\beta$ decides whether $\beta$ is admissible for $\mathcal{A}$, and, if this is the case, determines the generic computations of our shape list which intervene in the application of $\mathcal{A}$ to $\beta$. 

We shall now go a step further letting depend the internal structure of the computation on the circuit $\beta$. In the simplest case this means that we admit that the vector of invariants of $\beta$, denoted by $\text{inv}(\beta)$, determines the architecture of a first elementary routine, say $\mathcal{A}_{\text{inv}(\beta)}$, which admits $\beta$ as input. Observe that the architectures of the elementary routines of our computation model may be characterized by tuples of fixed length of natural numbers. We consider this characterization as an \emph{indexing} of the elementary routines of our computation model. We may now use this indexing in order to combine dynamically elementary routines by composition, join and union. Let us restrict our attention to the case of composition. In this case the output circuit of one elementary routine is the input for the next routine. The elementary routines which compose this display become implemented as pipes which start with a robust input circuit and end with a robust output circuit. Given such a pipe and an input circuit $\gamma$ for the elementary routine $\mathcal{B}$ representing the pipe, we may apply suitable equality tests to the final results of $\mathcal{B}(\gamma)$ in order to determine a bit vector which we use to compute the index of the next elementary routine (seen as a new pipe) which will be applied to $\mathcal{B}(\gamma)$ as input.

A \emph{low level program} of our extended computation model is now a text, namely the transition table of a deterministic Turing machine, which computes a function $\psi$ realizing the following tasks.

Let as before $\beta$ be a robust parameterized arithmetic circuit. Then $\psi$ returns first on input $\text{inv}(\beta)$ a Boolean value, zero or one, where one is interpreted as the informal statement ``$\beta$ is an admissible input''. If this is the case, then $\psi$ returns on $\text{inv}(\beta)$ the index of an elementary routine, say $\mathcal{A}_{\text{inv}(\beta)}$, which admits $\beta$ as input. Then $\psi$ determines the equality tests which have to be realized with the final results of $\mathcal{A}_{\text{inv}(\beta)}(\beta)$. Depending on the outcome of these equality tests $\psi$ determines an index value corresponding to a new elementary routine which admits $\mathcal{A}_{\text{inv}(\beta)}(\beta)$ as input. Continuing in this way one obtains as end result an elementary routine $\mathcal{A}^{(\beta)}$, which applied to $\beta$, produces a final output circuit $\mathcal{A}^{(\beta)}(\beta)$. The function $\psi$ represents all these index computations. We denote by $\psi(\beta)$ the \emph{dynamic} vector of all data computed by $\psi$ on input $\beta$. 

The \emph{algorithm} represented by $\psi$ is the partial map between robust parametric arithmetic circuits that assigns to each admissible input $\beta$ the circuit $\mathcal{A}^{(\beta)}(\beta)$ as output. Observe that elementary routines are particular algorithms. This kind of algorithms constitute our \emph{extended computation model}. We remark that any algorithm of this model is \emph{output isoparametric}. If the pipes of an algorithm are all represented by essentially division--free elementary routines, we call the algorithm itself \emph{essentially division--free}.

One sees easily that the ``Kronecker algorithm'' \cite{GLS01} (compare also \cite{Giusti1}, \cite{Giusti2} and \cite{Giusti3}) for solving non--degenerate polynomial equation systems over the complex numbers may be programmed in our extended computation model. Observe that the Kronecker algorithm requires more than a single elementary routine for its design. In order to understand this, recall that the Kronecker algorithm accepts as input an ordinary division--free arithmetic circuit which represents by its output nodes a reduced regular sequence of polynomials $G_1,\dots,G_n$ belonging to $\C[X_1,\dots,X_n]$. In their turn, the polynomials $G_1,\dots,G_n$ determine a \emph{degree pattern}, say $\Delta:=(\delta_1,\dots,\delta_n)$, with $\delta_i:=\deg \{G_1=0,\dots,G_i=0 \}$ for $1\leq i\leq n$. 

After putting the variables $X_1,\dots,X_n$ in generic position with respect to $G_1,\dots,$ $G_n$, the algorithm performs $n$ recursive steps to eliminate them, one after the other. Finally the Kronecker algorithm produces an ordinary arithmetic circuit which computes the coefficients of $n+1$ univariate polynomials $P,V_1,\dots,V_n$ over $\C$. These polynomials constitute a ``geometric solution'' (see \cite{GLS01}) of the equation system $G_1=0,\dots,G_n=0$ because they represent the zero dimensional algebraic variety $V:=\left\{ G_1=0,\dots,G_n=0 \right\}$ in the following ``parameterized'' form:
$$V:=\left\{ (V_1(t),\dots,V_n(t));t\in\C,P(t)=0 \right\}.$$ 
Let $\beta$ be any robust, parameterized arithmetic circuit with the same number of inputs and outputs, say $X_1,\dots,X_n$ and $G_1(U,X_1,\dots,X_n),\dots,G_n(U,X_1,\dots,X_n)$, respectively. Suppose that the parameter domain of $\beta$, say $\mathcal{M}$, is irreducible and that $\text{inv}(\beta)$ expresses that for each parameter instance $u\in\mathcal{M}$ the polynomials $G_1(u,X_1,\dots,X_n),\dots,G_n(u,X_1,\dots,X_n)$ form a reduced regular sequence in $\C[X_1,\dots,X_n]$ with fixed (i.e., from $u\in\mathcal{M}$ independent) degree pattern. Suppose, furthermore, that the degrees of the individual polynomials $G_1(u,X_1,\dots,X_n)$, $\dots$, $G_n(u,X_1,\dots,X_n)$ are also fixed and that the variables $X_1,\dots,X_n$ are in generic position with respect to the varieties $\{ G_1(u,X)=0,\dots, G_i(u,X)=0 \} , 1\leq i\leq n$. Then, on input $\beta$, the Kronecker algorithm runs a certain number (which depends on $\Delta$) 
of elementary routines of our computation model which finally become combined by consistent iterative joins until the desired output is produced.

Another non--trivial example for an algorithm of our extended computation model, which involves only limited branchings, is the Gaussian elimination procedure of \cite{Edmonds67} (or \cite{Bareiss68}) applied to matrices whose entries are polynomials represented by ordinary arithmetic circuits in combination with a identity--to--zero test for such polynomials. The variables of these polynomials are considered as basic parameters and any admissible input circuit has to satisfy a certain precondition formulated as the non--vanishing of suitable minors of the given polynomial matrix. Details and applications of this type of Gaussian elimination for polynomial matrices can be found in \cite{HeintzTesis83}.

\paragraph{Procedures}\ws\ws

We say that a given algorithm $\mathcal{A}$ of our extended model \emph{computes} (only) \emph{parameters} if $\mathcal{A}$ satisfies the following condition:
\begin{quote}
\emph{for any admissible input $\beta$ the final results of $\mathcal{A}(\beta)$ are all parameters.}	
\end{quote}

Suppose that $\mathcal{A}$ is such an algorithm and $\beta$ is the robust parametric arithmetic circuit with parameter domain $\mathcal{M}$ which we have considered before. Observe that $\mathcal{A}(\beta)$ contains the input variables $X_1,\dots,X_n$ and that possibly new variables, which we call \emph{auxiliary}, become introduced during the execution of the algorithm $\mathcal{A}$ on input $\beta$. Since the algorithm $\mathcal{A}$ computes only parameters, the input and auxiliary variables become finally eliminated by the application of recursive parameter routines and evaluations. We may therefore \emph{collect garbage} in order to reduce $\mathcal{A}(\beta)$ to a \emph{final output circuit} $\mathcal{A}_{\text{final}}(\beta)$ whose intermediate results are only parameters.

If we consider the algorithm $\mathcal{A}$ as a partial map which assigns to each admissible input circuit $\beta$ its final output circuit $\mathcal{A}_{\text{final}}(\beta)$, we call $\mathcal{A}$ a \emph{procedure}.

In this case, if $\psi$ is a low level program defining $\mathcal{A}$, we call $\psi$ a \emph{low level procedure program}. 

A particular feature of our extended computation model is the following:\\
there exists a non--negative integer $f$ (depending on the recursion depth of $\mathcal{A}$) and non--decreasing real valued functions $C_f \geq 0$ ,\dots, $C_0 \geq 0$ depending on one and the same dynamic integer vector, such that with the previous notations and $L_{\beta}$, $L_{\mathcal{A}(\beta)}$ denoting the non--scalar sizes of the circuits $\beta$ and $\mathcal{A}(\beta)$ the condition
$$L_{\mathcal{A}(\beta)} \leq C_f(\psi(\beta))L^f_{\beta} +\dots+ C_0(\psi(\beta))$$
is satisfied.

In the case of the Kronecker algorithm (and most other elimination algorithms of effective Algebraic Geometry) we have $f:=1$, because the recursion depth of the basic routines which intervene is one.

In the sequel we shall need a particular variant of the notion of a procedure which enables us to capture the following situation.

Suppose we have to find a computational solution for a formally specified general algorithmic problem and that the formulation of the problem depends on certain parameter variables, say $U_1,\dots,U_r$, input variables, say $X_1,\dots,X_n$ and output variables, say $Y_1,\dots,Y_s$. Let such a problem formulation be given and suppose that its input is implemented by the robust parameterized arithmetic circuit $\beta$ considered before, interpreting the parameter variables $U_1,\dots,U_r$ as the basic parameters $\pi_1,\dots,\pi_r$. 

Then an algorithm $\mathcal{A}$ of our extended computation model which \emph{solves} the given algorithmic problem should satisfy the architectural requirement we are going to describe now.

The algorithm $\mathcal{A}$ should be the composition of two subalgorithms $\mathcal{A}^{(1)}$ and $\mathcal{A}^{(2)}$ of our computation model which satisfy on input $\beta$ the following conditions:
\textit{
\begin{enumerate}
\item[(i)] The subalgorithm $\mathcal{A}^{(1)}$ computes only parameters, $\beta$ is admissible for $\mathcal{A}^{(1)}$ and none of the indeterminates $Y_1,\dots,Y_s$ is introduced in $\mathcal{A}^{(1)}(\beta)$ as auxiliary variable (all other auxiliary variables become eliminated during the execution of the subalgorithm $\mathcal{A}^{(1)}$ on the input circuit $\beta$).
\item[(ii)] The circuit $\mathcal{A}_{\text{final}}^{(1)}(\beta)$ is an admissible input for the subalgorithm $\mathcal{A}^{(2)}$, the indeterminates $Y_1,\dots,Y_s$ occur as auxiliary variables in $\mathcal{A}^{(2)}(\mathcal{A}_{\text{final}}^{(1)}(\beta))$ and the final results of $\mathcal{A}^{(2)}(\mathcal{A}_{\text{final}}^{(1)}(\beta))$ depend only on $\pi_1,\dots,\pi_r$ and $Y_1,\dots,Y_s$.	 
\end{enumerate}
}

To the circuit $\mathcal{A}^{(2)}(\mathcal{A}_{\text{final}}^{(1)}(\beta))$ we may, as in the case when we compute only parameters, apply garbage collection. In this manner $\mathcal{A}^{(2)}(\mathcal{A}_{\text{final}}^{(1)}(\beta))$ becomes reduced to a final output circuit $\mathcal{A}_{\text{final}}(\beta)$ with parameter domain $\mathcal{M}$ which contains only the inputs $Y_1,\dots,Y_s$.

Observe that the subalgorithm $\mathcal{A}^{(1)}$ is by Proposition \ref{prop: T subset and composition} an output isoparametric procedure of our extended computation model (the same is also true for the subalgorithm $\mathcal{A}^{(2)}$, but this will not be relevant in the sequel).

We consider the algorithm $\mathcal{A}$, as well as the subalgorithms $\mathcal{A}^{(1)}$ and $\mathcal{A}^{(2)}$, as \emph{procedures} of our extended computation model. In case that the \emph{subprocedures} $\mathcal{A}^{(1)}$ and $\mathcal{A}^{(2)}$ are essentially division--free, we call also the procedure $\mathcal{A}$ \emph{essentially division--free}. This will be of importance in Section \ref{sec:Model-Applications}. 

The architectural requirement given by conditions $(i)$ and $(ii)$ may be interpreted as follows:\\
the subprocedure $\mathcal{A}^{(1)}$ is a pipeline which transmits only parameters to the subprocedure $\mathcal{A}^{(2)}$. In particular, no (true) polynomial is transmitted from $\mathcal{A}^{(1)}$ to $\mathcal{A}^{(2)}$. 	

Nevertheless, let us observe that on input $\beta$ the procedure $\mathcal{A}$ establishes by means of the underlying low level program $\psi$ an additional link between $\beta$ and the subprocedure $\mathcal{A}^{(2)}$ applied to the input $\mathcal{A}^{(1)}(\beta)$. The elementary routines which constitute $\mathcal{A}^{(2)}$ on input $\mathcal{A}^{(1)}(\beta)$ become determined by index computations which realizes $\psi$ on $\text{inv}(\beta)$ and certain equality tests between the intermediate results of $\mathcal{A}^{(1)}(\beta)$. In this sense the subprocedure $\mathcal{A}^{(1)}$ transmits not only parameters to the subprocedure but also a limited amount of digital information which stems from the input circuit $\beta$.   	 

The decomposition of the procedure $\mathcal{A}$ into two subprocedures $\mathcal{A}^{(1)}$ and $\mathcal{A}^{(2)}$ satisfying conditions $(i)$ and $(ii)$ represents an architectural restriction which is justified when it makes sense to require that on input $\beta$ the number of essential additions and multiplications contained in $\mathcal{A}_{\text{final}}(\beta)$ is bounded by a function which depends only on $\text{inv}(\beta)$. In Section \ref{ 6.5.1 flat}, we shall make a substantial use of this restriction and give such a justification in the particular case of elimination algorithms.  

Here, we shall only point out the following consequence of this restriction. Let assumptions and notations be as before, let $G,\nu$ and $F$ be vectors composed by the final results of $\beta$, $\mathcal{A}^{(1)}(\beta)$ and $\mathcal{A}_{\text{final}}(\beta)$, respectively, and let $\theta$ and $\varphi$ be the coefficient vectors of $G$ and $F$. Then the images of $\theta$ and $\nu$ are constructible subsets $\mathcal{T}$ and $\mathcal{T}'$ of suitable affine spaces and there exist geometrically robust constructible maps $\sigma$ and $\sigma'$ defined on $\mathcal{T}$ and $\mathcal{T}'$ with $\nu=\sigma\circ\theta$ and $\varphi=\sigma'\circ\nu=\sigma'\circ\sigma\circ\theta$.

Based on \cite{HeiKu04} and \cite{HeiKu07}, we shall develop in future work a high level specification language for algorithms and procedures of our computation model. The idea is to use a generalized variant of the \emph{extended constraint data base model} introduced in \cite{HeiKu04} in order to specify algorithmic problems in symbolic Scientific Computing, especially in effective Algebraic Geometry
. In this sense the procedure $\mathcal{A}$, which solves the algorithmic problem considered before, will turn out to be a \emph{query computation} composed by two subprocedures namely $\mathcal{A}^{(1)}$ and $\mathcal{A}^{(2)}$ which compute each a subquery of the query which specifies the given algorithmic problem. All these queries are called \emph{geometric} because the procedures $\mathcal{A}^{(1)}$, $\mathcal{A}^{(2)}$ and $\mathcal{A}$ are output isoparametric (see \cite{HeiKu07}).

\section[Applications of the extended computation model to complexity issues of effective elimination theory]{\large{Applications of the extended computation model to\\ complexity issues of effective elimination theory}}\label{sec:Model-Applications}
In this section we shall always work with procedures of our extended, branching parsimonious computation model. We shall study representative examples of elimination problems in effective Algebraic Geometry which certify, to a different extent, that \emph{branching parsimonious} elimination procedures based on our computation paradigm \emph{cannot run in polynomial time}. 

\subsection{Flat families of zero--dimensional elimination problems}
\label{ 6.5.1 flat}
We first introduce, in terms of abstract data types, the notion of a flat family of zero--dimensional elimination problems (see also \cite{GH01} and \cite{CaGiHeMaPa03}). Then we fix the classes of (concrete) objects, namely robust parameterized arithmetic circuits with suitable parameter domains, which represent (``implement'') these problems by means of a suitable abstraction function.

Throughout this section, we suppose that there are given indeterminates $U_1,\dots,$ $U_r$, $X_1,\dots,X_n$ and $Y$ over $\C$.

As concrete objects we shall consider robust parameterized arithmetic input and output circuits with parameter domain $\A^r$. The indeterminates $U_1,\dots,U_r$ will play the role of the basic parameters. The input nodes of the input circuits will be labelled by $X_1,\dots,X_n$, whereas the output circuits will have a single input node, labelled by $Y$. 

Let us now define the meaning of the term ``flat family of zero--dimensional elimination problems'' (in the basic parameters $U_1,\dots,U_r$ and the inputs $X_1,\dots,X_n$). Let $U:=(U_1,\dots,U_r)$ and $X:=(X_1,\dots,X_n)$ and let $G_1,\dots,G_n$ and $H$ be polynomials belonging to the $\C$--algebra $\C[U,X]:=\C[U_1,\dots,U_r,X_1,\dots,X_n]$. Suppose that the polynomials $G_1,\dots,G_n$ form a regular sequence in $\C[U,X]$, thus defining an equidimensional subvariety $V:=\left\{ G_1=0,\dots,G_n=0 \right\}$ of the $(n+r)$--dimensional affine space $\A^r\times\A^n$. The algebraic variety $V$ has dimension $r$. Let $\delta$ be the (geometric) degree of $V$ (observe that this degree does not take into account multiplicities or components at infinity). Suppose, furthermore, that the morphism of affine varieties $\pi:V\to\A^r$, induced by the canonical projection of $\A^r\times\A^n$ onto $\A^r$, is finite and generically unramified (this implies that $\pi$ is flat and that the ideal generated by $G_1,\dots,G_n$ in $\C[U,X]$ is radical). Let $\tilde{\pi}:V\to\A^{r+1}$ be the morphism defined by $\tilde{\pi}(v):=(\pi(v),H(v))$ for any point $v$ of the variety $V$. The image of $\tilde{\pi}$ is a hypersurface of $\A^{r+1}$ whose minimal equation is a polynomial of $\C[U,Y]:=\C[U_1,\dots,U_r,Y]$ which we denote by $F$. Let us write $\deg F$ for the total degree of the polynomial $F$ and $\deg_Y F$ for its partial degree in the variable $Y$. Observe that $F$ is monic in $Y$ and that $\deg F\leq\delta \deg H$ holds. Furthermore, for a Zariski dense set of points $u$ of $\A^r$, we have that $\deg_Y F$ is the cardinality of the image of the restriction of $H$ to the finite set $\pi^{-1}(u)$. The polynomial $F(U,H)$ vanishes on the variety $V$.

Let us consider an arbitrary point $u:=(u_1,\dots,u_r)$ of $\A^r$. For given polynomials $A\in\C[U,X]$ and $B\in\C[U,Y]$, we denote by $A^{(u)}$ and $B^{(u)}$ the polynomials $A(u_1,\dots,u_r,X_1,\dots,X_n)$ and $B(u_1,\dots,u_r,Y)$ which belong to $\C[X]:=\C[X_1,\dots,X_n]$ and $\C[Y]$ respectively. Similarly we denote for an arbitrary polynomial $C\in\C[U]$ by $C^{(u)}$ the value $C(u_1,\dots,u_r)$ which belongs to the field $\C$. The polynomials $G_1^{(u)},\dots,G_n^{(u)}$ define the zero--dimensional subvariety
$$V^{(u)}:=\left\{ G_1^{(u)}=0,\dots,G_n^{(u)}=0 \right\} \cong \pi^{-1}(u)$$
of the affine space $\A^n$. The degree (i.e., the cardinality) of $V^{(u)}$ is bounded by $\delta$. Denote by $\tilde{\pi}^{(u)}:V^{(u)}\to\A^1$ the morphism induced by the polynomial $H^{(u)}$ on the variety $V^{(u)}$. Observe that the polynomial $F^{(u)}$ vanishes on the (finite) image of the morphism $\tilde{\pi}^{(u)}$. Observe also that the polynomial $F^{(u)}$ is not necessarily the minimal equation of the image of $\tilde{\pi}^{(u)}$.

We call the equation system $G_1=0,\dots,G_n=0$ and the polynomial $H$ a \emph{flat family of zero--dimensional elimination problems depending on the basic parameters $U_1,\dots,U_r$ and the inputs $X_1,\dots,X_n$} and we call $F$ the associated \emph{elimination polynomial}. A point $u\in\A^r$ is considered as a \emph{parameter instance} which determines a \emph{particular problem instance}, consisting of the equations $G_1^{(u)}=0,\dots,G_n^{(u)}=0$ and the polynomial $H^{(u)}$. A power of the polynomial $F^{(u)}$ is called a \emph{solution} of this particular problem instance.

We suppose now that the given flat family of elimination problems is
implemented by an essentially division-free, robust parameterized
arithmetic circuit $\beta$ with parameter domain $\A^r$ and inputs $X_1,\dots,X_n$, whose final results are the polynomials $G_1,\dots,G_n$ and $H$. The task is to find another essentially division--free, robust parameterized arithmetic circuit $\gamma$ with parameter domain $\A^r$ having a single output node, labelled by $Y$, which computes for a suitable integer $q\in\N$ the power $F^q$ of the associated elimination polynomial $F$. We suppose, furthermore, that this goal is achieved by the application of an essentially division--free procedure $\mathcal{A}$ of our extended computation model to the input circuit $\beta$. Thus we may put $\gamma:=\mathcal{A}_{\text{final}}(\beta)$ and $\gamma$ may be interpreted as an essentially division--free circuit over $\C[U]$ with a single input $Y$ (observe that the parameters computed by the robust circuits $\beta$, $\mathcal{A}(\beta)$ and $\mathcal{A}_{\text{final}}(\beta)$ are geometrically robust constructible functions with domain of definition $\A^r$ which belong by \cite{GHMS09}, Corollary 12 to the $\C$--algebra $\C[U]$). Using the geometric properties of flat families of zero--dimensional problems, we deduce from \cite{Giusti1}, \cite{Giusti2},\cite{Giusti3}, \cite{GLS01} or alternatively from \cite{Caniglia89}, \cite{Dickens} that such essentially division--free procedures always exist and that they compute even the elimination polynomial $F$ (the reader may notice that one needs for this argument the full expressivity of our computation model which includes divisions by parameters).

We say that the essentially division--free procedure $\mathcal{A}$ \emph{solves algorithmically} the \emph{general instance} of the given flat family of zero--dimensional elimination problems if $A$ computes $F$ or a power of it. 

From now on we suppose that there is given a procedure $\mathcal{A}$ of our extended computation model, decomposed in two essentially division--free subprocedures $\mathcal{A}^{(1)}$ and $\mathcal{A}^{(2)}$ as in Section \ref{sec:Model-discussion-programs and algorithms}, such that $\mathcal{A}$ solves algorithmically the general instance of any given flat family of zero--dimensional elimination problems. Our circuit $\beta$ is therefore an admissible input for $\mathcal{A}$ and hence for $\mathcal{A}^{(1)}$. The final results of $\mathcal{A}^{(1)}(\beta)$ constitute a geometrically robust constructible map $\nu$ defined on $\A^r$ which represents by means of $\mathcal{A}^{(1)}_{\text{final}}(\beta)$ an admissible input for the procedure $\mathcal{A}^{(2)}$. Moreover, $\gamma:=\mathcal{A}_{\text{final}}(\beta)$ is an essentially division--free parameterized arithmetic circuit with parameter domain $\A^r$ and input $Y$.

Let $\mathcal{S}$ be the image of the geometrically robust constructible map $\nu$. Then $\mathcal{S}$ is an irreducible constructible subset of a suitable affine space. Analyzing now the internal structure of the essentially division--free, robust parameterized arithmetic circuit $\mathcal{A}^{(2)}(\mathcal{A}^{(1)}(\beta))$, one sees easily that there exists a geometrically robust constructible map $\psi$ defined on $\mathcal{S}$ such that the entries of the geometrically robust composition map $\nu^*:=\psi\circ\nu$ constitute the essential parameters of the circuit $\gamma$. Let $m$ be the number of components of the map $\nu^*$. Since $\nu$ and $\nu^*$ are composed by geometrically robust constructible functions defined on $\A^r$, we deduce from \cite{GHMS09}, Corollary 12 that $\nu$ and $\nu^*$ may be interpreted as vectors of polynomials of $\C[U]$.

The circuit $\gamma$ is essentially division--free. Hence there exists a vector $\omega$ of $m$--variate polynomials over $\C$ such that the polynomials of $\C[U]$, which constitute the entries of $\omega(\nu^*)$, become the coefficients of the elimination polynomial $F$ with respect to the main indeterminate $Y$ (see \cite{Krick96}, Section 2.1). Observe that we may write $\omega(\nu^*)=\omega\circ\nu^*$ interpreting the entries of $\nu^*$ as polynomials of $\C[U]$.

We are now going to see what happens at a particular parameter instance $u\in\A^r$. Since $\beta$, $\mathcal{A}^{(1)}(\beta)$, $\mathcal{A}(\beta)$ and $\gamma=\mathcal{A}_{\text{final}}(\beta)$ are essentially division--free, robust parameterized arithmetic circuits with parameter domain $\A^r$, we may specialize the vector $U$ of basic parameters to the parameter instance $u\in\A^r$, obtaining thus ordinary division--free arithmetic circuits over $\C$ with the same inputs. We denote them by the superscript $u$, namely by $\beta^{(u)}$, $(\mathcal{A}^{(1)}(\beta))^{(u)}$, $(\mathcal{A}(\beta))^{(u)}$ and $\gamma^{(u)}$. One sees immediately that $G_1^{(u)},\dots,G_n^{(u)}$ and $H^{(u)}$ are the final results of $\beta^{(u)}$, that the entries of $\nu(u)$ are the final results of $(\mathcal{A}^{(1)}(\beta))^{(u)}$ and that $(F^{(u)})^q$ is the final result of $\mathcal{A}(\beta)^{(u)}$ and $\gamma^{(u)}$. Observe that the division--free circuit $\gamma^{(u)}$ uses only the entries of $\nu^*(u)$ and fixed rational numbers as scalars. 

In the same spirit as before, we say that the procedure $\mathcal{A}$ solves algorithmically the particular instance, which is determined by $u$, of the given flat family of zero--dimensional elimination problems. 

Let us here clarify how all this is linked to the rest of the terminology used in~\cite{CaGiHeMaPa03}. In this terminology the polynomial map given by $\omega$ defines a ``holomorphic encoding'' of the set of solutions of all particular problem instances and $\nu^*(u)$ is a ``code'' of the particular solution $(F^{(u)})^q$. In the same context the robust constructible map $\nu^*$ is called an ``elimination procedure'' which is ``robust'' since the procedure $\mathcal{A}^{(1)}$ is output isoparametric and since $\nu^*$ is geometrically robust (compare \cite{CaGiHeMaPa03}, Definition 5, taking into account Lemma \ref{lemma intermediate results}, Proposition \ref{prop: T subset and composition} and Corollary \ref{cor: accumulation point} above).

In this sense, we speak about \emph{families} of zero--dimensional elimination problems and their instances and not simply about a single (particular or general) zero--dimensional elimination problem.

Let us now turn back to the discussion of the given essentially division--free procedure $\mathcal{A}$ which solves algorithmically the general instance of any flat family of zero--dimensional elimination problems.

We are now going to show the main result of this paper, namely that the given procedure $\mathcal{A}$ \emph{cannot run in polynomial time}.

\begin{theorem}\label{def: main theorem model}
Let notations and assumptions be as before. For any natural number $n$ there exists an essentially division--free, robust parameterized arithmetic circuit $\beta_n$ with basic parameters $T$, $U_1,\dots,U_n$ and inputs $X_1,\dots,X_n$ which for $U:=(U_1,\dots,U_n)$ and $X:=(X_1,\dots,X_n)$ computes polynomials $G_1^{(n)},\dots,G_n^{(n)} \in\C[X]$ and $H^{(n)}\in\C[T,U,X]$ such that the following conditions are satisfied:
\begin{enumerate}
	\item[(i)] The equation system $G_1^{(n)}=0,\dots,G_n^{(n)}=0$ and the polynomial $H^{(n)}$ constitute a flat family of zero--dimensional elimination problems, depending on the parameters $T$, $U_1,\dots,U_n$ and the inputs $X_1,\dots,X_n$, with associated elimination polynomial $F^{(n)}\in\C[T,U,Y]$.
	\item[(ii)] $\beta_n$ is an ordinary division--free arithmetic circuit of size $O(n)$ over $\C$ with inputs $T$, $U_1,\dots,U_n$, $X_1,\dots,X_n$.
	\item[(iii)] $\gamma_n:=\mathcal{A}_{\text{final}}(\beta_n)$ is an essentially division--free, robust parameterized arithmetic circuit with basic parameters $T,U_1,\dots,U_n$ and input $Y$ such that $\gamma_n$ computes for a suitable integer $q_n \in\N$ the polynomial $(F^{(n)})^{q_n}$. The circuit $\gamma_n$ performs at least $\Omega(2^{\frac{n}{2}})$ essential multiplications and at least $\Omega(2^{n})$ multiplications with parameters. Therefore $\gamma_n$ has, as ordinary arithmetic circuit over $\C$ with inputs $T,U_1,\dots,U_n$ and $Y$, non--scalar size at least $\Omega(2^{n})$. 
\end{enumerate}
\end{theorem}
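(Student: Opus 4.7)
The plan is to exhibit the explicit family $G_i^{(n)}(X) := X_i^2 - X_i$ for $1 \leq i \leq n$ and $H^{(n)}(T, U, X) := T + \sum_{i=1}^n U_i X_i$. Verifying (ii): the circuit $\beta_n$ uses $n$ essential multiplications $X_i \cdot X_i$, $n$ parameter-times-input multiplications $U_i \cdot X_i$, and $O(n)$ additions/subtractions; since no divisions occur it is totally division-free, hence trivially robust. Verifying (i): the $G_i^{(n)}$ form a regular sequence in $\C[X]$ cutting out the reduced scheme $V = \{0,1\}^n \subset \A^n$ of cardinality $2^n$, and the projection $\pi:\A^{n+1}\times V \to \A^{n+1}$ is a trivial finite étale cover, in particular finite, flat and generically unramified. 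For generic $(t, u)\in\A^{n+1}$ the $2^n$ values $t + \sum_i u_i\epsilon_i$, $\epsilon\in\{0,1\}^n$, are pairwise distinct, whence the associated elimination polynomial is
\[
F^{(n)}(T, U, Y) \;=\; \prod_{\epsilon \in \{0,1\}^n}\Bigl(Y - T - \sum_{i=1}^n U_i\, \epsilon_i\Bigr),
\]
monic of degree $2^n$ in $Y$.

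For the lower bound in (iii), write $F^{(n)} = \sum_{k=0}^{2^n}(-1)^k e_k\, Y^{2^n-k}$, where $e_k$ is the $k$-th elementary symmetric polynomial in the $2^n$ distinct linear forms $T + \sum_i U_i \epsilon_i$. Each $e_k$ is homogeneous of total degree exactly $k$ in $(T, U)$, so the $2^n + 1$ coefficients of $F^{(n)}$ are pairwise $\C$-linearly independent in $\C[T, U]$; a fortiori the coefficients of $(F^{(n)})^{q_n}$ span a $\C$-subspace of dimension at least $2^n$. By Lemma \ref{lemma intermediate results}, $\gamma_n$ represents $(F^{(n)})^{q_n}$ as a polynomial in $Y$ with coefficients in $\C[T, U]$, and the classical Paterson--Stockmeyer / Strassen lower bound for univariate polynomial evaluation, applied to a polynomial whose coefficients span a $\C$-space of dimension $\Omega(2^n)$, yields at least $\Omega(\sqrt{2^n}) = \Omega(2^{n/2})$ essential multiplications in $\gamma_n$.

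For the $\Omega(2^n)$ bound on multiplications with parameters I would invoke the blow-up decomposition method developed in Sections \ref{independent of the model} and \ref{divisions and blow ups}. By Proposition \ref{prop: T subset and composition} the coefficient vector of $(F^{(n)})^{q_n}$ factors through a geometrically robust constructible map applied to the coefficient map of $\beta_n$, so realizing it by an essentially division-free parameter subcircuit amounts to decomposing the rational parametrization $\A^{n+1}\to\A^{q_n 2^n + 1}$ given by the elementary symmetric functions of the $2^n$ linear forms into a polynomial map preceded by a sequence of simple blow-ups. A counting argument shows that the minimal number of such blow-up steps is $\Omega(2^n)$, which lower-bounds the number of non-scalar parameter multiplications; combining with the essential-multiplication contribution, $\gamma_n$ has non-scalar size $\Omega(2^n)$ as an ordinary arithmetic circuit over $\C$ in $T, U_1, \ldots, U_n, Y$.

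The main obstacle is precisely the parameter-multiplication bound. The essential-multiplication half is a relatively direct application of a classical univariate lower bound once the linear independence of the coefficients of $F^{(n)}$ is established, but $\Omega(2^n)$ parameter multiplications does not follow from classical non-scalar circuit complexity alone: it relies on the algebraic-geometric machinery specific to this paper, namely output isoparametricity, geometric robustness of the coefficient map, and the blow-up counting of Section \ref{divisions and blow ups}. The choice $G_i^{(n)} = X_i^2 - X_i$, $H^{(n)} = T + \sum_i U_i X_i$ is calibrated so that the coefficient map becomes a Vandermonde-type system whose minimal blow-up decomposition is tractable to analyze yet large enough to deliver the exponential bound, uniformly in the extraneous exponent $q_n$ that the procedure $\mathcal{A}$ may choose.
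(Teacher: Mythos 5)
Your choice of $G_i^{(n)}=X_i^2-X_i$ matches the paper, but your $H^{(n)}=T+\sum_i U_iX_i$ does not, and this choice is fatal to the argument. The paper takes $H^{(n)}:=\sum_{1\leq i\leq n}2^{i-1}X_i + T\prod_{1\leq i\leq n}(1+(U_i-1)X_i)$, and the specific structure is not cosmetic. The heart of the lower bound is the following chain: by output isoparametricity (Proposition~\ref{prop: T subset and composition}) the essential parameters $\nu^*$ of $\gamma_n$ factor through the coefficient vector $\theta$ of $H^{(n)}$, hence their entries are integral over the local ring $\C[\theta]_{\mathfrak M}$; since $\theta(0,U)$ is \emph{a fixed point of $\A^{2^n}$ independent of $U$} (because $H^{(n)}(0,U,X)=\sum 2^{i-1}X_i$), the integral dependence relation specialized to $T=0$ has constant coefficients, forcing each entry $\mu(0,U)$ to be a constant, and hence $\nu^*(0,U)=:w$ to be a single point. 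This makes the parametrized curves $\epsilon_u(t):=\nu^*(t,u)$ all pass through $w$ at $t=0$, which is what allows the Jacobian/Vandermonde rank argument on $(D\omega)_w$ to yield $m\geq 2^n$. With your $H^{(n)}$ one has $H^{(n)}(0,U,X)=\sum U_iX_i$, so $\theta(0,U)$ varies with $U$, the specialization of the integral dependence relation does not have constant coefficients, and the conclusion $\nu^*(0,U)=\text{const}$ does not follow. The argument collapses at precisely the step it was designed to carry.

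Your essential-multiplication bound also rests on an unjustified step. You argue that because the coefficients of $(F^{(n)})^{q_n}$ span a $\C$-space of dimension $\geq 2^n$, the Paterson--Stockmeyer bound gives $\Omega(2^{n/2})$ essential multiplications. But $\gamma_n$ is free to use arbitrary polynomials in $\C[T,U]$ as parameters, and scalar/linear operations (in the non-scalar measure) are free; $\C$-linear independence of coefficients alone does not bound the number of essential multiplications. (In your example the coefficients all lie in $\C[T,U_1,\dots,U_n]$, a polynomial ring in only $n+1$ variables, so the transcendence-degree version of the classical bound yields merely $\Omega(\sqrt{n})$.) What Paterson--Stockmeyer rearrangement actually gives, and what the paper uses, is: \emph{if the number of essential parameters is $m$}, then $L\geq c\sqrt m$ and $L'\geq cm$. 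So the bound $m\geq 2^n$ must be established first, and that is exactly the isoparametricity + integrality + Jacobian argument you have replaced with a gesture towards Sections~\ref{independent of the model} and~\ref{divisions and blow ups}. Those sections are consequences and reinterpretations of Theorem~\ref{def: main theorem model}, not ingredients of its proof; the blow-up counting of Section~\ref{divisions and blow ups} is a post-hoc geometric reading of the bound $m\geq 2^n$, not an independent method you can invoke to derive it. In short: the flat family must be engineered so that $H$ degenerates at $T=0$ to a polynomial with constant (parameter-independent) coefficients, and the $\Omega(2^n)$ count of essential parameters must be proved via the isoparametricity/integrality machinery before any multiplication lower bound can be extracted.
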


\begin{proof}
During our argumentation we shall tacitly adapt to the new context the notations introduced before. We shall follow the main technical ideas behind the papers \cite{GH01}, \cite{CaGiHeMaPa03} and~\cite{GHMS09}. We fix now the natural number $n$ and consider the polynomials
$$G_1:=G_1^{(n)}:=X_1^2-X_1,\dots,G_n:=G_n^{(n)}:=X_1^2-X_n$$
and 
$$H:=H^{(n)}:=\sum_{1\leq i\leq n} 2^{i-1}X_i + T\prod_{1\leq i\leq n} (1+(U_i-1)X_i)$$
which belong to $\C[X]$ and to $\C[T,U,X]$, respectively.

Observe that $G_1,\dots,G_n$ and $H$ may be evaluated by a division--free ordinary arithmetic circuit $\beta:=\beta_n$ over $\C$ which has non--scalar size $O(n)$ and inputs $T$, $U_1,\dots,U_n$, $X_1,\dots,X_n$. As parameterized arithmetic circuit $\beta$ is therefore robust. Hence $\beta$ satisfies condition $(ii)$ of the theorem.

One sees easily that $G_1=0,\dots,G_n=0$ and $H$ constitute a flat family of zero--dimensional elimination problems depending on the parameters $T$, $U_1,\dots,U_n$ and the inputs $X_1,\dots,X_n$.

Let us write $H$ as a polynomial in the main indeterminates $X_1,\dots,X_n$ with coefficients $\theta_{\kappa_1,\dots,\kappa_n}\in\C[T,U]$, $\kappa_1,\dots,\kappa_n\in\left\{0,1\right\}$, namely 
$$H=\sum_{\kappa_1,\dots,\kappa_n\in\left\{0,1\right\}} \theta_{\kappa_1,\dots,\kappa_n} X_1^{\kappa_1},\dots,X_n^{\kappa_n}.$$
Observe that for $\kappa_1,\dots,\kappa_n\in\left\{0,1\right\}$ the polynomial $\theta_{\kappa_1,\dots,\kappa_n}(0,U)\in \C[U]$ is of degree at most zero, i.e., a constant complex number, independent of $U_1,\dots,U_n$.

Let $\theta:=(\theta_{\kappa_1,\dots,\kappa_n})_{\kappa_1,\dots,\kappa_n\in\left\{0,1\right\}}$ and observe that the vector $\theta(0,U)$ is a fixed point of the affine space $\A^{2^n}$. We denote by $\mathfrak{M}$ the vanishing ideal of the $\C$--algebra $\C[\theta]$ at this point.

Consider now the polynomial
$$F:=F^{(n)}:= \prod_{0\leq j\leq 2^{n}-1} (Y-(j+T \prod_{1\leq i\leq n}U_i^{\left[j\right]_i}))$$ 
of $\C[T,U,Y]$, where $[j]_i$ denotes the $i$--th digit of the binary representation of the integer $j$, $0\leq j\leq 2^n-1$, $1\leq i\leq n$. Let $q:=q_n$. 

From the identity $ \prod_{\epsilon\in\{ 0,1 \}^n} (Y-H(T,U,\epsilon) )= \prod_{0 \leq j\leq 2^n-1} (Y-(j+T \prod_{1\leq i\leq n}U_i^{\left[j\right]_i}))$ one deduces that $F$ is the elimination polynomial associated with the given flat family of zero--dimensional elimination problems $G_1=0,\dots,G_n=0$ and $H$.

Let us write $F^q$ as a polynomial in the main indeterminate $Y$ with coefficients $\varphi_\kappa\in\C[T,U]$, $1\leq \kappa\leq 2^n q$, namely
$$F^q= Y^{2^n q} + \varphi_1 Y^{2^n q -1} +\dots+ \varphi_{2^n q}.$$
Observe that for $1\leq \kappa\leq 2^n q$ the polynomial $\varphi_\kappa(0,U)\in\C[U]$ is of degree at most zero. Let $\lambda_\kappa:=\varphi_\kappa(0,U)$, $\lambda:=(\lambda_\kappa)_{1\leq \kappa\leq 2^n q}$ and $\varphi:=(\varphi_\kappa)_{1\leq \kappa\leq 2^n q}$. Observe that $\lambda$ is a fixed point of the affine space $\A^{2^n q}$.

Recall that $\beta$ is an admissible input for the procedure $\mathcal{A}$ and hence for $\mathcal{A}^{(1)}$, that the final results of $\mathcal{A}^{(1)}(\beta)$ constitute the entries of the robust constructible map $\nu$ defined on $\A^{n+1}$, that $\nu$ represents (by means of the circuit $\mathcal{A}^{(1)}_{\text{final}}(\beta)$) an admissible input for the procedure $\mathcal{A}^{(2)}$ and that $\gamma=\mathcal{A}_{\text{final}}(\beta)$ is an essentially division--free, parameterized arithmetic circuit with parameter domain $\A^{n+1}$ and input $Y$.

Furthermore, recall that there exists a geometrically robust constructible map $\psi$ defined on the image $\mathcal{S}$ of $\nu$ such that the entries of $\nu^*=\psi\circ\nu$ constitute the essential parameters of the circuit $\gamma$, that the entries of $\nu$ and $\nu^*$ may be interpreted as polynomials of $\C[T,U]$ and that for $m$ being the number of components of the map $\nu^*$, there exists a vector $\omega$ of $m$--variate polynomials over $\C$ such that the polynomials of $\C[T,U]$ which constitute the entries of $\omega(\nu^*)=\omega\circ\nu^*$ become the coefficients of the polynomial $F^q$ with respect to the main indeterminate $Y$. Let $\mathcal{T}$ be the image of the coefficient vector $\theta$ of $H$, and interpret $\theta$ as a geometrically robust constructible map defined on $\A^{n+1}$. Observe that $\mathcal{T}$ is a constructible subset of $\A^{2^n}$. Since $H$ is the unique final result of the circuit $\beta$ which depends on parameters, we deduce from Proposition \ref{prop: T subset and composition} that there exists a geometrically robust constructible map $\sigma$ defined on $\mathcal{T}$ satisfying the condition $\nu=\sigma\circ\theta$. This implies $\nu^*=\psi\circ\sigma\circ\theta$ and, following Definition \ref{def: geometrically robust map} $(i)$ and \cite{GHMS09}, Corollary 12, that the entries of $\nu^*$ are polynomials of $\C[T,U]$ which are integral over the local $\C$--subalgebra $\C[\theta]_{\mathfrak{M}}$ of $\C(T,U)$.

Let $\mu\in\C[T,U]$ be such an entry. Then there exists an integer $s$ and polynomials $a_0,a_1,\dots,a_s \in\C[\theta]$ with $a_0\notin\mathfrak{M}$ such that the algebraic dependence relation
\begin{equation}
\label{(*)}
a_0\mu^s + a_1\mu^{s-1}+\dots+ a_s =0	
\end{equation}
is satisfied in $\C[T,U]$. From \eqref{(*)} we deduce the algebraic dependence relation
\begin{equation}
\label{(**)}
a_0(0,U)\mu(0,U)^s +a_1(0,U)\mu(0,U)^{s-1} +\dots+a_s(0,U) =0	
\end{equation}
in $\C[U]$.

Since the polynomials $a_0,a_1,\dots,a_s$ belong to $\C[\theta]$ and $\theta(0,U)$ is a fixed point of $\A^{2^n}$, we conclude that $\alpha_0:=a_0(0,U),\alpha_1 := a_1(0,U), \dots , \alpha_s:= a_s(0,U)$ are complex numbers. Moreover, $a_0\notin\mathfrak{M}$ implies $\alpha_0\neq 0$. 

Thus \eqref{(**)} may be rewritten into the algebraic dependence relation
\begin{equation}
\label{(***)}
\alpha_0\mu(0,U)^s +\alpha_1 \mu(0,U)^{s-1} +\dots+ \alpha_s =0
\end{equation}
in $\C[U]$ with $\alpha_0\neq 0$.

This implies that the polynomial $\mu(0,U)$ of $\C[U]$ is of degree at most zero. Therefore $w:=\nu^*(0,U)$ is a fixed point of the affine space $\A^m$. 

Since $\gamma$ computes the polynomial $F^q$ and $F^q$ has the form $F^q = Y^{2^n q}+\varphi_1Y^{2^n q-1} +\dots+ \varphi_{2^n q}$ with $\varphi_\kappa\in\C[T,U]$, $1\leq \kappa\leq 2^n q$, we see that $\varphi=(\varphi_\kappa)_{1\leq \kappa\leq 2^n q}$ may be decomposed as follows:
$$\varphi=\omega(\nu^*)=\omega\circ\nu^*.$$
Recall that $\lambda=(\lambda_\kappa)_{1\leq \kappa\leq 2^n q}$ with $\lambda_\kappa:=\varphi_\kappa(0,U)$, $1\leq \kappa\leq 2^n q$, is a fixed point of the affine space $\A^{2^n}$.

For $1\leq \kappa\leq 2^n q$ we may write the polynomial $\varphi_\kappa\in\C[T,U]$ as follows:
\begin{equation}
\label{(****)}
\varphi_\kappa=\lambda_\kappa + \Delta_\kappa T + \text{\ terms\ of\ higher\ degree\ in\ } T	
\end{equation}
with $\Delta_\kappa\in\C[U]$. From \cite{CaGiHeMaPa03}, Lemma 6 we deduce that the elimination polynomial $F$ has the form $F = Y^{2^n}+B_1Y^{2^n-1}+\dots+B_{2^n}$, where for $1\leq l\leq 2^n$ the coefficient $B_l$ is an element of $\C[T,U]$ of the form
$$B_l = (-1)^l \sum_{l\leq j_1<\dots <j_l< 2^n} j_1\dots j_l +TL_l+\text{\ terms\ of\ higher\ degree\ in\ }T,$$
where $L_1,\dots,L_{2^n} \in \C[U]$ are $\C$--linearly independent.

Choose now different complex numbers $\eta_1,\dots,\eta_{2^n}$ from $\C - \{ j\in\Z ; 0\leq j<2^n \}$ and observe that for $1\leq \kappa'\leq 2^n$ the identities
$$\frac{\partial F^q}{\partial T} (0,U,\eta_{\kappa '}) = 
q F^{q-1}(0,U,\eta_{\kappa '}) \frac{\partial F}{\partial T}(0,U,\eta_{\kappa '})=
q \prod_{0 \leq j< 2^n}(\eta_{\kappa '}-j)^{q-1} \sum_{1 \leq l\leq 2^n} L_l \eta_{\kappa '}^{2^n-l}$$
and 
$$\frac{\partial F^q}{\partial T} (0,U,\eta_{\kappa '}) =  \sum_{1\leq \kappa \leq 2^n q} \Delta_\kappa \eta_{\kappa '}^{2^n q - \kappa}$$
hold.

Since $L_1,\dots,L_{2^n}$ are $\C$--linearly independent, we deduce from the non--singularity of the Vandermonde matrix $(\eta_{\kappa '}^{2^n - l})_{1\leq l,\kappa'\leq 2^n}$ that $2^n$ many of the polynomials $\Delta_1,\dots,\allowbreak \Delta_{2^n q}$ of $\C[U]$ are $\C$--linearly independent.

Consider now an arbitrary point $u\in\A^n$ and let $\epsilon_u:\A^1\to\A^m$ and $\delta_u:\A^1\to\A^{2^n q}$ be the polynomial maps defined for $t\in\A^1$ by $\epsilon_u(t):=\nu^*(t,u)$ and $\delta_u(t):=\varphi(t,u)$. Then we have $\epsilon_u(0)=\nu^*(0,u)=w$ and $\delta_u(0)=\varphi(0,u)=\lambda$, independently of $u$. Moreover, from $\varphi=\omega\circ\nu^*$ we deduce $\delta_u=\omega\circ\epsilon_u$.

Thus \eqref{(****)} implies
\begin{equation}
\label{(+)}
(\Delta_1(u),\dots,\Delta_{2^n q}(u))=\frac{\partial\varphi}{\partial t}(0,u)=\delta_u'(0)=(D\omega)_w(\epsilon_u'(0)),	
\end{equation}
where $(D \omega)_w$ denotes the (first) derivative of the $m$--variate polynomial map $\omega$ at the point $w\in\A^m$ and $\delta_u'(0)$ and $\epsilon_u'(0)$ are the derivatives of the parameterized curves $\delta_u$ and $\epsilon_u$ at the point $0\in\A^1$. We rewrite now \eqref{(+)} in matrix form, replacing $(D\omega)_w$ by the corresponding transposed Jacobi matrix $M\in\A^{m\times 2^n q}$ and $\delta_u'(0)$ and $\epsilon_u'(0)$ by the corresponding points of $\A^{2^n q}$ and $\A^m$, respectively. 

Then \eqref{(+)} takes the form
\begin{equation}
\label{(++)}
(\Delta_1(u),\dots,\Delta_{2^n q}(u))= \epsilon_u'(0)M,	
\end{equation}
where the complex $(m\times 2^n q)$--matrix $M$ is independent of $u$.

Since $2^n$ many of the polynomials $\Delta_1,\dots,\Delta_{2^n}\in\C[U]$ are $\C$--linearly independent, we may choose points $u_1,\dots,u_{2^n}\in\A^n$ such that the complex $(2^n \times 2^n q)$--matrix  
$$N:=(\Delta_{\kappa}(u_l))_{\stackrel{\text{\scriptsize{$1\leq l\leq 2^n$}}}{\text{\scriptsize{$1\leq \kappa \leq 2^n q$}}} }$$ 
has rank $2^n$.

Let $K$ be the complex $(2^n\times m)$--matrix whose rows are $\epsilon_{u_1}'(0),\dots,\epsilon_{u_{2^n}}'(0)$. 

Then \eqref{(++)} implies the matrix identity 
$$N=K\cdot M.$$
Since $N$ has rank $2^n$, the rank of the complex $(m\times 2^n)$--matrix $M$ is at least $2^n$. This implies
\begin{equation}
\label{(+++)}
m\geq 2^n.	
\end{equation}
Therefore the circuit $\gamma$ contains $m\geq 2^n$ essential parameters.

Let $L$ be the number of essential multiplications executed by the parameterized arithmetic circuit $\gamma$ and let $L'$ be the total number of multiplications of $\gamma$, excepting those by scalars from $\C$. Then, after a well--known standard rearrangement \cite{PS73} of $\gamma$, we may suppose without loss of generality, that there exists a constant $c>0$ (independent of the input circuit $\gamma$ and the procedure $\mathcal{A}$) such that $L\geq cm^{\frac{1}{2}}$ and $L'\geq cm$ holds.

From the estimation \eqref{(+++)} we deduce now that the circuit $\gamma$ performs at least $\Omega(2^{\frac{n}{2}})$ essential multiplications and at least $\Omega(2^n)$ multiplications, including also multiplications with parameters. This finishes the proof of the theorem.
\end{proof}

\paragraph*{Observations} Let assumptions and notations be as before. In the proof of Theorem \ref{def: main theorem model} we made a substantial use of the output isoparametricity of the procedure $\mathcal{A}^{(1)}$ when we applied Proposition \ref{prop: T subset and composition} in order to guarantee the existence of a geometrically robust constructible map $\sigma$ defined on $\mathcal{T}$ which satisfies the condition $\nu=\sigma\circ\theta$. The conclusion was that the entries of $\nu^*=\psi\circ \nu$ are polynomials of $\C[T,U]$ which are integral over $\C[\theta]_{\mathfrak{M}}$. This implied finally that $\nu^*(0,U)$ is a fixed point of the affine space $\A^m$. Taking into account the results of \cite{CaGiHeMaPa03}, Sections 3.2 and 5.1 it suffices to require that the procedure $\mathcal{A}^{(1)}$ is \emph{output coalescent} in order to arrive to the same conclusion. This means that Theorem \ref{def: main theorem model} remains valid if we require only that the procedure $\mathcal{A}^{(1)}$ is output coalescent.

\enter
In the proof of Theorem \ref{def: main theorem model} we have exhibited an infinite sequence of flat families of zero--dimensional elimination problems represented by robust parameterized arithmetic circuits of small size, such that any implementation of their associated elimination polynomials, obtained by a procedure of our extended computation model which solves the given elimination task for any instance, requires circuits of exponential size.

The statement of Theorem \ref{def: main theorem model} may also be interpreted in terms of a mathematically certified trade--off of quality attributes. Suppose for the moment that we had built our model for branching parsimonious computation in the same way as in Section \ref{sec:Model-discussion}, omitting the requirement of isoparametricity for recursive routines, however. Recall that this requirement implies the output isoparametricity of any algorithm of our extended computation model and recall from Section \ref{sec:Model-discussion-simplified} that well behavedness under reduction is a quality attribute which implies output isoparametricity and therefore also the conclusion of Theorem \ref{def: main theorem model}. 

A complexity class like ``exponential time in worst case'' represents also a quality attribute. Thus we see that the quality attribute ``well behavedness under reduction'' implies the quality attribute ``exponential time in worst case'' for any essentially division--free procedure of our extended computation model which solves algorithmically the general instance of any given flat family of zero--dimensional problems.

\enter
The proof of Theorem \ref{def: main theorem model} depends substantially on the decomposition of the elimination procedure $\mathcal{A}$ into two subprocedures $\mathcal{A}^{(1)}$ and $\mathcal{A}^{(2)}$ satisfying conditions $(i)$ and $(ii)$ of Section \ref{sec:Model-discussion-programs and algorithms}. We are now going to justify this architectural restriction on the procedure $\mathcal{A}$ for the particular case of elimination algorithms. 

As at the beginning of this section, let $U:=(U_1,\dots,U_r)$, $X:=(X_1,\dots,X_n)$, $G_1,\dots,G_n$, $H\in\C[U,X]$ and $F\in\C[U,Y]$ such that $G_1=0,\dots,G_n=0$ and $H$ constitute a flat family of zero--dimensional elimination problems and $F$ its associated elimination polynomial. Suppose that $G_1,\dots,G_n$ and $H$ are implemented by an essentially division--free, robust parameterized arithmetic circuit $\beta$ with parameter domain $\A^r$ and inputs $X_1,\dots,X_n$.

All \emph{known} algorithms which solve the general instance of any flat family of zero--dimensional elimination problems may be interpreted as belonging to our restricted set of procedures. They compute directly the elimination polynomial $F$ (and not an arbitrary power of it). Thus let $\mathcal{A}$ be such a known algorithm and let $\mathcal{A}^{(1)}$ and $\mathcal{A}^{(2)}$ be the subalgorithms which compose $\mathcal{A}$ in the same way as before. Then $\mathcal{A}^{(1)}$ computes the coefficients of $F$, where $F$ is considered as a polynomial over $\C[U]$ in the indeterminate $Y$. The subalgorithm $\mathcal{A}^{(2)}$ may be interpreted as the Horner scheme which evaluates $F$ from its precomputed coefficients and $Y$.

Observe that $F$, and hence $\deg_Y F$, depends only on the polynomials $G_1,\dots,G_n$ and $H$, but not on the particular circuit $\beta$. Therefore $\deg_Y F$ is determined by $\psi(\beta)$, where $\psi$ is the low level program of the algorithm $\mathcal{A}$.\\
For any parameter instance $u\in\A^r$ we may think $(\mathcal{A}^{(1)}(\beta))^{(u)}$ as a constraint database (in the sense of \cite{HeiKu04} and \cite{HeiKu07}) which allows to evaluate the univariate polynomial $F^{(u)}\in\C[Y]$ as often as desired for arbitrary inputs $y\in\A^1$, using each time a number of arithmetic operations in $\C$, namely $\deg_Y F$, which does not depend on the non--scalar size of $\beta$. 

Moreover $\mathcal{A}$ satisfies the following condition:
\textit{
\begin{itemize}
	\item[$(D)$]  There exist non--decreasing real valued functions $C_1 \geq 0$ and $C_2 \geq 0$ depending on dynamic integer vectors, such that for $L_{\beta}$ and $L_{\mathcal{A}(\beta)}$, being the non--scalar sizes of the circuits $\beta$ and $\mathcal{A}(\beta)$, the inequality
$$L_{\mathcal{A}(\beta)}\leq C_1 (\psi(\beta))L_{\beta} + C_2(\psi(\beta))$$
holds. 
\end{itemize}
}

Let now $\mathcal{A}$ be an \emph{arbitrary}, essentially division--free algorithm of our extended computation model which solves the general instance of any flat family of zero--dimensional elimination problems and let $\beta$ be an input circuit for $\mathcal{A}$ which represents a particular family of such problems. Let $F$ be the associated elimination polynomial.

Then the complexity of the algorithm $\mathcal{A}$ is only competitive with known elimination algorithms if we require that the number of \emph{essential} additions and multiplications of $\mathcal{A}_{\text{final}}(\beta)$ is bounded by $2\cdot \deg_Y F$. This leads us to the requirement that $\mathcal{A}$ must be decomposable in two subalgorithms $\mathcal{A}^{(1)}$ and $\mathcal{A}^{(2)}$ as above.  

Therefore any elimination algorithm of our extended computation model which is claimed to improve upon known algorithms for \emph{all} admissible input circuits $\beta$, must have this architectural structure. In particular, such an algorithm cannot call the input circuit $\beta$ when the output variable $Y$ became already involved. This justifies the architectural restriction we made in the statement and proof of Theorem \ref{def: main theorem model}.

Observe that the final results of the circuit $\mathcal{A}^{(1)}(\beta)$ form a geometrically robust constructible map defined on the parameter domain of the circuit $\beta$. For a given parameter instance, the value of this map allows to compute the value of the coefficient vector of the elimination polynomial $F$ on this instance.

\enter
Moreover, the competitivity of $\mathcal{A}$ with known elimination algorithms requires that $\mathcal{A}$ must satisfy condition $(D)$.

From Theorem \ref{def: main theorem model} and its proof we deduce now the lower bound
$$\text{max}\{ C_1(\psi(\beta_n)), C_2(\psi(\beta_n))\} = \Omega (\frac{\delta_n}{L_{\beta_n}}),$$
where $\delta_n$ is the geometric degree of the subvariety of $\A^r\times\A^{n+1}$ defined by the polynomials $G_1^{(n)},\dots,G_n^{(n)},Y-H^{(n)}$ (observe $\delta_n = 2^n$). Adding to $\beta_n$ a suitable addition node we obtain a totally division--free new circuit $\beta_n^*$ which represents $G_1^{(n)},\dots,G_n^{(n)}$ and $Y-H^{(n)}$. Observe that for each $(s,u)\in\A^1\times\A^n$ the degree pattern of the polynomials $G_1^{(n)},\dots,G_n^{(n)},Y-H(s,u,X)$ is constant and the system degree is $\delta_n$. The polynomial $F^{(n)}$ is the output of the Kronecker algorithm applied to $\beta_n^*$ and the variable $Y$. Therefore the algorithm $\mathcal{A}$ produces on $\beta_n$ the same output as the Kronecker algorithm applied to $\beta_n^*$ and the variable $Y$. We conclude now from $L_{\beta_n^*}=O(n)$ that the Kronecker algorithm is nearly optimal in our extended computation model.

\enter

In our computation model, algorithms are transformations of parameterized arithmetic circuits over one and the same parameter domain. This represents a substantial ingredient for the proof of Theorem \ref{def: main theorem model}. If we allow branchings which lead to subdivisions of the parameter domain of the input circuit, the conclusion of Theorem \ref{def: main theorem model} may become uncertain (see \cite{GHK11}).

Our computation model is also restrictive in another sense:\\
suppose that there is given an essentially division--free, robust parameterized arithmetic circuit $\beta_n$ evaluating the polynomial $H^{(n)}$ as in the proof of Theorem \ref{def: main theorem model} and an essentially division--free procedure $\mathcal{B}$ of our extended computation model which recomputes $H^{(n)}$ from the input $\beta_n$. Then, the output of $\mathcal{B}$ on $\beta_n$ is an essentially division--free robust parameterized arithmetic circuit of size $\Omega(2^n)$, although the size of $\beta_n$ is $O(n)$.    
 
\subsection{The elimination of a block of existential quantifiers}
\label{sec: The elimination of a block of existential quantifiers}

Let notations be the same as in the proof of Theorem \ref{def: main theorem model} in Section \ref{ 6.5.1 flat}. Let $n\in\N$, $S_1,\dots,S_n$ new indeterminates, $S:=(S_1,\dots,S_n),$ $\hat{G}_1^{(n)}:=X_1^2-X_1-S_1, \dots,\hat{G}_n^{(n)}:=X_n^2-X_n-S_n$ and again $H^{(n)}:=\sum_{1\leq i\leq n} 2^{i-1}X_i + T \prod_{1\leq i\leq n}(1 + (U_i-1)X_i).$

Observe that the polynomials $\hat{G}_1^{(n)},\dots,\hat{G}_n^{(n)}$ form a reduced regular sequence in $\C[S,T,U,X]$ and that they define a subvariety $\hat{V}_n$ of the affine space $\A^n\times\A^1\times\A^n\times\A^n$ which is isomorphic to $\A^n\times\A^1\times\A^n$ and hence irreducible and of dimension $2n+1$. Moreover, the morphism $\hat{V}_n \to \A^n\times\A^1\times\A^n $ which associates to any point $(s,t,u,x)\in\hat{V}_n$ the point $(s,t,u)$, is finite and generically unramified. Therefore the morphism $\hat{\pi}_n:\hat{V}_n \to \A^n\times\A^1\times\A^n\times\A^1$ which associates to any $(s,t,u,x)\in \hat{V}_n$ the point $(s,t,u,H^{(n)}(t,u,x))\in \A^n\times\A^1\times\A^n\times\A^1$ is finite and its image $\hat{\pi}_n(\hat{V}_n)$ is a hypersurface of $\A^n\times\A^1\times\A^n\times\A^1$ with irreducible minimal equation $\hat{F}^{(n)}\in \C[S,T,U,Y]$.

Hence $\hat{G}_1^{(n)}=0,\dots,\hat{G}_n^{(n)}=0$ and $H^{(n)}$ represent a flat family of zero--dimensional elimination problems whose associated elimination polynomial is just $\hat{F}^{(n)}$.

Observe that $\deg \hat{F}^{(n)}= \deg_Y \hat{F}^{(n)}=2^n$ and that for $0\in\A^n$ the identity $$\hat{F}^{(n)}(0,T,U,Y)=F^{(n)}(T,U,Y) \text{\ws holds,}$$
where $F^{(n)}$ is the elimination polynomial associated with the flat family of zero dimensional elimination problems given by $X_1^2-X_1=0$, $\dots$, $X_n^2-X_n=0$ and $H^{(n)}$. Since $\hat{F}^{(n)}$ is irreducible, any equation of $\C[S,T,U,Y]$ which defines $\hat{\pi}_n(\hat{V}_n)$ in $\A^n\times\A^1\times\A^n\times\A^1$ is without loss of generality a power of $\hat{F}^{(n)}$.

We consider now $S_1,\dots,S_n,T,U_1,\dots,U_n$ as basic parameters, $X_1,\dots,X_n$ as input and $Y$ as output variables.

Let $\mathcal{A}'$ be an essentially division--free procedure of our extended computation model satisfying the following condition: \\
$\mathcal{A}'$ accepts as input any robust parameterized arithmetic circuit $\beta$ which represents the general instance of a flat family of zero--dimensional elimination problems with associated elimination polynomial $F$ and $\mathcal{A}'_{\text{final}}(\beta)$ has a single input $Y$ and a single final result which defines the same hypersurface as $F$.

With this notions and notations we have the following result.

\begin{proposition}
\label{proposition A}
There exist an ordinary division--free arithmetic circuit $\hat{\beta}_n$ of size $O(n)$ over $\C$ with inputs $S_1,\dots,S_n$, $T$, $U_1,\dots,U_n$, $X_1,\dots,X_n$ and final results $\hat{G}_1^{(n)},\dots,\hat{G}_n^{(n)},H^{(n)}$. The essentially division--free, robust parameterized arithmetic circuit $\hat{\gamma}_n:=\mathcal{A}'_{\text{final}}(\hat{\beta}_n)$ depends on the basic parameters $S_1,\dots,S_n$, $T$, $U_1,\dots,U_n$ and the input $Y$ and its single final result is a power of $\hat{F}^{(n)}$. The circuit $\hat{\gamma}_n$ performs at least $\Omega(2^{\frac{n}{2}})$ essential multiplications and at least $\Omega(2^n)$ multiplications with parameters. As ordinary arithmetic circuit over $\C$ with inputs $S_1,\dots,S_n$, $T$, $U_1,\dots,U_n$ and $Y$, the circuit $\hat{\gamma}_n$ has non--scalar size at least $\Omega(2^n)$.
\end{proposition}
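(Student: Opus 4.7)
The construction of $\hat{\beta}_n$ is immediate: I would obtain it from the circuit $\beta_n$ built in the proof of Theorem \ref{def: main theorem model} by appending $n$ subtraction nodes that compute $\hat{G}_i^{(n)} = (X_i^2 - X_i) - S_i$, using the new basic parameters $S_1, \dots, S_n$ and the already--available intermediate results $X_i^2 - X_i$. The resulting division--free ordinary arithmetic circuit inherits $H^{(n)}$ as one of its final results, has non--scalar size $O(n)$, basic parameters $S_1, \dots, S_n, T, U_1, \dots, U_n$ and inputs $X_1, \dots, X_n$. As a parameterized arithmetic circuit it is robust, and its final results constitute the flat family introduced in the setup of the proposition.

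Applying $\mathcal{A}'$ yields $\hat{\gamma}_n := \mathcal{A}'_{\text{final}}(\hat{\beta}_n)$, an essentially division--free robust parameterized arithmetic circuit with parameter domain $\A^{2n+1}$ and single input $Y$, whose final result defines the same hypersurface as $\hat{F}^{(n)}$. Since $\hat{F}^{(n)}$ is irreducible, this final result is a power $(\hat{F}^{(n)})^{\hat{q}_n}$ for some $\hat{q}_n \in \N$. I then mimic the argument of Theorem \ref{def: main theorem model} verbatim, replacing the specialization $T = 0$ by the joint specialization $(S,T) = (0,0)$. Let $\hat{\theta}$ denote the coefficient vector (in $X$) of the final results of $\hat{\beta}_n$; then $\hat{\theta}(0, 0, U)$ is a fixed point of a suitable affine space, because the coefficients of $X_i^2 - X_i$ and of $H^{(n)}(0,U,X) = \sum_{1\le i\le n} 2^{i-1} X_i$ are integers independent of $U$. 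Using the architectural decomposition $\mathcal{A}' = \mathcal{A}'^{(2)} \circ \mathcal{A}'^{(1)}$ forced by the fact that $Y$ is introduced only as an auxiliary variable, together with the output isoparametricity of $\mathcal{A}'^{(1)}$ supplied by Proposition \ref{prop: T subset and composition}, I obtain a factorization $\hat{\nu}^* = \hat{\psi}\circ\hat{\sigma}\circ\hat{\theta}$ of the essential parameter vector of $\hat{\gamma}_n$ through geometrically robust constructible maps $\hat{\sigma}$ and $\hat{\psi}$. The integrality step of the proof of Theorem \ref{def: main theorem model} then shows that $\hat{\nu}^*(0, 0, U)$ is also a fixed point of the ambient affine space $\A^{\hat{m}}$, where $\hat{m}$ is the length of $\hat{\nu}^*$.

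The derivative computation with respect to $T$ at $(S,T) = (0,0)$ now proceeds identically, thanks to the identity $\hat{F}^{(n)}(0, T, U, Y) = F^{(n)}(T, U, Y)$, which transforms the Taylor expansion of the coefficients of $(\hat{F}^{(n)})^{\hat{q}_n}$ in $Y$ into the one analyzed in Theorem \ref{def: main theorem model}. The Vandermonde argument applied to the coefficients $B_1, \dots, B_{2^n}$ of $F^{(n)}$ still produces $2^n$ many $\C$--linearly independent polynomials $\Delta_\kappa \in \C[U]$, and the chain rule applied to $\hat{\varphi} = \hat{\omega}\circ\hat{\nu}^*$ (with $\hat{\omega}$ the polynomial map realized by the essentially division--free circuit $\hat{\gamma}_n$ from its essential parameters) yields the matrix identity $N = K\cdot M$ in which $N$ has rank $2^n$ and $M$ is of size $\hat{m}\times 2^n \hat{q}_n$; hence $\hat{m} \ge 2^n$. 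The Pan--Strassen rearrangement \cite{PS73} now forces at least $\Omega(2^{n/2})$ essential multiplications and at least $\Omega(2^n)$ multiplications with parameters in $\hat{\gamma}_n$, completing the argument. The main obstacle to watch is that the enlargement of the parameter domain from $\A^{n+1}$ to $\A^{2n+1}$ not spoil the coalescence step; this is handled cleanly by the observation that $H^{(n)}$ does not involve $S$, so that specializing $S = 0$ leaves the derivative computation in $T$ unchanged.
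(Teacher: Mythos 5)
Your proof is correct and takes essentially the same route as the paper: both reduce Proposition~\ref{proposition A} to the argument of Theorem~\ref{def: main theorem model} by exploiting $\hat{F}^{(n)}(0,T,U,Y)=F^{(n)}(T,U,Y)$ and the fact that specializing $S$ to $0$ recovers the data of that theorem. The paper performs the restriction of the parameter domain to $\{0\}\times\A^1\times\A^n$ up front, explicitly flagging that the restricted circuit $\hat{\gamma}_n|_{S=0}$ may become syntactically inconsistent but noting this is harmless because the argument lives at the level of geometrically robust constructible functions; you instead carry $S$ through the whole argument and specialize $(S,T)=(0,0)$ at the integrality and derivative steps, which amounts to the same thing and, as a bonus, sidesteps the inconsistency caveat.
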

\begin{proof}
The existence of an ordinary division--free arithmetic circuit as in the statement of Proposition \ref{proposition A} is evident. The rest follows immediately from the proof of Theorem \ref{def: main theorem model} in Section \ref{ 6.5.1 flat} by restricting the parameter domain $\A^n\times\A^1\times\A^n$ of $\hat{\beta}_n$ and $\hat{\gamma}_n$ to $\A^1\times\A^n$, i.e., by specializing $S$ to $0\in\A^n$. Observe that this restriction of $\hat{\gamma}_n$ may become an inconsistent circuit, but this does not affect the argumentation which is based on the consideration of suitable geometrically robust constructible functions.
\end{proof}
\enter

Suppose now that there is given another essentially division--free procedure $\mathcal{A}''$ of our extended computation model satisfying the following condition:\\
$\mathcal{A}''$ accepts as input any robust arithmetic circuit $\beta$ which represents the general instance of a flat family of zero--dimensional elimination problems with associated elimination polynomial $F$ and there exists a Boolean circuit $b$ in as many variables as the number of final results of $\mathcal{A}_{\text{final}}''(\beta)$ such that the algebraic variety defined by $F$ coincides with the constructible set which can be described by plugging into $b$ the final results of $\mathcal{A}_{\text{final}}''(\beta)$ as polynomial equations. 	

Observe that this represents the most general architecture we can employ to adapt in the spirit of Section \ref{sec:Model-discussion-programs and algorithms} our extended computation model for \emph{functions} to \emph{parametric decision problems}. 

Let $s\in\N$ and $A_1,\dots,A_s$ new indeterminates with $A:=(A_1,\dots,A_s)$. We suppose that there is given an essentially division--free procedure $\mathcal{B}$ of our extended computation model which accepts as input any essentially division--free, robust parameterized arithmetic circuit $\gamma$ with the basic parameters $A_1,\dots,A_s$ and the input variable $Y$, such that $\mathcal{B}_{\text{final}}(\gamma)$ represents, by its output nodes, in $\C[A,Y]$ the multiplicative decomposition of the final results of $\gamma$ by their greatest common divisor and complementary factors. 

In this sense, we call the procedure $\mathcal{B}$ a \emph{GCD algorithm}.

\enter
Let $\psi_{\mathcal{A}''}$ and $\psi_{\mathcal{B}}$ be the given low level programs of the procedures $\mathcal{A}''$ and $\mathcal{B}$. We require that $\mathcal{A}''$ and $\mathcal{B}$ are competitive with known algorithms which solve the same tasks. Following our argumentation in Section \ref{ 6.5.1 flat} we may therefore suppose that there exist four non--decreasing real valued functions $C_1\geq 0$, $C_2\geq 0$ and $D_1\geq 0$, $D_2\geq 0$ which depend on dynamic integer vectors and which satisfy the estimates
$$L_{\mathcal{A}''(\beta)}\leq C_1(\psi_{\mathcal{A}''}(\beta))L_{\beta} + C_2(\psi_{\mathcal{A}''}(\beta))$$ 
and
$$L_{\mathcal{B}(\gamma)}\leq D_1(\psi_{\mathcal{B}}(\gamma))L_{\gamma} + D_2(\psi_{\mathcal{B}}(\gamma)).$$

We consider again the ordinary division--free arithmetic circuit $\hat{\beta}_n$ of Proposition \ref{prop: T subset and composition} which represents the polynomials $\hat{G}_1^{(n)},\dots,\hat{G}_n^{(n)}$ and $H^{(n)}$.

With these notions and notations we may now formulate the following statement.

\begin{theorem}
\label{proposition C D}
Let assumptions and notations be as before. Then we have 
	\[\text{max}\{ C_i(\psi_{\mathcal{A}''}(\hat{\beta}_n)),
	               D_i(\psi_{\mathcal{B}}(\mathcal{A}_{\text{final}}''(\hat{\beta}_n)));
	               i=1,2\} = \Omega(\frac{2^{\frac{n}{2}}}{n})  
\]
\end{theorem}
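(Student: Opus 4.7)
My plan is to reduce the theorem to Proposition~\ref{proposition A} by synthesizing $\mathcal{A}''$ and $\mathcal{B}$ (plus a short post--processing step) into an essentially division--free procedure $\mathcal{A}'$ of the kind covered by that proposition. Write $\gamma:=\mathcal{A}''_{\text{final}}(\hat{\beta}_n)$ and $\delta:=\mathcal{B}(\gamma)$. The circuit $\gamma$ outputs polynomials $f_1,\dots,f_k\in\C[S,T,U,Y]$ together with a Boolean combination $b$ such that the constructible set described by plugging $f_i=0$ into $b$ coincides with the hypersurface $\{\hat F^{(n)}=0\}$; the circuit $\delta$ outputs the greatest common divisor $g:=\gcd(f_1,\dots,f_k)$ and the complementary factors $h_i:=f_i/g$.

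The key geometric input I would use is that $\{\hat F^{(n)}=0\}$ is an irreducible hypersurface of codimension one. From this it follows that at least one of the $f_i$ must be divisible by $\hat F^{(n)}$, for otherwise every $\{f_i=0\}\cap\{\hat F^{(n)}=0\}$ would be a proper subvariety, and the Boolean combination $b$ evaluated on a generic point of $\{\hat F^{(n)}=0\}$ would agree with its evaluation at suitable points outside, contradicting that $b$ describes exactly $\{\hat F^{(n)}=0\}$. Since $\hat F^{(n)}$ is irreducible and $f_i=g\cdot h_i$, either $\hat F^{(n)}\mid g$ or $\hat F^{(n)}\mid h_i$. By examining which ``conjunctive branch'' of $b$ is Zariski--dense in $\{\hat F^{(n)}=0\}$ and exploiting the pairwise coprimality of the $h_i$, I would select a subset $J$ of the output nodes of $\delta$ such that $P:=\prod_{j\in J}(\text{output }j)$ satisfies $\{P=0\}=\{\hat F^{(n)}=0\}$; since $\hat F^{(n)}$ is irreducible, necessarily $P=c\,(\hat F^{(n)})^{q_n}$ for some $c\in\C^*$ and integer $q_n\ge 1$. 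The recipe $\mathcal{A}':=(\text{product over }J)\circ\mathcal{B}\circ\mathcal{A}''$ is then an essentially division--free procedure of the type treated by Proposition~\ref{proposition A}, and the appended product adds at most $O(k)$ multiplications on top of $\delta$.

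Next, Proposition~\ref{proposition A} applied to $\mathcal{A}'$ on input $\hat{\beta}_n$ yields that $\mathcal{A}'_{\text{final}}(\hat{\beta}_n)$, viewed as an ordinary arithmetic circuit over $\C$ with inputs $S_1,\dots,S_n,T,U_1,\dots,U_n,Y$, has non--scalar size at least $\Omega(2^n)$. Setting $M:=\max\{C_i(\psi_{\mathcal{A}''}(\hat{\beta}_n)),\,D_i(\psi_{\mathcal{B}}(\gamma));\;i=1,2\}$ and combining the hypothesized compositional estimates
\[
L_\gamma\ \le\ C_1\cdot L_{\hat{\beta}_n}+C_2\ \le\ C_1\cdot O(n)+C_2,\qquad L_\delta\ \le\ D_1 L_\gamma+D_2,
\]
with the crude bound $k=O(L_\gamma)+O(n)$ on the number of output nodes of $\gamma$, I would derive
\[
\Omega(2^n)\ \le\ L_\delta+O(k)\ \le\ O(n)\,M^2+O(M)+O(n),
\]
whence $M\ge\Omega(2^{n/2}/\sqrt{n})$, and \emph{a fortiori} $M\ge\Omega(2^{n/2}/n)$, as claimed.

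The hard part will be the combinatorial/geometric step that selects the subset $J$: although the irreducibility of $\hat F^{(n)}$ and the logical structure of $b$ jointly force some product of outputs of $\delta$ to define precisely $\{\hat F^{(n)}=0\}$, making this rigorous requires carefully tracking which disjunctive branches of $b$ are Zariski--dense in $\{\hat F^{(n)}=0\}$ and using the pairwise coprimality of the complementary factors $h_i$ to rule out extraneous irreducible divisors in $P$. Everything else (the reduction to Proposition~\ref{proposition A}, the bookkeeping with $C_i,D_i$, and the algebra that converts the compositional inequality into the claimed asymptotic lower bound on $M$) is routine given the output isoparametricity of essentially division--free procedures established in Section~\ref{sec:Model-discussion-programs and algorithms}.
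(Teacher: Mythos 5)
Your high--level plan (compose $\mathcal{A}''$ and $\mathcal{B}$ with a cheap post--processing step, reduce to Proposition~\ref{proposition A}, then do the $C_i,D_i$ bookkeeping) is the same as the paper's, and your arithmetic at the end is fine --- it even gives the slightly stronger bound $\Omega(2^{n/2}/\sqrt{n})$, which \emph{a fortiori} implies the stated one.

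The step that does not go through is the one you flag as ``the hard part,'' and it is in fact a genuine gap rather than a routine verification. You apply $\mathcal{B}$ directly to $\gamma:=\mathcal{A}''_{\text{final}}(\hat{\beta}_n)$ and claim that some subset $J$ of the output nodes of $\delta=\mathcal{B}(\gamma)$ yields $\prod_{j\in J}(\text{output }j)=c\,(\hat F^{(n)})^{q_n}$. But the outputs of $\delta$ are $g:=\gcd(f_1,\dots,f_k)$ and the complementary factors $h_i:=f_i/g$; any product of a subset of these has the form $g^{c}\prod_{i\in I}h_i$ with $c\geq 0$. When the Boolean circuit $b$ contains disjunctions, the list $f_1,\dots,f_k$ typically contains polynomials \emph{not} divisible by $\hat F^{(n)}$ (e.g.\ the inequations of a branch), so $g$ has no $\hat F^{(n)}$--factor, and each $h_i$ that does contain $\hat F^{(n)}$ drags along an extraneous cofactor $p_i$ coprime to $\hat F^{(n)}$. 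Concretely, if $f_1=\hat F^{(n)}p_1$, $f_2=p_2$, $f_3=\hat F^{(n)}p_3$, $f_4=p_4$ with $p_1,\dots,p_4$ pairwise coprime and coprime to $\hat F^{(n)}$, and $b=(f_1=0\wedge f_2\neq 0)\vee(f_3=0\wedge f_4\neq 0)$, then $g=1$ and $h_i=f_i$, and no product of a subset of $\{1,f_1,\dots,f_4\}$ is a pure power of $\hat F^{(n)}$. The irreducibility of $\hat F^{(n)}$ does not rescue this: it tells you that \emph{if} the product defines $\{\hat F^{(n)}=0\}$ set--theoretically then it is a power of $\hat F^{(n)}$, but here no subproduct defines that hypersurface.

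The paper's proof avoids this by not feeding $\gamma$ to $\mathcal{B}$ at all. It first extracts from the DNF of $b$ a single conjunctive clause $P_1=0\wedge\dots\wedge P_m=0\wedge R_1\neq 0\wedge\dots\wedge R_t\neq 0$ that is Zariski open and dense in the irreducible hypersurface $\{\hat F^{(n)}=0\}$; since $\hat F^{(n)}$ is irreducible, this forces $\hat F^{(n)}\mid P_i$ for all $i$. Setting $R:=R_1\cdots R_t$, one has $\gcd(P_1,\dots,P_m)=(\hat F^{(n)})^q\,Q$ with $Q=\gcd(P_1,\dots,P_m,R)$. The construction then applies $\mathcal{B}$ to two modified circuits $\gamma_1^{(n)}$ (outputs $P_1,\dots,P_m$) and $\gamma_2^{(n)}$ (outputs $P_1,\dots,P_m,R$), and then once more to their union, so as to divide out $Q$ and isolate $(\hat F^{(n)})^q$. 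This is why the paper's size estimate carries a factor $3L_{\mathcal{B}(\cdot)}$ plus a final division node, while your construction has only a single $\mathcal{B}$ call followed by a product --- which is exactly the evidence that the single call cannot suffice. To repair your argument you would need to replace the ``select $J$ after applying $\mathcal{B}$ to $\gamma$'' step with the paper's ``select a dense conjunctive clause, then apply $\mathcal{B}$ to tailor--made subcircuits'' step.
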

\begin{proof}
If we plug into the Boolean circuit $b$ the final results of $\mathcal{A}_{\text{final}}''(\hat{\beta}_n)$ as polynomial equations, we obtain by assumption a description of the hypersurface $\hat{\pi}(\hat{V}_n)$ of the affine space $\A^n \times \A^1 \times \A^n \times \A^1$. This implies that between the final results of $\mathcal{A}_{\text{final}}''(\hat{\beta}_n)$ there exists a selection, say the polynomials $P_1,\dots,P_m$ and $R_1,\dots,R_t$ of $\C[S,T,U,Y]$ such that the formula
$$P_1=0 \wedge \dots \wedge P_m=0 \wedge R_1\neq 0 \wedge \dots \wedge R_t\neq 0$$
defines a nonempty Zariski open (and dense) subset of the irreducible surface $\hat{\pi}(\hat{V}_n)$ of $\A^n\times\A^1\times\A^n\times\A^1$. 

Let $R:=R_1 \dots R_t$ and observe that the greatest common divisor of $P_1,\dots,P_m$ has the form $(\hat{F}^{(n)})^q \cdot Q$, where $q$ belongs to $\N$ and $Q$ is the greatest common divisor of $P_1,\dots,P_m,R$. Therefore we may compute $(F^{(n)})^q$ in the following way: erasing suitable nodes from the circuit $\mathcal{A}_{\text{final}}''(\hat{\beta}_n)$ and adding $t - 1$ multiplication nodes we obtain two robust parameterized arithmetic circuits $\gamma_1^{(n)}$ and $\gamma_2^{(n)}$ with basic parameters $S_1,\dots,S_n$, $T$, $U_1,\dots,U_n$ and input $Y$ whose final results are $P_1,\dots,P_m$ and $P_1,\dots,P_m,R$ respectively.

Between the final results of $\mathcal{B}_{\text{final}}(\gamma_1^{(n)})$ and $\mathcal{B}_{\text{final}}(\gamma_2^{(n)})$ are the polynomials\\ 
$(\hat{F}^{(n)})^q\cdot Q$ and $Q$. Applying the procedure $\mathcal{B}$ to the union of $\mathcal{B}_{\text{final}}(\gamma_1^{(n)})$ and $\mathcal{B}_{\text{final}}(\gamma_2^{(n)})$ we obtain finally an essentially division--free, robust parameterized arithmetic circuit with basic parameters $S_1,\dots,S_n$, $T$, $U_1,\dots,U_n$ and input $Y$ whose single final result is $(\hat{F}^n)^q$.  

Joining the circuits $\mathcal{A}''(\hat{\beta}_n)$, $\mathcal{B}_{\text{final}}(\gamma_1^{(n)})$, $\mathcal{B}_{\text{final}}(\gamma_2^{(n)})$ and the final division node we obtain an ordinary arithmetic circuit of non--scalar size at most
$$1+3L_{\mathcal{B}(\mathcal{A}_{\text{final}}''(\hat{\beta}_n))}\leq$$
$$1+3(D_1(\psi_{\mathcal{B}}(\mathcal{A}_{\text{final}}''(\hat{\beta}_n)))L_{\mathcal{A}_{\text{final}}''(\hat{\beta}_n)}+
	D_2(\psi_{\mathcal{B}}(\mathcal{A}_{\text{final}}''(\hat{\beta}_n))))\leq$$ 
$$1+3C_1(\psi_{\mathcal{A}}(\hat{\beta}_n)) D_1(\psi_{\mathcal{B}}(\mathcal{A}_{\text{final}}''(\hat{\beta}_n)))L_{\hat{\beta}_n}+$$
$$3C_2(\psi_{\mathcal{A}}(\hat{\beta_n}))D_1(\psi_{\mathcal{B}}(\mathcal{A}_{\text{final}}''(\hat{\beta}_n)))+D_2(\psi_{\mathcal{B}}(\mathcal{A}_{\text{final}}''(\hat{\beta}_n))).$$
	
On the other hand we deduce from Theorem \ref{def: main theorem model} 
$$L_{\hat{\beta}_n}=O(n) \text{\ and\ } 1+3L_{\mathcal{B}(\mathcal{A}_{\text{final}}''(\hat{\beta}_n))} = \Omega(2^n).$$

This implies the estimate of Theorem \ref{proposition C D}.
\end{proof}
\enter

In a simple minded understanding, Theorem \ref{proposition C D} says that in our extended computation model either the elimination of a single existential quantifier block in a prenex first--order formula of the elementary language of $\C$ or the computation of a greatest common divisor of a finite set of circuit represented polynomials requires \emph{exponential time}. Complexity results in this spirit were already obtained in \cite{GH01} and \cite{CaGiHeMaPa03} (compare also Proposition \ref{proposition A} and Observation in Section \ref{ 6.5.1 flat}).
 
\subsection{Arithmetization techniques for Boolean circuits}
\label{Arithmetization techniques for Boolean circuits}
Let $m\in\N$ and let $0,1$ and $Z_1,\dots,Z_m$ be given constants and variables. Let $Z:=(Z_1,\dots,Z_m)$. Following the context we shall interpret $0,1$ as Boolean values or the corresponding complex numbers and $Z_1,\dots,Z_m$ as Boolean variables or indeterminates over $\C$. With $\wedge,\vee,\bar{\ws}$ we denote the Boolean operations ``and'', ``or'' and ``not''. A Boolean circuit $b$ with inputs $Z_1,\dots,Z_m$ is a DAG whose indegree zero nodes are labelled by $0,1$ and $Z_1,\dots,Z_m$ and whose inner nodes have indegree two or one. In the first case an inner node is labelled by $\wedge$ or $\vee$ and in the second by $\bar{\ws}$. Some inner nodes of $b$ become labelled as outputs. We associate with $b$ a semantics as follows:
\begin{enumerate}
	\item[-] indegree zero nodes which are labelled by $0,1$ become interpreted by the corresponding constant functions $\{ 0,1 \}^m \to \{ 0,1 \}$,
	\item[-] indegree zero nodes which are labelled by $Z_1,\dots,Z_m$ become interpreted by the corresponding projection function $\{ 0,1 \}^m \to \{ 0,1 \}$,
	\item[-] let $\rho$ be an inner node of $b$ of indegree two whose parent nodes $\rho_1$ and $\rho_2$ are already interpreted by Boolean functions $g_{\rho_1},g_{\rho_2}:\{ 0,1 \}^m \to \{ 0,1 \}$. If $\rho$ is labelled by $\wedge$, we interpret $\rho$ by the Boolean function $g_{\rho}:=g_{\rho_1} \wedge g_{\rho_2}$ and if $\rho$ is labelled by $\vee$, we interpret $\rho$ by the Boolean function $g_{\rho}:=g_{\rho_1} \vee g_{\rho_2}$,	
	\item[-] let $\rho$ be an inner node of $b$ of indegree one whose parent node $\rho'$ became already interpreted by a Boolean function $g_{\rho'}:\{ 0,1 \}^m \to \{ 0,1 \}$. Then we interpret $\rho$ by the Boolean function $g_{\rho}:=\ol{g_{\rho'}}$.   
\end{enumerate}

For a node $\rho$ of $b$ we call $g_{\rho}$ the \emph{intermediate result} of $b$ at $\rho$. If $\rho$ is an output node, we call $g_{\rho}$ a \emph{final result} of $b$. 

An arithmetization $\beta$ of the Boolean circuit $b$ consists of the same DAG as $b$ with a different labelling as follows.

Let $U,V$ be new indeterminates over $\C$. The constants $0,1$ become interpreted by the correspondent complex numbers and $Z_1,\dots,Z_m$ as indeterminates over $\C$. Let $\rho$ be an inner node of $\beta$. If $\rho$ has indegree two, then $\rho$ becomes labelled by a polynomial $R_{\rho}\in\Z[U,V]$ and if $\rho$ has indegree one by a polynomial $R_{\rho}\in\Z[U]$. The output nodes of $\beta$ and $b$ are the same.

Representing for each inner node $\rho$ of $\beta$ the polynomial $G_{\rho}$ by a division--free ordinary arithmetic circuit over $\Z$ in the inputs $U,V$ or $U$, we obtain an ordinary division--free arithmetic circuit over $\Z$ in the inputs $Z_1,\dots,Z_m$.

Just as we did in Section \ref{sec:Model-discussion-simplified} we may associate with $\beta$ a semantics which determines for each node $\rho$ of $\beta$ a polynomial $G_{\rho}\in\Z[Z]$. We say that $\beta$ is an \emph{arithmetization} of the Boolean circuit $b$ if the following condition is satisfied:\\
for any node $\rho$ of $b$ and any argument $z\in \{ 0,1 \}^m$ the Boolean value $g_{\rho}(z)$ coincides with the arithmetic value $G_{\rho}(z)$ (in a somewhat imprecise notation: $g_{\rho}(z)=G_{\rho}(z)$).	

An example of an arithmetization procedure is given by the map which associates to each node $\rho$ of $b$ a polynomial $[g_{\rho}]$ of $\Z[Z]$ satisfying the following conditions:
\begin{enumerate}
	\item[-] $[0]=0$, $[1]=1$, $[Z_1]=Z_1,\dots,[Z_m]=Z_m$ 
	\item[-] for $\rho$ an inner node of indegree two of $b$ with parents $\rho_1$ and $\rho_2$:
	$$ [g_{\rho}]=[g_{\rho_1}]\cdot[g_{\rho_2}] \text{\ if\ the\ label\ of\ $\rho$\ is\ $\wedge$}$$	
	and
	$$ [g_{\rho}]=[g_{\rho_1}]+[g_{\rho_2}]-[g_{\rho_1}]\cdot[g_{\rho_2}] \text{\ if\ the\ label\ of\ $\rho$\ is\ $\vee$}$$
  \item[-] for $\rho$ an inner node of indegree one of $b$ with parent $\rho'$:
  $$[g_{\rho}]=1-[g_{\rho'}].$$   
\end{enumerate}

The resulting arithmetic circuit is called the \emph{standard arithmetization} of $b$ (see, e.g., \cite{shamir92} and \cite{BF91}).


Let $n,r\in\N$ and $U_1,\dots,U_r,X_1,\dots,X_n$ be new variables. For $m:=n+r$ we replace now $Z$ by $U$ and $X$, where $U:=(U_1,\dots,U_r)$ and $X:=(X_1,\dots,X_n)$. We shall interpret $U_1,\dots,U_r$ as parameters and $X_1,\dots,X_n$ as input variables.

Let $b$ be a Boolean circuit with the inputs $U_1,\dots,U_r,X_1,\dots,X_n$ and just one final result $h:\{ 0,1 \}^r\times\{ 0,1 \}^n \to \{ 0,1 \}$.

We wish to describe the set of instances $u\in\{ 0,1 \}^r$ where $h(u,X_1,\dots,X_n)$ is a satisfiable Boolean function.

For this purpose let us choose an arithmetization $\beta$ of $b$. We interpret $\beta$ as an ordinary arithmetic circuit over $\Z$ with the parameters $U_1,\dots,U_r$ and the inputs $X_1,\dots,X_n$. The single final result of $\beta$ is a polynomial $H\in\Z[U,X]$ which satisfies for any $u\in \{ 0,1 \}^r$, $x\in \{ 0,1 \}^n$ the following condition:
$$h(u,x)=H(u,x).$$
Without loss of generality we may suppose that the polynomials $X_1^2-X_1$, $\dots$, $X_n^2-X_n$ are intermediate results of $\beta$. We relabel now $\beta$ such that these polynomials and $H$ become the final results of $\beta$. Observe that $X_1^2-X_1=0,\dots,X_n^2-X_n=0$ and $H$ represent a flat family of zero--dimensional elimination problems.

Let $Y$ be a new indeterminate and let $F\in\Z[U,Y]$ the associated elimination polynomial. One verifies easily the identity 
$$F(U,Y)=\prod_{x\in \{ 0,1 \}^n}(Y-H(U,x)).$$ 

Let $\mathcal{A}$ be an essentially division--free procedure of our extended computation model which solves algorithmically the general instance of any flat family of zero--dimensional elimination problems. Then $\beta$ is an admissible input for $\mathcal{A}$ and there exists an integer $q\in\N$ such that $F^q$ is the final result of $\mathcal{A}_{\text{final}}(\beta)$.

We consider now the task to count for any $u\in\{ 0,1 \}^r$ the number $k$ of instances $x\in\{ 0,1 \}^n$ with $h(u,x)=1$.

The polynomial $F^q$ encodes two possible solutions of this task. 

The first solution is the following: let $l$ be the order of the univariate polynomial $F^q(u,Y)$ at zero. Then $q$ divides $l$ and we have $k=2^n-\frac{l}{q}$.

The second and more interesting solution is the following: write $F^q = Y^{2^n q}+ \varphi_1 Y^{2^n q -1} +\dots +\varphi_{2^n q}$ with $\varphi_1,\dots,\varphi_{2^n q} \in \Z[U]$. Then $\varphi_1(u)$ is an integer which is divisible by $q$ and we have $k=-\frac{\varphi_1(u)}{q}$.

Observe also $\deg \varphi_1 \leq \deg_{U} H$.

These considerations show the relevance of an \emph{efficient} evaluation of the polynomial $F^q$ (e.g., by the circuit $\mathcal{A}_{\text{final}}(\beta)$). 

We ask therefore whether $\mathcal{A}_{\text{final}}(\beta)$ can be polynomial in the size of the Boolean circuit $b$. The following example illustrates that the answer may become negative.

\enter
In the sequel we are going to exhibit for $r:=2n+1$ a Boolean circuit $b$ of size $O(n)$ which evaluates a function $h:\{ 0,1 \}^r\times \{ 0,1 \}^n \longrightarrow \{ 0,1 \}$ such that the standard arithmetization $\beta$ of $b$ represents a flat family of zero--dimensional elimination problems with associated elimination polynomial $F$ and such that any essentially division--free procedure $\mathcal{A}$ of our extended computation model that accepts the input $\beta$ and computes by means of $\mathcal{A}_{\text{final}}(\beta)$ a power of $F$, requires time $\Omega(2^n)$ for this task. This means that it is unlikely that algorithms designed following the paradigm of object--oriented programming are able to evaluate the polynomial $\varphi_1$ efficiently.

On the other hand, since the degree of $\varphi_1$ is bounded by $\deg_U H$ and therefore ``small'', there exists a polynomial time interactive protocol which checks for any $u\in\{ 0,1 \}^r$ and any $c\in\Z$ the equation $\varphi_1(u)=c$. Thus this problem belongs to the complexity class $IP$ (see \cite{Lund92} for details).

\enter
We are now going to exhibit an example of a Boolean circuit which highlights the unfeasibility of our computation task. 

For this purpose let $r:=2n+1$ and $S_1,\dots,S_n,T,U_1,\dots,U_n$ parameters and $X_1,\dots,X_n$ input variables and let $S:=(S_1,\dots,S_n)$ and $U:=(U_1,\dots,U_n)$.

We consider the Boolean function $h:\{ 0,1 \}^{2n+1} \times \{ 0,1 \}^{n} \to \{ 0,1 \} $ defined by the Boolean formula 
$$\phi:= \bigwedge_{1\leq i\leq n} (\ol{X_i} \vee (S_i \wedge X_i)) \vee (T \wedge \bigwedge_{1\leq i\leq n}(\ol{X_i} \vee (U_i\wedge X_i))).$$
From the structure of the formula $\phi$ we infer that $h$ can be evaluated by a Boolean circuit $b$ of size $O(n)$ in the inputs $S_1,\dots,S_n,T,U_1,\dots,U_n$.

Let $\beta$ be the standard arithmetization of the Boolean circuit $b$ and let $H$ be the final result of $\beta$. Observe that the total, and hence the non--scalar size of $\beta$ is $O(n)$. Then we have
$$H=\prod_{1\leq i\leq n}(1+(S_i-1)X_i)+(1-\prod_{1\leq i\leq n}(1+(S_i-1)X_i))T \prod_{1\leq i\leq n}(1+(U_i-1)X_i).$$
Observe that the equations $X_1^2-X_1=0,\dots,X_n^2-X_n=0$ and the polynomial $H$ represent a flat family of zero--dimensional elimination problems. Let $F$ be the associated elimination polynomial. Then $F$ can be written as 
$$F=Y^{2^n}+B_1Y^{2^n-1}+\dots+B_{2^n}= \prod_{0\leq j< 2^n}(Y-(\prod_{1\leq i\leq n}S^{[j]_i}+(1-\prod_{1\leq j\leq n} S^{[j]_i})T \prod_{1\leq i\leq n}U_i^{[j]_i} ))$$ 
with 
$$B_k=(-1)^k \sum_{0\leq j_1<\dots <j_k<2^n} 
\prod_{1\leq h\leq k} 
(\prod_{1\leq i\leq n}S_i^{[j_h]_i}+
(1-\prod_{1\leq i\leq n}S_i^{[j_h]_i})T \prod_{1\leq i\leq n}U_i^{[j_h]_i})$$
for $1\leq k\leq 2^n$. 

Let
$$L_k:=(-1)^{k}\sum_{0\leq j_1<\dots <j_k<2^n} \sum_{1\leq h\leq k} \prod_{1\leq i\leq n} S_i^{[j_1]_i}\dots(1-\prod_{1\leq i\leq n}S_i^{[j_h]_i})\dots \prod_{1\leq i\leq n}S_i^{[j_k]_i} \prod_{1\leq i\leq n}U_i^{[j_h]_i},$$
where $1\leq k\leq 2^n$.

Then we have
$$B_k=(-1)^k \sum_{0\leq j_1<\dots <j_k<2^n} 
\prod_{1\leq i\leq n} S_i^{[j_1]_i} \dots \prod_{1\leq i\leq n} S_i^{[j_k]_i} + L_k.T + \text{\small{\ terms\ of\ higher\ degree\ in\ $T$}}$$ 

Let $\epsilon:\A^{2^n}\to\A^{2^n}$ be the morphism of affine spaces which assigns to each point $z\in\A^{2^n}$ the values of the elementary symmetric functions in $2^n$ variables at $z$. Observe that the Jacobian of $\epsilon$ at $(\prod_{1\leq i\leq n} S^{[j]_i})_{0\leq j< 2^n}$ is a non--singular $(2^n \times 2^n)$--matrix $N(S)$. The polynomials $L_k,1\leq k\leq 2^n$ are obtained by applying $N(S)$ to $((1-\prod_{1\leq i\leq n}S^{[j]_i})\prod_{1\leq i\leq n} U_i^{[j]_i})_{0\leq j< 2^n}$. Since the monomials $\prod_{1\leq i\leq n}U_i^{[j]_i},0\leq j< 2^n,$ are linearly independent over $\C(S)$ we conclude that the polynomials $L_k,1\leq k\leq 2^n$ have the same property.

With this preparation we are now able to repeat textually the arguments in the proof of Theorem \ref{def: main theorem model} of Section \ref{ 6.5.1 flat} in order to show the following statement.

\begin{theorem}
\label{theorem 3}
Let assumptions and notations be as before and let $\mathcal{A}$ be an essentially division free procedure of our extended computation model which accepts the arithmetic circuit $\beta$ as input. Suppose that $\mathcal{A}_{\text{final}}(\beta)$ has a unique final result and that it is a power of the elimination polynomial $F$. Then the non--scalar size of $\mathcal{A}_{\text{final}}(\beta)$ is at least $\Omega(2^n)$.
\end{theorem}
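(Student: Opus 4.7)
The plan is to imitate textually the proof of Theorem~\ref{def: main theorem model}, with $T=0$ as the pivotal specialization and the new block $S=(S_1,\dots,S_n)$ of parameters frozen at a generic value $s_0\in\A^n$. First I would decompose $\mathcal{A}=\mathcal{A}^{(2)}\circ\mathcal{A}^{(1)}$ on input $\beta$ and invoke Proposition~\ref{prop: T subset and composition} to write $\nu=\sigma\circ\theta$ and $\nu^*=\psi\circ\nu$, where $\theta(S,T,U)$ is the coefficient vector of $H$ in $X_1,\dots,X_n$ and the entries of $\nu^*$ furnish the essential parameters of $\gamma=\mathcal{A}_{\text{final}}(\beta)$. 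Let $\omega$ be the polynomial vector with $\varphi:=\omega\circ\nu^*$ equal to the coefficient vector $(\varphi_1,\dots,\varphi_{2^nq})$ of $F^q$ with respect to $Y$. The key observation is that $H(S,0,U,X)=\prod_{1\leq i\leq n}(1+(S_i-1)X_i)$ does not depend on $U$; hence $\theta(S,0,U)$, and therefore $\nu(S,0,U)$ and $\nu^*(S,0,U)$, are all independent of $U$. Since the entries of $\nu^*$ lie in $\C[S,T,U]$ (by Lemma~\ref{lemma intermediate results} combined with~\cite{GHMS09}, Corollary~12), the induced map $w(S):=\nu^*(S,0,U)$ is a polynomial map $\A^n\to\A^m$.

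Next, I would set $\Delta_\kappa(S,U):=\partial\varphi_\kappa/\partial T(S,0,U)\in\C[S,U]$ and, for each $(s,u)\in\A^n\times\A^n$, form the polynomial curves $\epsilon_{s,u}(t):=\nu^*(s,t,u)$ and $\delta_{s,u}(t):=\varphi(s,t,u)$. Since $\delta_{s,u}=\omega\circ\epsilon_{s,u}$ and $\epsilon_{s,u}(0)=w(s)$ is independent of $u$, the chain rule at $t=0$ yields
\[
(\Delta_1(s,u),\dots,\Delta_{2^nq}(s,u))=\epsilon_{s,u}'(0)\cdot M_s,
\]
where $M_s\in\C^{m\times 2^nq}$ is the transposed Jacobian of $\omega$ at $w(s)$ and, crucially, depends on $s$ only. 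The main technical step is then to check that for generic $s_0\in\A^n$ at least $2^n$ of the polynomials $\Delta_1(s_0,U),\dots,\Delta_{2^nq}(s_0,U)$ are $\C$-linearly independent in $\C[U]$. From $\partial F^q/\partial T=qF^{q-1}\partial F/\partial T$ and the expansion $B_k=b_k(S)+L_k\,T+O(T^2)$ recalled in the excerpt, at $T=0$ one obtains
\[
\frac{\partial F^q}{\partial T}(S,0,U,Y)=q\prod_{0\leq j<2^n}(Y-\sigma_j(S))^{q-1}\sum_{l=1}^{2^n}L_l(S,U)\,Y^{2^n-l},
\]
with $\sigma_j(S):=\prod_i S_i^{[j]_i}$. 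Selecting $s_0$ so that the $\sigma_j(s_0)$ are pairwise distinct and $2^n$ distinct scalars $\eta_1,\dots,\eta_{2^n}\in\C\setminus\{\sigma_j(s_0):0\leq j<2^n\}$, the evaluation at $Y=\eta_{\kappa'}$ combined with inversion of the non-singular Vandermonde matrix $(\eta_{\kappa'}^{2^n-l})$ shows that the $\C$-span of $\{\Delta_\kappa(s_0,U)\}_\kappa$ in $\C[U]$ contains the $\C$-span of $\{L_l(s_0,U)\}_{l=1}^{2^n}$; the excerpt already establishes that $L_1,\dots,L_{2^n}$ are $\C(S)$-linearly independent, so by generic specialization the $L_l(s_0,U)$ are $\C$-linearly independent, supplying the desired $2^n$ independent polynomials $\Delta_\kappa(s_0,U)$.

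To conclude, I would pick $u_1,\dots,u_{2^n}\in\A^n$ so that $N:=(\Delta_\kappa(s_0,u_l))_{1\leq l\leq 2^n,\,1\leq\kappa\leq 2^nq}$ has rank $2^n$ and assemble the vectors $\epsilon_{s_0,u_l}'(0)$ into a $(2^n\times m)$ complex matrix $K$; the identity $N=K\cdot M_{s_0}$ then forces $\operatorname{rank}M_{s_0}\geq 2^n$, so $m\geq 2^n$. A Pan-style rearrangement~\cite{PS73}, exactly as at the end of the proof of Theorem~\ref{def: main theorem model}, delivers $\Omega(2^{n/2})$ essential multiplications and $\Omega(2^n)$ total non-scalar multiplications in $\gamma$, which gives the claimed non-scalar size bound. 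The principal obstacle compared with Theorem~\ref{def: main theorem model} is that $w(S)$ is now a genuine polynomial map rather than a single point of $\A^m$, so the Jacobian $M_s$ varies with $s$ and the rank argument has to be carried out on a carefully chosen generic fibre $s=s_0$ rather than at a universal basepoint; once this point is fixed, the rest of the argument is essentially the textual duplicate promised by the excerpt.
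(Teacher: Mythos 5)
Your argument is correct and is exactly the textual repetition of the proof of Theorem~\ref{def: main theorem model} that the paper intends. The one genuine modification you identify---that $w(S)=\nu^*(S,0,U)$ is now a polynomial map in $S$ rather than a single basepoint, so the Jacobian $M_s=(D\omega)_{w(s)}$ must be pinned down on a fixed generic fibre $s=s_0$ (chosen so that the $\sigma_j(s_0)$ are pairwise distinct and the $L_l(s_0,U)$ remain $\C$--linearly independent) before the rank argument $N=K\cdot M_{s_0}$ yields $m\geq 2^n$---is precisely the place where ``repeat textually'' needs care, and you handle it correctly.
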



\subsection{The multivariate resultant}


Let $U_1,\dots,U_m$ be basic parameters and let $X_1,\dots,X_n$ or $X_0,\dots,X_n$ be input variables. Let $G_1,\dots,G_n,H\in\C[U,X]$ be a flat family of zero--dimensional elimination problems such that for any $u\in\A^m$ the homogenizations of $G_1(u,X),\dots,G_n(u,X)$ (by $X_0$) have no common zero at infinity. Let $F\in\C[U,Y]$ be the corresponding elimination polynomial and let $R\in\C[U]$ be the (multivariate) resultant of the homogenizations of $G_1,\dots,G_n,H$. Then we have $R=F(U,0)$.

On the other hand, $F$ is the resultant of the homogenizations of $G_1,\dots,G_n$ and $H-Y$. Thus, multihomogeneous resultants and elimination polynomials of flat families of zero--dimensional elimination problems are closely related from the algebraic point of view. 

From the computational point of view this relation is more intricate.

To the degree pattern of the homogenizations of $G_1,\dots,G_n,H$ there corresponds a generic resultant. We may take a computation of this resultant and specialise its inputs to the coefficients of $G_1,\dots,G_n,H-Y$ with respect to $X_1,\dots,X_n$. If this specialized computation can be simplified by means of reductions, we may expect to gain something. Proceeding in this way we obtain an algorithm which may be interpreted as an elementary routine of our computation model. Observe that this elementary routine does not use joins of two subroutines that contain each a recursion.

We are going to show that reductions do not produce a general improvement of traditional resultant computations in the sense described above.

\begin{theorem}
Consider the following, with respect to $X_0,\dots,X_n$ homogeneous, polynomials
\[
X^2_1-X_0X_1,\dots,X_n^2-X_0X_n, YX_0^n-\sum_{1\leq i\leq n} 2^{i-1}X_0^{n-1}X_i-T \prod_{1\leq i\leq n} (X_0+(U_i-1)X_i),	
\]
which are supposed to be given by an essentially division--free arithmetic circuit $\beta$ in the basic parameters $T,U_1,\dots,U_n,Y$ and the input variables $X_0,\dots,X_n$. Suppose furthermore that $Y$ is the last basic parameter introduced by $\beta$ and observe that such a circuit $\beta$ of size $O(n)$ exists. Let $\mathcal{A}$ be an essentially division--free elementary routine of our computation model which on input $\beta$ evaluates the resultant of the polynomials above with respect to the variables $X_0,\dots,X_n$. Suppose that $\mathcal{A}$ does not use joins of two subroutines that contain each a recursion. Then the output circuit $\mathcal{A}_{\text{final}}(\beta)$ has size at least $\Omega(2^n)$.
\end{theorem}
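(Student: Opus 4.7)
The plan is to reduce the problem to the argument of Theorem \ref{def: main theorem model}. First I verify that the resultant $R(T,U,Y)$ equals a nonzero constant $c\in\C$ times the elimination polynomial $F^{(n)}(T,U,Y)=\prod_{j=0}^{2^n-1}(Y - (j+T\prod_{i=1}^n U_i^{[j]_i}))$ from the proof of Theorem \ref{def: main theorem model}: setting $X_0=0$ in the first $n$ input polynomials forces each $X_i$ to vanish, so there is no common zero at infinity, and both polynomials have degree $2^n$ in $Y$. Let $\gamma := \mathcal{A}_{\text{final}}(\beta)$ and reinterpret its $Y$-labelled basic parameter node as a standard input node, yielding a circuit $\tilde\gamma$ with basic parameters $T,U_1,\dots,U_n$, input $Y$, and the same non-scalar size as $\gamma$. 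Let $\mu_1,\dots,\mu_m\in\C[T,U]$ be the essential parameters of $\tilde\gamma$, i.e., the $Y$-free intermediate nodes feeding into $Y$-involving computations. The standard Paterson--Stockmeyer rearrangement (\cite{PS73}), used verbatim as in the proof of Theorem \ref{def: main theorem model}, provides a constant $c_0>0$ with the non-scalar size of $\gamma$ at least $c_0 m$, so it suffices to prove $m\geq 2^n$.

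Next I apply output isoparametricity. Let $\theta\colon\A^{2n+2}\to\A^N$ denote the coefficient vector of the three final results of $\beta$ viewed as polynomials in $X_0,\dots,X_n$. By Proposition \ref{prop: T subset and composition} applied to the elementary subroutines of $\mathcal{A}$ that produce each node $\mu_j$ (these are genuine elementary subroutines thanks to the hypothesis that $\mathcal{A}$ uses no joins of two recursion-bearing subroutines), there exist geometrically robust constructible maps $\sigma_j$ on the image of $\theta$ with $\mu_j = \sigma_j\circ\theta$. At $T=0$, the third input polynomial reduces to $YX_0^n - \sum_{i=1}^n 2^{i-1}X_0^{n-1}X_i$, whose coefficients in $X_0,\dots,X_n$ depend on $Y$ but not on $U$; hence $\theta(0,U,Y)=\tilde\theta(Y)$ is $U$-independent. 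Since each $\mu_j$ does not involve $Y$, the identity $\mu_j(0,U) = \sigma_j(\tilde\theta(Y))$ must hold for all $U$ and $Y$ simultaneously, forcing $\sigma_j(\tilde\theta(Y))$ to be a complex constant $\lambda_j$. The vector $w := (\lambda_1,\dots,\lambda_m)\in\A^m$ is therefore independent of $U$.

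Finally, the derivative-and-matrix-rank argument from Theorem \ref{def: main theorem model} transfers verbatim. Writing $R = cY^{2^n}+\sum_{l=1}^{2^n}cB_l\,Y^{2^n-l}$ with $cB_l = \beta_l + T\cdot cL_l + $ terms of higher degree in $T$, where $L_1,\dots,L_{2^n}\in\C[U]$ are $\C$-linearly independent (exactly as in the proof of Theorem \ref{def: main theorem model}), one expresses each $cB_l = \omega_l(\mu_1,\dots,\mu_m)$ because the $Y$-involving part of $\tilde\gamma$ evaluates $R$ purely from $Y$ and the $\mu_j$'s. Differentiating at $T=0$ and using the chain rule yields $cL_l(U) = (D\omega_l)_w \cdot \partial_T\mu(0,U)$. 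Evaluating at $2^n$ generically chosen parameter instances $u_1,\dots,u_{2^n}\in\A^n$ produces a $(2^n\times 2^n)$ matrix of rank $2^n$ factoring as $K\cdot M^{T}$ with $K\in\A^{2^n\times m}$ and $M\in\A^{2^n\times m}$, so $m\geq 2^n$ and the non-scalar size of $\gamma$ is at least $c_0\cdot 2^n=\Omega(2^n)$. The main technical obstacle is certifying that output isoparametricity applies also to the intermediate nodes $\mu_j$ (not merely to the final result of $\mathcal{A}$): this is precisely where the hypothesis forbidding joins of two recursion-bearing subroutines enters, as it guarantees that each $\mu_j$ arises as the final result of a genuine elementary subroutine of $\mathcal{A}$ applied to $\beta$.
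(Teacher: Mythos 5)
Your proposal is correct and follows the same basic route as the paper's proof: observe that the resultant of the given homogeneous polynomials coincides (up to a nonzero scalar) with the elimination polynomial $F^{(n)}$ of the flat family from Theorem~\ref{def: main theorem model}, use the two stated hypotheses on $\mathcal{A}$ (that $Y$ enters last and that no join combines two recursion--bearing subroutines) to split off the $Y$--free part of the computation, and then invoke the lower bound machinery of Theorem~\ref{def: main theorem model}. The paper simply cites Theorem~\ref{def: main theorem model} after performing this reduction, while you re--derive the needed bound inline; in doing so you substitute a pleasantly streamlined argument for the step ``$\nu^*(0,U)$ is a constant'': instead of the integrality chain (relations~\eqref{(*)}--\eqref{(***)} via Definition~\ref{def: geometrically robust map}~$(i)$ and Corollary~12 of \cite{GHMS09}) you observe directly that $\theta(0,U,Y)$ is $U$--independent while each essential parameter $\mu_j$ is $Y$--independent, so $\mu_j(0,U,Y)=\sigma_j(\tilde\theta(Y))$ is simultaneously $Y$-- and $U$--free, hence a constant. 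This is a genuine shortcut that works precisely because output isoparametricity (as opposed to mere coalescence) yields the full factorization $\mu_j=\sigma_j\circ\theta$; the paper's longer integrality argument is there because it also wants the remark that coalescence suffices. Note, however, that both you and the paper defer the crucial structural step --- that the hypotheses really do force the essential parameters of $\gamma$ to be $Y$--free and to arise as outputs of elementary subroutines applied to $\beta$ --- the paper explicitly flagging it as ``somewhat tedious and not given here'' while your parenthetical about ``genuine elementary subroutines'' serves the same role. One small point to be careful about: when you reinterpret the $Y$--labelled parameter node of $\gamma$ as a standard input to form $\tilde\gamma$, you should also check that $\tilde\gamma$ remains essentially division--free (i.e., no division by a $Y$--involving parameter occurs), which is again a consequence of the deferred structural step.
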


\begin{proof}
Taking into account that $Y$ is the last basic parameter introduced by $\beta$, that all recursive subroutines of $\mathcal{A}$ are isoparametric and that $\mathcal{A}$ does not use joins of two subroutines that contain each a recursion, we may decompose $\mathcal{A}$ in two subroutines $\mathcal{A}^{(1)}$ and $\mathcal{A}^{(2)}$ such that $\mathcal{A}^{(1)}$ admits $\beta$ as input and produces a vector of parameters in $T,U_1,\dots,U_n$ which themselves constitute the inputs of $\mathcal{A}^{(2)}$ (the argument is somewhat tedious and is not given here). Only the routine $\mathcal{A}^{(2)}$ introduces the parameter $Y$. Therefore $\mathcal{A}=(\mathcal{A}^{(1)},\mathcal{A}^{(2)})$ constitutes a procedure of our computation model with basic parameters $T,U_1,\dots,U_n$, input variables $X_0,\dots,X_n$ and output variable $Y$. Observing that the resultant of the given polynomials equals the elimination polynomial of the flat family of zero--dimensional elimination problems given by $X^2_1-X_1,\dots,X^2_n-X_n, \sum_{1\leq i\leq n} 2^{i-1}X_i + T\prod_{1\leq i\leq n} (1+(U_i-1)X_i)$, we conclude from Theorem \ref{def: main theorem model} that the size of $\mathcal{A}_{\text{final}}(\beta)$ is at least $\Omega(2^n)$.
\end{proof}


\subsection{A family of hard elimination polynomials} 
\label{independent of the model}

As a major result of this paper, we are now going to exhibit an infinite family of parameter dependent elimination polynomials which require essentially division--free, robust parameterized arithmetic circuits of exponential size for their evaluation, whereas the circuit size of the corresponding input problems grows only polynomially. This result is valid \emph{without any architectural assumption} on the algorithm that computes these elimination polynomials.

Let notations be as before and consider again for given $n\in\N$ the polynomial $H^{(n)}:= \sum_{1 \leq i \leq n} 2^{i-1}X_i + T \prod_{1 \leq i \leq n} (1+ (U_i-1)X_i) $ of Section \ref{ 6.5.1 flat}. Observe that $H^{(n)}$ can be evaluated using $n-1$ non--scalar multiplications involving $X_1,\dots,X_n$.  

The set $\mathcal{O}:=\{ \sum_{1 \leq i \leq n} 2^{i-1}X_i + t \prod_{1 \leq i \leq n} (1+ (u_i-1)X_i);(t,u_1,\dots,u_n)\in\A^{n+1} \}$ is contained in a finite--dimensional $\C$--linear subspace of $\C[X]$ and therefore $\mathcal{O}$ and its closure $\ol{\mathcal{O}}$ are constructible sets.

From \cite{GHMS09}, Section 3.3.3 we deduce the following facts:\\
there exist $K:=16n^2+2$ integer points $\xi_1,\dots,\xi_K \in\Z^n$ of bit length at most $4n$ such that for any two polynomials $f,g\in \ol{\mathcal{O}}$ the equalities $f(\xi_k)=g(\xi_k), 1\leq k \leq K$, imply $f=g$. Thus the polynomial map $ \Xi:\ol{\mathcal{O}}\to\A^K$ defined for $f\in \ol{\mathcal{O}}$ by $\Xi(f):= (f(\xi_1),\dots,f(\xi_K))$ is injective. Moreover $\mathcal{M}:=\Xi(\mathcal{O})$ is an irreducible constructible subset of $\A^K$ and we have $\ol{\mathcal{M}}=\Xi(\ol{\mathcal{O}})$. Finally, the constructible map $\phi:=\Xi^{-1}$, which maps $\mathcal{M}$ onto $\mathcal{O}$ and $\ol{\mathcal{M}}$ onto $\ol{\mathcal{O}}$, is a restriction of a geometrically robust map and therefore by Corollary \ref{proposition 1} itself geometrically robust.

For $\epsilon\in\{ 0,1 \}^n$ we denote by $\phi_\epsilon$ the map $\ol{\mathcal{M}}\to\A^1$ which assigns to each point $v\in\ol{\mathcal{M}}$ the value $\phi(v)(\epsilon)$. From Corollary \ref{proposition 1} we conclude that $\phi_{\epsilon}$ is a geometrically robust constructible function which belongs to the function field $\C(\ol{\mathcal{M}})$ of the irreducible algebraic variety $\ol{\mathcal{M}}$.

Observe that for $t\in\A^1$ and $u\in\A^n$ the identities $\phi_{\epsilon}(\Xi(H^{(n)}(t,u,X)))=\phi(\Xi(H^{(n)}(t,u,X)))(\epsilon)= ((\Xi^{-1}\circ\Xi)(H^{(n)}(t,u,X)))(\epsilon)= H^{(n)}(t,u,\epsilon)$ hold.

Let $\tilde{F}^{(n)}:= \prod_{\epsilon\in \{ 0,1\}^n } (Y-\phi_\epsilon)$. Then $\tilde{F}^{(n)}$ is a geometrically robust constructible function which maps $\ol{\mathcal{M}}\times\A^1$ (and hence $\mathcal{M}\times\A^1$) into $\A^1$. Considering again the elimination polynomial $F^{(n)}= \prod_{0 \leq j \leq 2^n-1}(Y-(j + T\prod_{1 \leq i\leq n} U_i^{[j]_i})) = \prod_{\epsilon\in \{ 0,1\}^n } (Y- H^{(n)}(T,U,\epsilon))$ of Section \ref{ 6.5.1 flat}, we have for $t\in\A^1$ and $u\in\A^n$ the identities 
\begin{equation}
\begin{array}{l}
\displaystyle \tilde{F}^{(n)}(\Xi(H^{(n)}(t,u,X)),Y) = \prod_{\epsilon\in \{ 0,1 \}^n} (Y - \phi_\epsilon(\Xi(H^{(n)}(t,u,X))))=\\
\displaystyle  \prod_{\epsilon\in \{ 0,1 \}^n} (Y-H^{(n)}(t,u,\epsilon)) = F^{(n)}(t,u,Y)
\end{array} 
\label{equ (*)}
\end{equation}

Let $S_1,\dots,S_K$ be new indeterminates and observe that the existential first order formula of the elementary theory of $\C$, namely
\begin{equation}
\begin{array}{l}
\displaystyle (\exists X_1)\dots(\exists X_n)(\exists T)(\exists U_1)\dots(\exists U_n) (X_1^2-X_1=0 \wedge\dots \wedge X_n^2-X_n=0 \wedge\\
\displaystyle  \bigwedge_{1 \leq j \leq K} S_j=H^{(n)}(T,U,\xi_j) \wedge Y=H^{(n)}(T,U,X))
\end{array} 
\label{equ (**)}
\end{equation}

describes the constructible subset $\{ (s,y)\in\A^{K+1}; s\in\mathcal{M},y\in\A^1,\tilde{F}^{(n)}(s,y)=0 \}$ of $\A^{K+1}$. Moreover, $\tilde{F}^{(n)}$ is the greatest common divisor in $\C(\ol{\mathcal{M}})[Y]$ of all polynomials of $\C[\ol{\mathcal{M}}][Y]$ which vanish identically on the constructible subset of $\A^{K+1}$ defined by the formula (\ref{equ (**)}). Hence $\tilde{F}^{(n)}\in\C(\ol{\mathcal{M}})[Y]$ is a (parameterized) \emph{elimination polynomial}.

Observe that the polynomials contained in the formula (\ref{equ (**)}) can be represented by a totally division--free arithmetic circuit $\tilde{\beta}_n$ of size $O(n^3)$. Therefore, the formula (\ref{equ (**)}) is also of size $O(n^3)$. 

\begin{theorem}
\label{theorem model independent}
Let notations and assumptions be as before and let $\tilde{\gamma}$ be an essentially division--free, robust parameterized arithmetic circuit with domain of definition $\mathcal{M}$ such that $\tilde{\gamma}$ evaluates the elimination polynomial $\tilde{F}^{(n)}$.

Then $\tilde{\gamma}$ performs at least $\Omega(2^{\frac{n}{2}})$ essential multiplications and at least $\Omega(2^n)$ multiplications with parameters.
\end{theorem}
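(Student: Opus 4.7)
The plan is to transfer the derivative--and--rank argument from the proof of Theorem \ref{def: main theorem model} to the present setting, with the crucial simplification that the ``fixed point'' property $\nu^*(0,U) = \mathrm{const}$ will now come for free from the algebraic shape of $H^{(n)}$, rather than from any architectural assumption on the procedure producing $\tilde\gamma$. First I would let $\tilde\nu := (\tilde\nu_1,\dots,\tilde\nu_m)$ be the vector of essential parameters of $\tilde\gamma$, viewed by Lemma \ref{lemma intermediate results} and robustness as geometrically robust constructible functions on $\mathcal{M}$. Since $\tilde\gamma$ is essentially division--free and its only input is $Y$, there exists a polynomial map $\omega$ over $\C$ whose components give the coefficients of $\tilde F^{(n)} = Y^{2^n}+\tilde\varphi_1 Y^{2^n-1}+\dots+\tilde\varphi_{2^n}$ as polynomials in $\tilde\nu_1,\dots,\tilde\nu_m$.

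Next, I would consider the polynomial map $\Psi : \A^{n+1} \to \mathcal{M}$ defined by $\Psi(T,U) := \Xi(H^{(n)}(T,U,X))$, whose components are the polynomials $H^{(n)}(T,U,\xi_k) \in \C[T,U]$ for $1\le k\le K$. The composition $\nu^* := \tilde\nu\circ\Psi$ is geometrically robust on $\A^{n+1}$ by Corollary \ref{proposition 1}, and by \cite{GHMS09}, Corollary 12 its components lie in $\C[T,U]$. Identity~(\ref{equ (*)}) then gives $\omega(\nu^*(T,U)) = \varphi^{(n)}(T,U)$, where $\varphi^{(n)}$ is the coefficient vector in $Y$ of the elimination polynomial $F^{(n)}$ from Section \ref{ 6.5.1 flat}. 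The decisive observation is that $H^{(n)}(0,U,X) = \sum_{i=1}^n 2^{i-1} X_i$ does not depend on $U$, whence $\Psi(0,U)$ is a single point of $\mathcal{M}$ and $w := \nu^*(0,U)$ is a fixed point of $\A^m$.

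From here the argument proceeds verbatim as in the second half of the proof of Theorem \ref{def: main theorem model}: writing $\varphi^{(n)}_\kappa = \lambda_\kappa + \Delta_\kappa T + (\text{higher order in } T)$ with $\Delta_\kappa \in \C[U]$, differentiating $\varphi^{(n)} = \omega\circ\nu^*$ at $T=0$ gives, for each $u \in \A^n$, the relation $(\Delta_1(u),\dots,\Delta_{2^n}(u)) = \epsilon'_u(0)\, M$, where $\epsilon_u(T) := \nu^*(T,u)$ and $M \in \C^{m\times 2^n}$ is the transposed Jacobian of $\omega$ at the $u$--independent point $w$. The Vandermonde and $\C$--linear--independence argument based on the polynomials $L_1,\dots,L_{2^n}$ from that proof produces points $u_1,\dots,u_{2^n}\in\A^n$ for which the complex matrix $N = (\Delta_\kappa(u_l))$ has rank $2^n$; its factorisation $N = K\cdot M$ then forces $m \geq 2^n$. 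A final application of the standard circuit rearrangement of \cite{PS73} converts this into at least $\Omega(2^{n/2})$ essential multiplications and at least $\Omega(2^n)$ multiplications with parameters in $\tilde\gamma$.

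The main conceptual hurdle is not any single calculation but rather making precise the point where the previous proof used output (iso)parametricity of a subprocedure $\mathcal{A}^{(1)}$: here we invoke no procedure structure at all, and the required integrality of $\nu^*$ at $T=0$ has to be extracted purely from (i) geometric robustness of $\tilde\nu$ on $\mathcal{M}$, (ii) the polynomial nature of $\Psi$, and (iii) the algebraic fact that $H^{(n)}(0,U,X)$ is independent of $U$. This is exactly what the geometric--robustness machinery of Section \ref{sec: geometry} (Corollary \ref{proposition 1} together with \cite{GHMS09}, Corollary 12) supplies, and it is what makes the lower bound model--independent, hence a statement about the polynomial $\tilde F^{(n)}$ itself rather than about a specific class of algorithms.
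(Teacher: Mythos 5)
Your proof is correct, and it takes a genuinely different route from the paper's. The paper's proof constructs an explicit auxiliary circuit $\tilde\gamma_n$ of size $O(n^3)$ computing $H^{(n)}(T,U,\xi_k)$, joins it with $\tilde\gamma$, observes that the join computes $F^{(n)}$ and is realizable as the output of an essentially division--free procedure of the extended model applied to $\tilde\beta_n$ (with the isoparametry between $H^{(n)}$ and the outputs of $\tilde\gamma_n$ supplying the decomposition into subprocedures $\mathcal{A}^{(1)}$, $\mathcal{A}^{(2)}$), and then invokes Theorem \ref{def: main theorem model} as a black box; the $O(n^3)$ bound on $\tilde\gamma_n$ transfers the lower bound to $\tilde\gamma$. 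You instead bypass the procedure formalism entirely: you pull back the essential parameters $\tilde\nu$ of $\tilde\gamma$ along the polynomial map $\Psi=\Xi(H^{(n)}(\cdot,\cdot,X))$ to obtain $\nu^*\in\C[T,U]^m$ (robustness of $\tilde\nu$ plus Corollary \ref{proposition 1} and \cite{GHMS09}, Corollary 12), and replace the paper's use of output isoparametricity of $\mathcal{A}^{(1)}$ — which in the proof of Theorem \ref{def: main theorem model} produced the factorization $\nu=\sigma\circ\theta$ and, via an integrality argument, the constancy of $\nu^*(0,U)$ — by the much more elementary observation that $\Psi(0,U)$ is a single point of $\mathcal{M}$ because $H^{(n)}(0,U,X)=\sum_i 2^{i-1}X_i$ is free of $U$, so $w:=\nu^*(0,U)=\tilde\nu(\Psi(0,U))$ is a constant outright. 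After that the derivative--and--rank computation (with $q=1$) is indeed verbatim the one from Theorem \ref{def: main theorem model} and yields $m\geq 2^n$, hence $\Omega(2^{n/2})$ essential multiplications and $\Omega(2^n)$ multiplications with parameters via \cite{PS73}. What your version buys is that the ``model--independence'' of the statement is visible in the proof itself — no procedure, no decomposition $\mathcal{A}=(\mathcal{A}^{(1)},\mathcal{A}^{(2)})$, no ``we may therefore think'' step — at the modest cost of re-deriving rather than citing the rank estimate; what the paper's version buys is brevity, since the heavy lifting is delegated to the earlier theorem. One small stylistic remark: in your closing paragraph you speak of ``the required integrality of $\nu^*$ at $T=0$,'' but your argument never actually needs integrality — the fixed--point property of $\nu^*(0,U)$ follows immediately from $\Psi(0,U)$ being a single point and $\tilde\nu$ being a (total) map; you may as well say so directly rather than gesture at the more elaborate mechanism the paper uses inside Theorem \ref{def: main theorem model}.
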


\begin{proof}
Let $\tilde{\gamma}$ be as in the statement of the theorem. Without loss of generality we may assume that $\tilde{\gamma}$ has a single output node which evaluates the polynomial $\tilde{F}^{(n)}$. There exists a totally division--free arithmetic circuit $\tilde{\gamma}_n$ of size $O(n^3)$ which computes at its output nodes the polynomials $H^{(n)}(T,U,\xi_k),1\leq k \leq K$. 

From (\ref{equ (*)}) we deduce that the join $\tilde{\gamma}*\tilde{\gamma}_n$ of the circuit $\tilde{\gamma}_n$ with the circuit $\tilde{\gamma}$ at the basic parameter nodes of $\tilde{\gamma}$ is an essentially division--free, robust parameterized arithmetic circuit which evaluates the elimination polynomial $F^{(n)}$. Observe that the outputs of $\tilde{\gamma}_n$ are only parameters and that only the circuit $\tilde{\gamma}$ introduces the variable $Y$. Moreover, there exists an isoparametry between $H^{(n)}$ and the outputs of $\tilde{\gamma}_n$. We may therefore think that the circuit $\tilde{\gamma}* \tilde{\gamma}_n $ is produced by an essentially division--free procedure of our extended computation model which becomes applied to the circuit $\tilde{\beta}_n$. From Theorem \ref{def: main theorem model} and its proof we deduce now that $\tilde{\gamma}*\tilde{\gamma}_n$ contains at least $\Omega(2^n)$ essential multiplications and at least $\Omega(2^n)$ multiplications with parameters. Since the size of $\gamma_n$ is $O(n^3)$, we draw the same conclusion for $\tilde{\gamma}$.
\end{proof}

Theorem \ref{theorem model independent} is essentially contained in the arguments of the proof of \cite{GH01}, Theorem 5 and \cite{CaGiHeMaPa03}, Theorem 4.

Observe that a quantifier--free description of $\mathcal{M}$ by means of circuit represented polynomials, together with an essentially division--free, robust parameterized arithmetic circuit $\tilde{\gamma}$ with domain of definition $\mathcal{M}$, which evaluates the elimination polynomial $\tilde{F}^{(n)}$, captures the intuitive meaning of an algorithmic solution of the elimination problem described by formula (11), when we restrict our attention to solutions of this kind and minimize the number of equations and branchings. In particular, the circuit $\tilde{\gamma}$ can be evaluated for any input point $(s,y)$ with $s\in\mathcal{M}$ and $y\in\A^1$ and the intermediate results of $\tilde{\gamma}$ are polynomials of $\C(\ol{\mathcal{M}})[Y]$ whose coefficients are geometrically robust constructible functions defined on $\mathcal{M}$.

\subsection{Divisions and blow ups} 
\label{divisions and blow ups}
 
We are now going to analyze the main argument of the proof of Theorem \ref{def: main theorem model} from a geometric point of view.

We recall first some notations and assumptions we made during this proof.

With respect to the indeterminates $X_1,\dots,X_n$, we considered the vector $\theta$ of coefficients of the expression
$$H=\sum_{1\leq i\leq n} 2^{i-1} X_i + T \prod_{1\leq i\leq n} (1 + (U_i -1) X_i)$$
as a polynomial map $\A^{n+1}\to\A^{2^n}$ with image $\mathcal{T}$. Recall that $\mathcal{T}$ is an irreducible constructible subset of $\A^{2^n}$.

Further, with respect to the indeterminate $Y$, we considered the vector $\varphi$ of nontrivial coefficients of the monic polynomial
$$F=\prod_{1\leq j\leq 2^{n}-1} (Y - (j+ T \prod_{1\leq i\leq n} U_i^{[j]_i}))$$     
also as a polynomial map $\A^{n+1}\to\A^{2^n}$.

One sees immediately that there exists a unique polynomial map $\eta:\mathcal{T}\to\A^{2^n}$ such that $\varphi=\eta\circ\theta$ holds. Using particular properties of $\theta$ and $\varphi$ we showed implicitly in the proof of Theorem \ref{def: main theorem model} that $\eta$ satisfies the following condition:

\begin{quote}
\textit{Let $m$ be a natural number, $\zeta:\mathcal{T}\to\A^m$ a geometrically robust constructible and $\pi:\A^m\to\A^{2^n}$ a polynomial map such that $\eta = \pi\circ\zeta$ holds. Then the condition 
$$m\geq 2^n$$
is satisfied.
}	
\end{quote}

This means that the following computational task cannot be solved efficiently:

\enter
Allowing certain restricted divisions, reduce the datum $\theta$ consisting of $2^n$ entries to a datum $\zeta$ consisting of only $m\leq 2^n$ entries such that the vector $\eta$ still may be recovered from $\zeta$ without using any division, i.e., by an ordinary division--free arithmetic circuit over $\C$.

Here the allowed divisions involve only quotients which are geometrically robust functions defined on $\mathcal{T}$. 

In order to simplify the following discussion we shall assume without loss of generality that all our constructible maps have geometrically robust extensions to $\ol{\mathcal{T}}$.

Let $f$ and $g$ be two elements of the coordinate ring $\C[\ol{\mathcal{T}}]$ of the affine variety $\ol{\mathcal{T}}$ and suppose that $g\neq 0$ holds and that the element $\frac{f}{g}$ of the rational function field $\C(\ol{\mathcal{T}})$ may be extended to a robust constructible function defined on $\ol{\mathcal{T}}$, which we denote also by $\frac{f}{g}$, since this extension is unique.

Then we have two cases: the coordinate function $g$ divides $f$ in $\C[\ol{\mathcal{T}}]$ or not. In the first case we may compute $\frac{f}{g}$, by means of an ordinary division--free arithmetic circuit over $\C$, from the restrictions to $\ol{\mathcal{T}}$ of the canonical projections $\A^{2^n}\to\A^1$. Thus $\frac{f}{g}$ belongs to the coordinate ring $\C[\ol{\mathcal{T}}]$. In the second case this is not any more true and $\C[\ol{\mathcal{T}}][\frac{f}{g}]$ is a proper extension of $\C[\ol{\mathcal{T}}]$ in $\C(\ol{\mathcal{T}})$. In both cases $\C[\ol{\mathcal{T}}][\frac{f}{g}]$ is the coordinate ring of an affine chart of the blow up of $\C[\ol{\mathcal{T}}]$ at the ideal generated by $f$ and $g$. We refer to this situation as a \emph{division blow up} which we call \emph{essential} if $\frac{f}{g}$ does not belong to $\C[\ol{\mathcal{T}}]$.

Therefore we have shown in the proof of Theorem \ref{def: main theorem model} that essential division blow ups do not help to solve efficiently the reduction task formulated before.

A similar situation arises in multivariate polynomial interpolation (see \cite{GHMS09}, Theorem 23).

Following \cite{Harris92}, Theorem 7.2.1 any rational map may be decomposed into a finite sequence of successive blow ups followed by a regular morphism of algebraic varieties. Our method indicates the interest to find lower bounds for the number of blow ups (and their embedding dimensions) necessary for an effective variant of this result.

Problem adapted methods for proving lower bounds for the number of blow ups necessary to resolve singularities would also give indications which order of complexity can be expected for efficient desingularization algorithms (see \cite{EncinasVilla00}). At this moment only upper bound estimations are known \cite{RocioBlanco09}. 

\subsection{Comments on complexity models for geometric elimination}
\label{Comments on complexity models for geometric elimination}

\subsubsection{Relation to other complexity models}

The question, whether $P\neq NP$ or $P_{\C}\neq NP_{\C}$ holds in the classical or the BSS Turing Machine setting, concerns only computational \emph{decision} problems. These, on their turn, are closely related to the task to construct \emph{efficiently}, for a given prenex existential formula, an equivalent, quantifier free one (compare \cite{BSS89}, \cite{HeintzMorgenstern93}, \cite{SS95} and \cite{BSSlibro98}). In the sequel we shall refer to this and to similar, geometrically motivated computational tasks as ``\emph{effective elimination}''.

Theorem \ref{def: main theorem model} in Section \ref{ 6.5.1 flat} does not establish a fact concerning decision problems like the $P_{\C}\neq NP_{\C}$ question. It deals with the \emph{evaluation of a function} which assigns to suitable prenex existential formulas over $\C$ \emph{canonical}, equivalent and quantifier--free formulas of the same elementary language.

Theorem \ref{def: main theorem model} says that in our computation model this function cannot be evaluated efficiently. If we admit also \emph{non--canonical} quantifier--free formulas as function values (i.e., as outputs of our algorithms), then this conclusion remains true, provided that the calculation of parameterized greatest common divisors is feasible and efficient in our model (see \cite{CaGiHeMaPa03}, Section 5).

It is not clear what this implies for the $P_{\C}\neq NP_{\C}$ question.

Intuitively speaking, our exponential lower complexity bound for effective geometric elimination is only meaningful and true for computer programs designed in a professional way by software engineers. Hacker programs are excluded from our considerations.

This constitutes an enormous difference between our approach and that of Turing machine based complexity models, which always include the hacker aspect. Therefore the proof of a striking lower bound for effective elimination becomes difficult in these models.

Our argumentation is based on the requirement of output parametricity which on its turn is the consequence of two other requirements, a functional and a non--functional one, that we may employ alternatively. More explicitly, we require that algorithms (and their specifications) are described by branching parsimonious asserted programs or, alternatively, that they behave well under reductions (see Sections \ref{sec:Model-discussion-simplified} and \ref{sec:Model-discussion-programs and algorithms}).

Let us observe that the complexity statement of Theorem \ref{def: main theorem model} refers to the relationship between input and output and not to a particular property of the output alone. In particular, Theorem \ref{def: main theorem model} does not imply that certain polynomials, discussed below, like the permanent or the Pochhammer polynomials, are hard to evaluate. 

Let notations and assumptions be as in Section \ref{ 6.5.1 flat}. There we considered for arbitrary $n\in\N$ the flat family of zero dimensional elimination problems
$$G_1^{(n)}=0,\dots,G_n^{(n)}=0, H^{(n)}$$
given by
$$G_1^{(n)}:=X_1^2-X_1,\dots,G_n^{(n)}:=X_n^2-X_n$$    
and
$$H^{(n)}:=\sum_{1\leq i\leq n}2^{i-1}X_i \ws+\ws T\prod_{1\leq i\leq n}(1+(U_i-1)X_i).$$
Let $X_{n+1},\dots,X_{3n-1}$ be new indeterminates and let us consider the following polynomials 
$$G^{(n)}_{n+1}:=X_{n+1}-2X_2-X_1,\dots,G^{(n)}_j:=X_j-X_{j-1}-2^{j-n}X_{j-n+1},\ws n+2\leq j\leq 2n-1,$$
$$G^{(n)}_{2n}:=X_{2n}-U_1X_1+X_1-1,$$
$$G_k^{(n)}:=X_k-U_{k-2n+1}X_{k-1}X_{k-2n+1}+X_{k-1}X_{k-2n+1}-X_{k-1},\ws 2n+1\leq k\leq 3n-1$$
and
$$L^{(n)}:=X_{2n-1}+TX_{3n-1}.$$
One verifies easily that $G^{(n)}_1=0,\dots,G^{(n)}_{3n-1}=0,L^{(n)}$ is another flat family of zero dimensional elimination problems and that both families have the same associated elimination polynomial, namely
$$F^{(n)}:=\prod (Y-(j+T\prod_{1\leq i\leq n}U_i^{\left[\rho \right]_i}))$$
Suppose now that there is given an essential division--free procedure $\mathcal{A}$ of our extended computation model which solves algorithmically the general instance of any given flat family of zero--dimensional elimination problems.

Let $\beta_n$ and $\beta_n^*$ be two essentially division--free, robust parameterized arithmetic circuits which implement the first and the second flat family of zero dimensional elimination problems we are considering. 

Then $\beta_n$ and $\beta_n^*$ are necessarily distinct circuits. Therefore $\mathcal{A}_{\text{final}}(\beta_n)$ and $\mathcal{A}_{\text{final}}(\beta_n^*)$ represent two implementations of the elimination polynomial $F^{(u)}$ by essentially division--free, robust parameterized arithmetic circuits.

From Theorem \ref{def: main theorem model} and its proof we are only able to deduce that the circuit $\mathcal{A}_{\text{final}}(\beta_{n})$ has non--scalar size at least $\Omega(2^n)$, but we know nothing about the non--scalar size of $\mathcal{A}_{\text{final}}(\beta_{n}^*)$. 

In the past, many attempts to show the non--polynomial character of the elimination of just one existential quantifier block in the arithmetic circuit based elementary language over $\C$, employed the reduction to the claim that an appropriate candidate family of specific polynomials was hard to evaluate (this approach was introduced in \cite{HeintzMorgenstern93} and became adapted to the BSS model in \cite{SS95}). 

The Pochhammer polynomials and the generic permanents discussed below form such candidate families.

In Section \ref{independent of the model} we exhibited a \emph{certified} infinite family of parameter dependent elimination polynomials which require essentially division--free, \emph{robust} parameterized arithmetic circuits of exponential size for their evaluation, whereas the circuit size of the corresponding input problems grows polynomially.  

Here the requirement of robustness modelizes the intuitive meaning of an algorithmic solution with few equations and branchings of the underlying elimination problem.

\subsubsection{The hacker aspect}

Let us finish this section with a word about hacking and interactive (zero--knowledge) proofs.

Hackers work in an ad hoc manner and quality attributes are irrelevant for them. We may simulate a hacker and his environment by an \emph{interactive proof system} where the prover, identified with the hacker, communicates with the verifier, i.e., the user of the hacker's program. Thus, in our view, a hacker disposes over unlimited computational power, but his behaviour is deterministic. Only his communication with the user underlies some quantitative restrictions: communication channels are tight. Hacker and user become linked by a protocol of zero--knowledge type which we are going to explain now.

Inputs are natural numbers in \emph{unary} representation. Each natural number represents a mathematical object belonging to a previously fixed abstract data type of polynomials. For example $n\in\N$ may represent the $2^n$--th Pochhammer polynomial
$$T^{\underline{2^n}} :=\prod_{0\leq j< 2^n}(T-j)$$
or the $n$--th generic permanent 
$$\text{Perm}_n:=\sum_{\tau\in \text{Sym}(n)}X_{1\tau(1)},\dots,X_{n\tau(n)},$$
where $T$ and $X_{11},\dots,X_{nn}$ are new indeterminates and $\text{Sym}(n)$ denotes the symmetric group operating on $n$ elements.

On input $n\in\N$ the hacker sends to the user a division--free labelled directed acyclic graph $\Gamma_n$ (i.e., a division--free ordinary arithmetic circuit over $\Z$) of size $n^{O(1)}$ and claims that $\Gamma_n$ evaluates the polynomial represented by $n$. 

The task of the user is now to check this claim in uniform, bounded probabilistic or non--uniform polynomial time, namely in time $n^{O(1)}$.

In the case of the Pochhammer polynomial and the permanent a suitable protocol exists. This can be formulated as follows.

\begin{proposition}
The languages
\[
\begin{array}{rcl}
\mathcal{L}_{\text{Poch}}&:=&\{(n,(\Gamma_j)_{0\leq j\leq n}); \ws n\in\N, \ws\Gamma_j \text{\ is\ for\ $0\leq j\leq n$\ }\\
                  &  &\ws \ws \text{a\ division--free\ labelled\ directed\ acyclic\ graph\ evaluating\ $T^{\underline{2^j}}$}\} \\
\end{array}
\]
and 
\[
\begin{array}{rcl}
\mathcal{L}_{\text{Perm}}&:=&\{(n,\Gamma);\ws n\in\N \ws, \Gamma \text{\ is\ a\ labelled\ directed\ acyclic\ graph\ evaluating\ $\text{Perm}_n$\ } \}\\
\end{array}
\]
belong to the complexity class BPP and hence to $P/poly$ (here $n\in\N$ is given in unary representation). 
\end{proposition}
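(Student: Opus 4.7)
Both statements assert that a circuit (or finite family of circuits) can be \emph{verified} to compute a specific polynomial in randomized polynomial time. Since circuit polynomial identity testing (PIT) for arithmetic circuits over $\Z$ lies in $\mathrm{coRP}\subseteq\mathrm{BPP}$ by Schwartz--Zippel (implemented modulo a random prime to keep bit complexity polynomial), and $\mathrm{BPP}\subseteq P/\mathrm{poly}$ by Adleman's theorem, it suffices to exhibit a BPP verifier in each case; the $P/\mathrm{poly}$ containment is then automatic.

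\textbf{Pochhammer case.} The decisive algebraic fact is the doubling identity
\[
T^{\underline{2^{j+1}}} \;=\; T^{\underline{2^j}}(T)\cdot T^{\underline{2^j}}(T-2^j),
\]
obtained by splitting $\prod_{0\leq k<2^{j+1}}(T-k)$ at $k=2^j$ and substituting $k\mapsto k+2^j$ in the upper half. The verifier first checks $\Gamma_0(T)\equiv T$ and then, for each $j=0,\ldots,n-1$, the univariate polynomial identity
\[
\Gamma_{j+1}(T)\;\equiv\;\Gamma_j(T)\cdot\Gamma_j(T-2^j).
\]
Both sides have degree at most $2^n$, so a single Schwartz--Zippel test in a random prime field $\mathbb{F}_p$ with $p$ of bit length $\gg n$ gives error at most $2^n/p$, which is exponentially small; a union bound over the $n$ rounds yields an overall BPP verifier, and all arithmetic carried out modulo $p$ costs $\mathrm{poly}(n,|\Gamma_j|)$ bit operations.

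\textbf{Permanent case.} Use the LFKN interactive proof for $\#P$ in a derandomized guise. Pick a random prime $p$ of polynomial bit length and a random $X_0\in\mathbb{F}_p^{n\times n}$, set $v:=\Gamma(X_0)\bmod p$, and run the LFKN sum--check protocol aiming to certify $\mathrm{Perm}_n(X_0)\equiv v\pmod{p}$. In round $k$ the protocol requires a univariate polynomial $h_k(Z)$ of degree $\mathrm{poly}(n)$ representing a partial sum of the standard arithmetization of $\mathrm{Perm}_n$ with the first $k-1$ variables pinned at previously drawn random values. If $\Gamma$ honestly computes $\mathrm{Perm}_n$, then each such $h_k$ can be reconstructed by the verifier alone: evaluate $\Gamma$ at $\mathrm{poly}(n)$ matrices prescribed by the protocol and interpolate. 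The verifier thus plays the role of both prover and checker. Correctness when $\Gamma=\mathrm{Perm}_n$ is immediate; conversely, if $\Gamma\neq\mathrm{Perm}_n$, then by Schwartz--Zippel applied to $\Gamma-\mathrm{Perm}_n$ the test value $v$ differs from $\mathrm{Perm}_n(X_0)\bmod p$ with overwhelming probability, and LFKN soundness rejects---or else one of the interpolated $h_k$'s disagrees with the genuine partial sum and an intermediate consistency check fails.

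\textbf{Principal obstacle.} The technical difficulty is uniform control of bit complexity: values produced by polynomial-size circuits over $\Z$ can reach doubly exponential magnitude, so all arithmetic must be done modulo a random prime $p$ of polynomial bit length, which simultaneously serves as the randomness source for Schwartz--Zippel. A subtler point in the permanent case is confirming that the round polynomial $h_k$ is genuinely reconstructible from $\mathrm{poly}(n)$ evaluations of $\Gamma$: this rests on the standard arithmetization of $\mathrm{Perm}_n$ producing round polynomials of degree $O(n)$, together with the observation that each partial evaluation required by the protocol is, under the assumption $\Gamma=\mathrm{Perm}_n$, literally a value of $\Gamma$ at a prescribed matrix---so no prover is needed beyond the circuit itself.
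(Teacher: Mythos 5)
Your Pochhammer argument matches the paper's in structure: recursively reduce the claim that $\Gamma_{j+1}$ evaluates $T^{\underline{2^{j+1}}}$ to the same claim for $\Gamma_j$ via the doubling identity $T^{\underline{2^{j+1}}}(T) = T^{\underline{2^j}}(T)\cdot T^{\underline{2^j}}(T-2^j)$, and certify each level by a polynomial identity test. (You state the shift correctly; the paper writes $T-2^{2^{n-1}}$, which is a typo for $T-2^{n-1}$.) The one substantive difference is the order of the containments: the paper establishes $P/\mathrm{poly}$ directly, using the explicit test points of \cite{CaGiHeMaPa03}, Corollary 2 together with the non-uniform integer-comparison test of \cite{HM97}, and then asserts that the BPP proof ``follows the same kind of argumentation''; you establish BPP first via Schwartz--Zippel modulo a random prime and then invoke Adleman's theorem. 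Both routes are sound and essentially interchangeable here.

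Your permanent argument goes further than the paper does (which simply refers to \cite{KI2004}, Section 3), but it contains a gap at the key step. You describe the round polynomial $h_k(Z)$ as ``a partial sum of the standard arithmetization of $\mathrm{Perm}_n$ with the first $k-1$ variables pinned'' --- i.e.\ the Boolean-cube sum-check applied to a Ryser-type arithmetization --- and then assert that $h_k$ can be reconstructed from $\mathrm{poly}(n)$ evaluations of $\Gamma$ at matrices. That reconstruction does not go through: a partial Ryser sum over a subcube of $\{0,1\}^n$ is not an evaluation of $\mathrm{Perm}_m$ at any matrix for any $m$, so access to $\Gamma$ gives the verifier no handle on it. What actually works, and what the LFKN permanent protocol and \cite{KI2004} use, is downward self-reducibility via cofactor expansion: derive circuits $\Gamma_m$ for $\mathrm{Perm}_m$, $m<n$, from $\Gamma$ by identity-padding, and take the round polynomial $D_k(Z)$ to be the degree-$(n-k)$ interpolation through the $\Gamma_{n-k}$-values at the first-row minors --- those \emph{are} honest $\Gamma$-evaluations. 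In fact, once the padded family $\Gamma_m$ is in hand you do not need the interactive protocol at all; the argument becomes exactly parallel to your Pochhammer one: PIT-check the base case $\Gamma_1(X_{11})\equiv X_{11}$ and the identities $\Gamma_m(X)\equiv\sum_{j}X_{1j}\,\Gamma_{m-1}(X_{\hat 1\hat j})$ for $2\le m\le n$, using the same Schwartz--Zippel-mod-$p$ machinery unchanged.
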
   

\begin{proof}
We show only that $\mathcal{L}_{\text{Poch}}$ belongs to the complexity class $P/poly$. The proof that $\mathcal{L}_{\text{Poch}}$ belongs to BPP follows the same kind of argumentation and will be omitted here. The case of the language $\mathcal{L}_{\text{Perm}}$ can be treated analogously and we shall not do it here (compare \cite{KI2004}, Section 3).

Let $n\in\N$ and let $\Gamma$ be a division--free labelled directed acyclic graph with input $T$ and a single output node. Let $\Gamma'$ be the division--free labelled directed acyclic graph which is given by the following construction:
\begin{enumerate}
	\item[-] choose a labelled acyclic graph $\mu_n$ of size $n+O(1)$ with input $T$ and with $T-2^{2^{n-1}}$ as single final result
	\item[-] take the union $\ol{\Gamma}$ of the circuits $\Gamma$ and $\Gamma*\mu_n$ and connect the two output nodes of $\ol{\Gamma}$ by a multiplication node which becomes then the single output node of the resulting circuit $\Gamma'$. 
\end{enumerate}
From the polynomial identity $T^{\underline{2^{n}}}=T^{\underline{2^{n-1}}}(T)\ws\cdot\ws T^{\underline{2^{n-1}}}(T-2^{2^{n-1}})$ one deduces easily that $\Gamma'$ computes the polynomial $T^{\underline{2^{n}}}$ if and only if $\Gamma$ computes the polynomial $T^{\underline{2^{n-1}}}$.

For $0\leq j\leq n$ let $\Gamma_j$ be a division--free labelled directed acyclic graph with input $T$ and a single output node.

Suppose that in the previous construction the circuit $\Gamma$ is realized by the labelled directed acyclic graph $\Gamma_{n-1}$. Then one sees easily that $(n,(\Gamma_j)_{0\leq j\leq n})$ belongs to $\mathcal{L}_{\text{Poch}}$ if and only if the following conditions are satisfied:
\begin{enumerate}
	\item[$(i)$] the circuit $\Gamma_0$ computes the polynomial $T$,
	\item[$(ii)$] the circuits $\Gamma'$ and $\Gamma_n$ compute the same polynomial, 
  \item[$(iii)$] $(n-1,(\Gamma_j)_{0\leq j\leq n-1})$ belongs to $\mathcal{L}_{\text{Poch}}$. 
\end{enumerate}
Therefore, if condition $(ii)$ can be checked in non--uniform polynomial time, the claimed statement follows.

For $0\leq j\leq n$ let $L_j$ and $L$ be the sizes of the labelled directed acyclic graphs $\Gamma_j$ and $\Gamma'$ and observe that $L=2L_{n-1}+n+O(1)$ holds.

Let $P_{n-1}$ and $P$ be the final results of the circuits $\Gamma_{n-1}$ and $\Gamma'$. From \cite{CaGiHeMaPa03}, Corollary 2 we deduce that there exist $m:=4(L+2)^2+2$ integers $\gamma_1,\dots,\gamma_m\in\Z$ of bit length at most $2(L+1)$ such that the condition $(ii)$ above is satisfied if and only if
\begin{enumerate}
	\item[$(iv)$] $P_{n-1}(\gamma_1)=P(\gamma_1),\dots,P_{n-1}(\gamma_m)=P(\gamma_m)$	 
\end{enumerate}
holds.

From \cite{HM97} we infer that condition $(iv)$ can be checked by a nondeterministic Turing machine with advise in (non--uniform) time $O(L^3)=O((L_{n-1}+n)^3)$.

Applying this argument recursively, we conclude that membership of $(n,(\Gamma_j)_{0\leq j\leq n})$ to $\mathcal{L}_{\text{Poch}}$ may be decided in non--uniform time $O(\sum_{0\leq j\leq n} (L_j+j)^3)$ and therefore in polynomial time in the input size. Hence the language $\mathcal{L}_{\text{Poch}}$ belongs to the complexity class $P/poly$. The proof of the stronger result, namely $\mathcal{L}_{\text{Poch}}\in \text{BPP}$, is similar.       
\end{proof}

\enter
Finally we observe that for $n\in\N$ the Pochhammer polynomial $T^{\ol{2^n}}$ is the associated elimination polynomial of the particular problem instance, given by $T:=0$, of the flat family of zero--dimensional elimination problems $G_1^{(n)}=0,\dots,G_n^{(n)}=0,H^{(n)}$, which we considered in Section \ref{ 6.5.1 flat}. 

From the point of view of effective elimination, the sequence of Pochhammer polynomials becomes discussed in \cite{HeintzMorgenstern93} (see also \cite{SS95}). From the point of view of factoring integers, Pochhammer polynomials are treated in \cite{Lipton94}.

\subsubsection{Final comment}

Let us mention that our approach to effective elimination theory, which led to Theorem \ref{def: main theorem model} and preliminary forms of it, was introduced in \cite{HMPW98} and extended in \cite{GH01} and \cite{CaGiHeMaPa03}.

The final outcome of our considerations in Sections \ref{ 6.5.1 flat} and \ref{Comments on complexity models for geometric elimination} is the following: neither mathematicians nor software engineers, nor a combination of them will ever produce a practically satisfactory, \emph{generalistic} software for elimination tasks in Algebraic Geometry. This is a job for \emph{hackers} which may find for \emph{particular} elimination problems \emph{specific} efficient solutions.




\newcommand{\etalchar}[1]{$^{#1}$}


\end{document}